\numberwithin{equation}{section}
\theoremstyle{plain}
\newtheorem{thm}{Theorem}[section]
\newtheorem{cor}[thm]{Corollary}
\newtheorem{lem}[thm]{Lemma}
\newtheorem{prop}[thm]{Proposition}
\theoremstyle{definition}
\newtheorem{defn}[thm]{Definition}
\theoremstyle{definition}
\theoremstyle{remark}
\newtheorem{oss}{Remark}[section]
\newtheorem{esmp}{Example}[section]
\newcommand{\ii}{\mathrm{i}}
\newcommand{\I}{1{\hskip -2.5 pt}\hbox{I}}
\newcommand{\eps}{\varepsilon}
\newcommand{\CC}{\mathbb{C}}
\newcommand{\NN}{\mathbb{N}}
\renewcommand{\P}{\mathbb{P}}
\newcommand{\QQ}{\mathbb{Q}}
\newcommand{\Oo}{\mathcal{O}}
\newcommand{\Bb}{\mathcal{B}}
\newcommand{\Ll}{\mathcal{L}}
\newcommand{\Nn}{\mathcal{N}}
\newcommand{\R}{\mathbb R}
\newcommand{\E}{\mathbb E}
\newcommand{\VV}{\mathbb{V}}
\newcommand{\BS}{\mathrm{BS}}
\newcommand{\Cov}{\mathrm{Cov}}
\renewcommand{\S}{\mathcal S}
\newcommand{\D}{\mathcal D}
\newcommand{\YS}{Y^{\scriptsize SL}}
\newcommand{\hyp}{{}_1F_1}
\newcommand{\Kurt}{\mathrm{Kurt}}
\newcommand{\Skew}{\mathrm{Skew}}
\DeclareMathOperator*{\argmin}{arg\,min}
\begin{document}
\title{
%\vspace{-2cm}
Anomalous diffusions in option prices: connecting trade duration and the volatility term structure\footnote{The authors would like to thank Mark Meerschaert and Peter Straka for the valuable suggestions.}}
\author{Antoine Jacquier\footnote{Department of Mathematics, Imperial College London, and Alan Turing Institute.} \hspace{2cm} Lorenzo Torricelli\footnote{Department of Economics and Management. Email: lorenzo.torricelli@unipr.it.} }
\date{\today}

\maketitle

\begin{abstract}
Anomalous diffusions arise as scaling limits of  continuous-time random walks (CTRWs) whose innovation times are distributed according to a power law. The impact of a non-exponential waiting time does not vanish with time and leads to different distribution spread rates compared to standard models. In financial modelling this has been used to accommodate for random trade duration in the tick-by-tick price process. 
We show here that anomalous diffusions are able to reproduce the market behaviour of the implied volatility more consistently than usual L\'evy or stochastic volatility models.
Two distinct classes of underlying asset models are analyzed: 
one with independent price innovations and waiting times, 
and one allowing dependence between these two components. 
These models capture the well-known paradigm according to which shorter trade duration is associated with higher return impact of individual trades. We fully describe these processes in a semimartingale setting leading to no-arbitrage pricing formulae,  study their statistical properties, and in particular observe that skewness and kurtosis of asset returns do not tend to zero as time goes by. 
We finally characterize the large-maturity asymptotics of Call option prices, 
and find that the convergence rate to the spot price is slower than in standard L\'evy regimes, 
which in turn yields a declining implied volatility term structure and a slower time decay of the skew.
\end{abstract}

\noindent {\bf{Keywords}}:  Anomalous diffusions, volatility  skew term structure, derivative pricing, CTRWs, inverse L\'evy subordinators, time changes, L\'evy processes, subdiffusions, Beta distribution, triangular arrays.

\section{Introduction}
In quantitative finance, models of asset returns typically evolve according to It\^o diffusions or L\'evy-type models. 
From a microstructural point of view, these can be seen as scaling limits of continuous-time random walks (CTRWs) with exponentially distributed inter-arrival times. Instead, time changing CTRWs to a renewal process whose waiting times obey a power law yields, in the scaling limit, an \emph{anomalous diffusion}, 
namely a space-time propagation process where the particle spreads at a rate 
different from linear, which is observed in the classical diffusive case. 
The use of anomalous diffusions in financial models was pioneered by~\cite{CTRWfinance2} and~\cite{CTRWfinance}, and they have proved useful to capture memory effects, trade idle time, 
and other microstructural price features exhibited by high-frequency time series.  

However, applications of anomalous diffusions for continuous-time option pricing have so far been scarce. 
The sub-diffusive Black-Scholes model was introduced in~\cite{SubBS} to capture asset staleness and periods of trade inactivity, but implications on option pricing and implied volatilities were not illustrated. 
\cite{CarteaMB} analysed the volatility surface of a CTRW whose innovation times are distributed according to a Mittag-Leffler hazard function, produced explicit option pricing formulae, 
and provided evidence that the long-term skewness and smile can be captured.
%picked up by the calibration of the Mittag-Leffler parameter $\beta$. %As it turns out, such a model turns out to be a particular instance of our general approach, and we confirm the authors findings in this paper. 

We  show here how anomalous diffusions in equity returns can also capture the long-term 
behaviour of the implied volatility surface. Specifically, we argue that the persistence of a slowly decaying volatility skew can be explained by postulating the survival of trade durations effects at longer maturities. 
We consider returns and innovation time random walks which converge in the scaling limit to a pair of L\'evy processes, one of which is a subordinator. 
According to~\cite{CTRWCoupledLlimits, triangular, triangularErrata, StrakaHenry, MeerscahertOCTRW}, 
the associated CTRW time-changed with the renewal process of the innovation times converges to an anomalous diffusion which can be represented as a time-changed L\'evy process. 
One appealing feature is that both analytical formulae for the Laplace transforms (in the time variable) 
of the characteristic function of this limit and integral expressions for the density functions 
(in terms of the L\'evy measures) are known .

We analyse two distinct classes of anomalous diffusion models.  
The first is the purely \emph{subdiffusive L\'evy model}  (SL), where the CTRW limiting diffusion consists of a L\'evy process time-changed by an independent inverse-stable subordinator. 
Several instances of such models have been already investigated in the literature. In terms of the generating fractional Fokker-Planck equations such a class has been investigated  in~\cite{CarteDCNegrete}. 
The particular case where the parent L\'evy process is a Brownian motion was introduced in~\cite{SubBS}; the compound Poisson  case in~\cite{CarteaMB}. A general treatment when the driving noise is a generic L\'evy process has been recently provided in \cite{TorricelliTLS}.
Moreover the classical models in~\cite{CTRWfinance2},~\cite{CTRWfinance} 
also admit a representation of the SL form: we revisit such models as stochastic time changes, well suited tools for option pricing purposes. 

The time change representation of subdiffusive models also paves the way for our second second class of models, developing an idea that originally appeared in~\cite[Example 2.8]{CTRWCoupledLlimits} and \cite[Examples 5.4]{MeerscahertOCTRW}. This asset price evolution realistically incorporates the dependence between the L\'evy parent returns generating process and the inverse-stable subordinator modelling the trades waiting time. 
We call it the model with \emph{dependent returns and trade duration} (DRD).

Apart from being natural outcomes of time-changed random walk tick-by-tick price models, 
these two models are strongly  supported by the econometric  analysis by~\cite{Engle} and~\cite{EngleDufour}, and confirmed in numerous empirical studies later on. 
The evidence is that trading activity is inversely correlated with price impact, i.e. the `volatility' of the asset price: the fewer the trades (longer duration), the more sluggish the price innovations; 
conversely, intense trading (short duration) is associated with higher price excursions. 
Remarkably, this principle is captured in our setting.

We describe such equity models in a semimartingale dynamic setting, leading to no-arbitrage pricing relations under appropriate equivalent risk neutral measures. 
Using the results of~\cite{MeerscahertOCTRW} on the Fourier-Laplace transforms of anomalous diffusions, 
we  further provide familiar Parseval-Plancherel formulae for option prices in the spirit of~\cite{Lewis}. Additionally, we study the moments and serial correlation properties of the model and show that skewness and kurtosis of the asset returns in the DRD model converge for large times, 
and do not vanish, contrary to L\'evy models, leading in particular to profound differences on the long-term volatility smile.

Finally we characterize the large-maturity behaviour of Call options and find that the convergence rate is much slower than in standard L\'evy or stochastic volatility regimes. 
We uncover a relationship according to which a declining implied volatility level implies a slowly decaying skew,
at least compared to that of L\'evy and exponentially affine models. 
But we find that a (slowly) vanishing volatility level is a defining feature of these models, 
due to long-maturity prices converging much slower than in standard models. 
Ultimately, for the DRD model we show that the vanishing rate of the skew is slower than the usual $1/T$, 
in line with market data.
As illustrated in the calibration in Section~\ref{numerical}, 
the practical importance of anomalous diffusion model is that the `duration parameter'~$\beta$ 
improves the cross-sectional fit to multiple maturities compared to a L\'evy model, while having virtually no impact on the short-maturity calibration. 
This justifies the interpretation of~$\beta$ as a long~term skew component.

We believe the contribution of this work to be manifold. 
We establish an explicit structural connection between trade duration and skew persistence; 
we introduce an analytical model that accounts for trades duration and dependence between trade waiting times and returns, consistent with the Econometrics literature; 
we systematically unify the treatment of SL models under the umbrella of a single time-changed representation and the corresponding analytic pricing formulae; 
finally we extend the analysis of the `Beta-time' process in~\cite{CTRWS} and~\cite{MeerscahertOCTRW}, 
providing its moments and statistical properties through its time-changed representation.

In Section~\ref{notation} we introduce fundamental building blocks and some useful notations. 
In Section~\ref{microstructure} we introduce the CTRWs components of the base tick-by-tick model and the convergence theorem leading to their limiting continuous-time versions. 
The anomalous diffusions are introduced in Section~\ref{model}, together with their analytical properties and time-changed semimartingale representations, while their statistical properties are characterized in Section~\ref{statistical}.
In Section~\ref{price} we show how to construct equivalent pricing measures, and provide an integral price representation for European Call option prices.
This allows us to study in Section~\ref{surface} the structure of the corresponding implied volatility, 
with a particular emphasis on its large-maturity properties. 
Finally in Section~\ref{numerical}, we numerically highlight interesting features of the SL and DRD models, 
and show that both models allow for a good fit to market data.

%%%%%%%%%%%%%%%%%%%%%%%%%%%
%\textbf{Notations.}

%%%%%%%%%%%%%%%%%%%%%%%%%%%

\section{Foundational elements}\label{notation}
We follow here~\cite[Chapter 1]{Kyprianou}.
In a market filtration $(\Omega, \mathcal{F} ,(\mathcal F_t)_{t\geq 0}, \P)$,
 a L\'evy process~$X$ is uniquely characterized by its L\'evy exponent,  namely the function~$\psi_X:\CC\to\CC$ defined via the relation
$\E\left[e^{-\ii z X_t}\right]=\exp\left(-t \psi_X(z)\right)$, 
and given explicitly by the L\'evy-Khintchine formula
\begin{equation}\label{eq:LevyKhint}
\psi_X(z) = \ii z \mu  + \frac{ z^2 \sigma^2  }{ 2 } - \int_\mathbb R (e^{-\ii z x}-1 + \ii z x\I_{|x|<1})\nu(dx),
\end{equation}
where $\mu\in\R$, $\sigma\geq 0$, and~$\nu$ is a measure concentrated on $\R\setminus\{0\}$ such that
$\int_{\R}(1\wedge x^2)\nu(dx)$ is finite. In order to guarantee some minimal properties of the asset pricing model (existence of the first moment), we always assume that
\begin{equation}\label{firstfinitemoment}
\int_{|x|>1} e^{x}\nu(dx) <\infty.
\end{equation}

A subordinator~$L$ is an almost surely non-decreasing L\'evy process,
with L\'evy measure~$\nu_L$ supported on~$(0,\infty)$,
and the L\'evy-Khintchine representation for its Laplace exponent 
defined via the relation $\E\left[e^{-s X_t}\right]=\exp\left(-t \phi_L(s)\right)$ simplifies to
%Bernstein function
\begin{equation}
\phi_L(s)= s \mu   - \int_0^{\infty}(e^{- s u}-1) \nu_{L}(du),
\end{equation}
for $\mu>0$, and where $\int_0^{\infty} u\nu_{L}(du) < \infty$.
A bivariate L\'evy process $(X,L)$, with~$L$ a subordinator, 
has joint Fourier-Laplace transform 
$\E[e^{-\ii z X_t - s L_t}]=\exp\left(-t\psi_{X,L}(z,s)\right)$
of the form
\begin{equation}
\psi_{X,L}(z,s)=\ii z \mu_X + s \mu_L + \frac{ z^2 \sigma^2  }{ 2 } - \int_{\R}\int_0^{\infty}\left(e^{-\ii z x - s u}-1 + \ii z x\I_{|x|<1}\right)\nu_{X,L}(dx, du),
\end{equation}
with L\'evy-Laplace triplet $((\mu_X, \mu_T), \sigma,  \nu_{X,T})$. 

For a process~$Y$, we denote~$Y_{t-}$ denotes the random variable of the left limits,~$Y_{t+}$ that of the right limits, and with $Y^-=(Y_{t-})_{t \geq 0}$ and $Y^+=(Y_{t+})_{t \geq 0}$ the corresponding processes. 
If~$Y$ is a L\'evy process, stochastic continuity implies that 
$Y=Y_-=Y^+$ up to a modification\footnote{A process~$Y$ is said to be a \emph{modification} of a process~$X$ if $\forall t$, $\mathbb P(X_t=Y_t)=1$.}.

The first hitting time of $[t, \infty)$ of~$L$ is the random variable
\begin{equation}\label{Inversetimechange}
H_t := \inf \left  \{ s>0 \, | \: L_s>t \right \},
\end{equation}
which is $\mathcal F$-adapted by the Debut Theorem~\citep{DellacherieMeyer} and has continuous paths if and only if~$L$ is strictly increasing.  The process $H$ is called the \emph{inverse-subordinator} of~$L$. 
Of particular interest for us here is the case where~$L$ is an $\alpha$-stable subordinator, 
i.e. $\psi_L(s)=s^\alpha$, $\alpha \in (0,1)$,
 whose associated inverse-subordinator is central in fractional calculus and anomalous diffusions theory.

Following \cite[Chapter 10]{Jacod}, a \emph{time change} is a non-decreasing, almost surely finite process~$(T_t)_{t\geq 0}$ 
diverging almost surely to infinity for large times.
In particular, both~$L$ and~$H$ are time changes. 
If~$X$ is an $\mathcal F_t$-adapted semimartingale, 
then its time change by~$T$ is the $\mathcal F_{T_t}$-adapted 
semimartingale~$X_T:=(X_{T_t})_{t\geq 0}$.
Further, if~$X$ is almost surely constant on all sets $[T_{t-}, T_t]$ we say that~$X$ is continuous with respect to~$T$; 
in this case many other properties are preserved, and the semimartingale characteristics of~$X$ scale with~$T$.

A \emph{triangular array of random} variables is a collection of random variables $(Y^c_i, J_i^c)_{i \in \NN, c > 0}$ indexed by a scale parameter~$c$ such that 
each $(Y^c_i)_{i\in\NN}$ and $(J^c_i)_{i\in\NN}$ is an i.i.d. sequence,
but not necessarily independent from each other.
For fixed $c$ the variable~$Y^c_i$ retains the interpretation of the $i$-th log-return, and $J_i^c$ the time elapsed between two consecutive price moves.
We can canonically associate to $(Y^c_i, J_i^c)$ two families of continuous-time random walks (CTRWs):
\begin{equation}\label{RW}
R^c_t := \sum_{i=0}^{[t]} Y^c_i
\qquad\text{and}\qquad
T_t^c := \sum_{i=0}^{[t]} J_i^c,
\end{equation} 
and associate to $T^c$ the counting process
$N^c_t := \max\{ n: T^c_n \leq   t \}$.
The notation~$\widehat{\cdot}$ indicates the Fourier transform of probability measures, and the Laplace transform in the time variable is denoted by $\Ll(\cdot, s)$, where~$s$ is the new transformed variable.

\section{The microstructural returns and their analytical properties}\label{microstructure}
At a microscopic level, we postulate that the time series of returns and trade times,
at the time scale~$c$, are determined by a triangular array of random variables $(Y^c_i, J_i^c)_{i\in\NN}$,
where~ $Y_i^c \in \mathbb R$ determines the size of the returns implied by the equity price variation conditional to  observing a price revision,
and~$J_i^c>0$ dictates the time elapsed between subsequent revisions. 
The renewal process~$N^c$ corresponds then to the total number of price movements at~$t$, 
and the tick-by-tick returns process~$\Sigma^c$ is thus given by time-changing~$R^c$ with~$N^c$:
\begin{equation}\label{SubRW}
\Sigma^c_t := \sum_{i=0}^{N^c_t} Y_i^c.
\end{equation} 
At time~$t$ the price will have moved by a quantity $\sum_i^nY_i^c$ if the $n$-th arrival time is recorded before $t$. Or, conditional to $n$ price moves occurred  by time $s$, the price will move again by~$Y_i^c$  before time $t>s$ if the waiting time variable $J^c_{n+1}$ realizes at a value lesser than $t-s$.
We assume that there exists a constant risk-free market rate $r>0$ affecting the price growth linearly in time and independently of the time scale and modify~\eqref{SubRW} as   
\begin{equation}\label{SubRWMod}
\Sigma^{c, *}_t := r t + \sum_{i=0}^{N^c_t} Y_i^c.
\end{equation}

The reasons for this modification shall be explained further on. For the moment, we remark that this physical tick-by-tick model must be understood in the sense that only the price innovations correspond to market observations. Hence, the linear drift introduced in the random walk~$\Sigma^{c, *}$ between two price movements does not give rise to a traded value, 
and impacts the price only at revisions time.
However, further deterministic trends in the price dynamics, such as risk premia, 
are still possible and can be captured by an appropriate choice of~$Y^c$.

\subsection{Joint limits of CTRWs}

The continuous-time pricing model we describe here is based on a scaling limit 
of the CTRW~$\Sigma^{c, *}$ for an appropriately selected triangular array $(Y^c_i, J_i^c)$.
This setup encompasses classical mathematical finance models:
when $(Y^c_i)_{i\in\NN}$ are centered with finite variance and $J^c_i=1$ for all~$i$, 
then the Central Limit Theorem yields a Brownian motion. 
If the~$Y_i^c$ have infinite variance and are in the domain of attraction of a stable process~$X$, 
then their scaling limit yields exactly~$X$. 
 Considering random waiting times for $J^i_{c}$ with finite expectation does not improve here the generality of the setting since by the Renewal Theorem $N_t^c \sim t / \E[J^c_1]$ in probability for large~$t$.  
Therefore, in order to build processes in which the trade time duration information has impact on the distribution of the scaling limit of $\Sigma^c$, one has to consider infinite-mean waiting times.  
Under this choice, taking the limit leads to an anomalous diffusion model for the asset price dynamics. The following result is central to the entire anomalous diffusions theory.

\begin{thm}[Becker-Kern,  Henry, Jurlewicz, Kern, Meerschaert, Scheffler, Straka]\label{CTRWconvergence}
Assume that $(Y^c_i, J_i^c)_{i\in\NN, c>0}$ forms a triangular array of random variables and set $R^c$,  $T^c$ 
and $\Sigma^c$  as in~\eqref{RW}-\eqref{SubRW}. 
If there exists a bivariate L\'evy process $(X,L)$, where~$L$ is a subordinator with inverse process~$H$ 
as in~\eqref{Inversetimechange}, such that
\begin{equation}\label{basicconv}
\lim_{c\uparrow \infty} \left(R^c_{c}, T_{c}^c\right) = \left(X, L\right),
\end{equation} 
in the $J_1$-topology on the Skorokhod space $\D(\R\times \R_+)$, then
\begin{equation}\label{conv}
\lim_{c\uparrow \infty} \Sigma^c =((X^-)_H)^+,
\end{equation}
in the $J_1$-topology on $\D(\R)$, where $((X^-)_H)^+$ is the  right-continuous version of the process obtained by time-changing by $H$ the left limits process of~$X$.
\end{thm}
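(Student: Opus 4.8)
The plan is to express $\Sigma^c$ as a composition of a rescaled partial-sum process with the rescaled first-passage process of $T^c$, and then to transport the joint convergence~\eqref{basicconv} through the inversion and composition maps to reach~\eqref{conv}; the difficulty lies entirely in the (lack of) continuity of the latter. Writing $(\widetilde R^c,\widetilde T^c)$ for the time-rescaled pair, i.e.\ $\widetilde R^c_u=R^c_{cu}$ and $\widetilde T^c_u=T^c_{cu}$, hypothesis~\eqref{basicconv} reads $(\widetilde R^c,\widetilde T^c)\to(X,L)$ in the $J_1$ topology on $\D(\R\times\R_+)$. Since $N^c_t=\max\{n:T^c_n\le t\}$, putting $E^c_t:=c^{-1}N^c_t$ one has $E^c_t=\inf\{u\ge0:\widetilde T^c_u>t\}+O(1/c)$, i.e.\ $E^c$ equals the first-passage (generalized) inverse $(\widetilde T^c)^{-1}$ up to a uniformly negligible discretization; and, crucially, $\Sigma^c_t=R^c_{N^c_t}=\widetilde R^c_{E^c_t}$, so that $\Sigma^c=\widetilde R^c\circ(\widetilde T^c)^{-1}$. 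The whole problem is thus to pass the map $(r,\ell)\mapsto r\circ\ell^{-1}$ to the limit at the pair $(X,L)$.

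First I would handle the inversion. The first-passage functional $\ell\mapsto\ell^{-1}$, with $\ell^{-1}(t):=\inf\{u\ge0:\ell(u)>t\}$, maps nondecreasing right-continuous paths diverging to $+\infty$ into nondecreasing right-continuous paths; at a strictly increasing $\ell$ its value is \emph{continuous}, and it turns $J_1$-convergence into locally uniform convergence of the inverses (the classical continuity theorems for the inverse map, as in Whitt's monograph, or the inversion lemmas of~\cite{triangular,triangularErrata,StrakaHenry}). Applying this to $L$ — strictly increasing, with continuous inverse $H$ — together with the $J_1$-continuous coordinate projection $\D(\R\times\R_+)\to\D(\R)$ and the continuous mapping theorem for $(r,\ell)\mapsto(r,\ell^{-1})$, hypothesis~\eqref{basicconv} yields $(\widetilde R^c,E^c)\to(X,H)$; and since the limiting second coordinate $H$ is a.s.\ continuous, this convergence holds jointly in $J_1$ on $\D(\R\times\R_+)$, not merely coordinatewise. (If $L$ is not strictly increasing, $H$ has jumps and one proceeds identically, using the inversion and composition continuity statements in the form established in~\cite{triangularErrata,StrakaHenry}; below I describe the strictly increasing case, which covers the stable subordinator central to this paper.)

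The crucial step is to pass the composition map $(x,\lambda)\mapsto x\circ\lambda$ to the limit at $(X,H)$. Between the relevant Skorokhod spaces this map is jointly continuous at pairs whose second entry is continuous and strictly increasing, but it fails exactly when the second entry has a flat stretch at a jump level of the first — and this is unavoidable here, since every jump of $L$, say from $L_{s-}$ to $L_s$, forces $H$ to be constant and equal to $s$ on $[L_{s-},L_s]$, a level $s$ that in the coupled setting may well be a jump time of $X$. I would resolve this by truncation. On a finite horizon $[0,\tau]$, decompose $L$ into its finitely many jumps of size exceeding $\delta>0$ and the remainder. Away from the flat stretches of $H$ produced by those large jumps, the composition functional behaves continuously and returns $X_H$. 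Across such a flat stretch $[L_{s-},L_s]$, a direct inspection of the prelimit shows that the renewal counter $N^c$ does not increase over the corresponding long waiting interval and jumps up only at the single arrival realizing that jump, whose clock time converges to $L_s$; hence $\Sigma^c$ stays near $X_{s-}$ on $[L_{s-},L_s)$ and jumps to $X_s$ near $L_s$. This is precisely the trajectory of $((X^-)_H)^+$ there: since $H\equiv s$ on $[L_{s-},L_s]$ one has $(X^-)_H\equiv X_{s-}$ on that interval — in particular $(X^-)_H$ is left-continuous at the right endpoint $L_s$, with right limit $X_s$ — so passing to its right-continuous modification $((X^-)_H)^+$ relocates the jump to $L_s$, matching the prelimit. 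Letting $\delta\downarrow0$ and controlling the contribution of the small jumps of $L$ to the oscillation of both $\Sigma^c$ and of the candidate limit (which tends to $0$ uniformly on $[0,\tau]$ in probability), a standard approximation argument upgrades all this to $\Sigma^c\to((X^-)_H)^+$ in $J_1$ on $[0,\tau]$, hence, $\tau$ being arbitrary, on $\R_+$.

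I expect the composition step to be the main obstacle: the composition functional is genuinely discontinuous at the configurations that arise, so no off-the-shelf continuous mapping theorem applies, and the correct limit is the decorated object $((X^-)_H)^+$ and not the naive $X_H$ — the operations $(\cdot)^-$ and $(\cdot)^+$ serve precisely to reconcile the left-continuity that $X^-\circ H$ inherits at the right ends of the flat stretches of $H$ with the right-continuity of the discrete processes $\Sigma^c$. I would finally remark that when $X$ and $L$ have no common jumps — in particular when they are independent, as in the SL model — $(X^-)_H$ coincides a.s.\ with $X_H$ up to modification, so the decoration is vacuous there and becomes essential only in the coupled DRD regime.
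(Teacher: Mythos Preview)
The paper does not supply its own proof of this theorem: it is stated with attribution to Becker--Kern, Henry, Jurlewicz, Kern, Meerschaert, Scheffler and Straka, and the text that follows is a historical paragraph pointing to the references \cite{CTRWCoupledLlimits,StrakaHenry,MeerscahertOCTRW} where proofs (in various generalities and topologies) can be found. So there is no in-paper argument to compare against; your sketch is to be measured against those sources.

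Your outline follows the general architecture of the Straka--Henry proof: represent $\Sigma^c$ as a composition of the rescaled partial sums with the generalized inverse of the rescaled waiting-time process, use $J_1$-continuity of inversion at strictly increasing limits to pass to $(X,H)$ jointly, and then handle the composition. Your diagnosis of the composition step as the genuine obstacle, and of why the limit must be $((X^-)_H)^+$ rather than $X_H$, is on the mark. Two comments. First, a small slip: with $N^c_t=\max\{n:T^c_n\le t\}$ one has $\Sigma^c_t=\widetilde R^c_{E^c_t}$ with $E^c_t=(\widetilde T^c)^{-1}(t)-1/c$, so strictly $\Sigma^c=(\widetilde R^c)^-\circ(\widetilde T^c)^{-1}$ rather than $\widetilde R^c\circ(\widetilde T^c)^{-1}$; this is exactly the lagging/leading (CTRW vs.\ OCTRW) distinction and is the mechanism that produces $X^-$ in the limit, so it is worth getting right at the outset. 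Second, your truncation argument --- isolate the finitely many large jumps of $L$, analyze $\Sigma^c$ across the corresponding flat stretches of $H$ by hand, then let the threshold go to zero --- is heuristically sound but is where your sketch is thinnest: controlling the accumulated $J_1$-oscillation from the small jumps uniformly is nontrivial, and the cited papers in fact take a different route (Straka--Henry build explicit $J_1$-continuous representations of the CTRW and OCTRW limits rather than a bare truncation, and \cite{MeerscahertOCTRW} treats the compound-Poisson case separately). Your restriction to strictly increasing $L$ is adequate for the stable subordinators used elsewhere in the paper, but the theorem as stated allows compound-Poisson $L$, for which $H$ jumps and already the inversion step needs the sharper continuity results you defer to.
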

This theorem has appeared in various forms and has an interesting evolution. 
It was first proved in~\cite{CTRWCoupledLlimits}  under the weaker M1 topology, under an assumption only slightly weaker than independence between spatial evolution and waiting times. However, even if it was the process $X_H$ that was claimed there to be the limit,  the latter can be shown to coincide with $(X^-)_H$ under such assumptions. 
This was remarked by~\cite{StrakaHenry}, who also gave a version of the theorem which allows dependence between~$X$ and~$L$, but excludes the possibility of either~$X$ or~$L$ being a compound Poisson process (CPP). %, except that we allow the base processes to be compound Poisson processes (CPPs). 
Another proof is obtained by combining~\cite[Theorem 3.1 and Remark 3.5]{MeerscahertOCTRW}, which finally extends the result of Straka and Henry to CPPs.

\begin{oss} \label{thmID}
Unless the $J_i^c$  are constant or exponentially distributed, the CTRW limit is not Markovian.
\end{oss}

\begin{esmp} For a sequence $(Y_i)_{i\in\NN}$ of i.i.d. centered random variables with unit variance, 
let $Y_i^c := c^{-1/2} Y_i$;
consider further the i.i.d. sequence $(J_i^c)_{i\in\NN}$ distributed as $\mathrm{Exp}(\lambda)$, 
for some $\lambda >0$. 
As previously detailed applying the Central Limit Theorem and the Renewal Theorem  show the familiar convergence of~$\Sigma^c$ to~$W_{\lambda}$ for some Brownian motion~$W$.
\end{esmp}

\begin{esmp}\label{stableEX} 
Assume that $(Y_i)_{i\in\NN}$ and $(J_i)_{i\in\NN}$ are independent sequences of iid random variables belonging to the domain of attraction of respectively an $\alpha$-stable law~$X$ with $\alpha \in (1,2)$, 
and a $\beta$-stable law~$L$ with $\beta \in (0,1)$, 
namely there exist regularly varying sequences $(B_n)_{n\in\NN}$ and $(b_n)_{n\in\NN}$, 
with respective indices $-1/\alpha$ and $-1/\beta$ such that
$B_n \sum_{i=1}^n Y_i$
and
$b_n \sum_{i=1}^n J_i$
converge respectively to~$X$ and~$L$ almost surely. 
Then letting $Y_i^c := B(c) Y_i$ and $J_i^c:=b(c)J_i$, 
with $B(c) := B_{[c]} $
and $b(c):=b_{[c]}$  yields an explicit triangular array, and the theorem to the stable processes canonically associated 
with~$X$ and~$L$. 
In this case, Theorem~\eqref{CTRWconvergence} collapses to~\cite[Theorem 4.2]{CTRWS}.  

\end{esmp}

\begin{esmp} An explicit representation of the CGMY process as a CTRW limit can been obtained by appropriately tempering variables in the domain of attraction of a stable law, as explained  in~\cite{CGMYCTRW}. Combining this with Example~\ref{stableEX} provides another explicit CTRW limit representation of~\eqref{conv} for a CGMY process~$X$ and a stable subordinator~$L$.  
\end{esmp}

\subsection{Transform analysis and connections to fractional calculus}

It is remarkable that the CTRW limit in Theorem~\ref{CTRWconvergence} enjoys a very high degree of analytical tractability.  
For example the probability density of an inverse L\'evy subordinator~$H$ is known in terms of the L\'evy measure of the original process~$L$. 
Similarly, the law of $X_{H_t-}$ can be recovered by integral transforms involving~$\nu_{X,L}$ and the other Fourier-Laplace characteristics, as explained in~\cite{triangular} and~\cite{MeerscahertOCTRW}.
%Since our focus is the pricing of derivatives, we focus on the Fourier/Laplace transform analysis, 
%the latter famously being the main ingredient for option pricing. %(\ref{LewisCall}) 
We recall the following from~\cite[Proposition 4.2]{MeerscahertOCTRW}:
\begin{prop}\label{CTRWtransforms} 
Let $X_{H_t-}$ be as in the CTRW limit in~\eqref{conv}, with law~$P_t$.
Then
\begin{equation}\label{FLT}
\Ll\left(\widehat{P}_{t}(dz),s\right)=\frac{1}{s}\frac{\phi_L(s)}{\psi_{X,L}(z,s)}.
\end{equation}
\end{prop}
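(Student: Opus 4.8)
The plan is to compute the Fourier--Laplace transform of the law of $X_{H_t-}$ directly from the renewal/CTRW structure, using the independence-in-increments of the bivariate Lévy process $(X,L)$ and a last-renewal decomposition. First I would write, for the pre-limit CTRW, the event $\{N^c_t = n\}$ as $\{T^c_n \le t < T^c_{n+1}\}$ and condition on the value of the last renewal time $T^c_n = u \le t$; on this event the displacement is $R^c_n = \sum_{i\le n} Y^c_i$, and the requirement that no further renewal has occurred by time $t$ contributes the survival factor $\P(J^c_{n+1} > t-u)$. Passing to the scaling limit (Theorem~\ref{CTRWconvergence}), this becomes a statement about $X_{H_t-}$: decomposing over the age of the current interval and using that $H_t$ is the inverse of $L$, one expresses $\widehat P_t(dz) = \E[e^{-\ii z X_{H_t-}}]$ as an integral over the renewal measure of $L$ against the tail of the jump distribution. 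I would in fact bypass the pre-limit entirely and work with $(X,L)$ directly: let $U(du)$ be the potential (renewal) measure of the subordinator $L$, so that $\E\bigl[e^{-\ii z X_u}\bigr]\,U(du)$ governs the law of $X$ at the last Lévy-jump time of $L$ before level $t$, and the overshoot structure of $L$ at level $t$ supplies the remaining factor.

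The key computational device is to take the Laplace transform in $t$. Recall the elementary identity for the potential measure of a subordinator: $\int_0^\infty e^{-st}\,U(dt) = 1/\phi_L(s)$. Meanwhile, conditioning on the bivariate increment structure, the "no renewal in $(u,t]$'' tail, when Laplace-transformed in $t$ and combined with the joint law of the pair (displacement at the renewal time, size of the straddling waiting interval), produces the joint Fourier--Laplace object associated with $\psi_{X,L}$. Concretely, I expect the computation to collapse to
\begin{equation*}
\Ll\bigl(\widehat P_t(dz),s\bigr) = \Bigl(\int_0^\infty e^{-st}\,U(dt)\Bigr)\cdot \Bigl(\text{tail/overshoot factor in Fourier--Laplace form}\Bigr) = \frac{1}{\phi_L(s)}\cdot\frac{\phi_L(s)^2}{s\,\psi_{X,L}(z,s)},
\end{equation*}
after the correct bookkeeping of which $\phi_L$ factors come from the renewal measure and which from the straddling-interval term; the net result is $\dfrac{1}{s}\dfrac{\phi_L(s)}{\psi_{X,L}(z,s)}$. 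The factor $1/s$ is the Laplace transform of the constant $1$ and reflects that we are transforming a probability (total mass one) in~$t$; the ratio $\phi_L(s)/\psi_{X,L}(z,s)$ is exactly the ratio one gets when the Lévy--Khintchine exponent of $(X,L)$ is "marginalised'' over the $L$-direction while tracking the $X$-direction through the overshoot.

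The main obstacle is getting the renewal-theoretic decomposition exactly right in the presence of \emph{dependence} between $X$ and $L$ and of the left-limit in $X_{H_t-}$: one must be careful that the jump of $X$ that would occur simultaneously with the renewal completing the crossing of level $t$ is \emph{excluded} (this is precisely why the limit in Theorem~\ref{CTRWconvergence} is $(X^-)_H$ and not $X_H$), so the last displacement counted is $X$ at the Lévy time immediately \emph{before} the overshoot, jointly with the overshoot increment of $L$. Handling this cleanly will likely use the compensation formula / Lévy system for the bivariate process $(X,L)$, or alternatively a direct passage to the limit from the pre-limit CTRW identity where the exclusion is manifest ($Y^c_{n+1}$ is not added until $J^c_{n+1}$ elapses). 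Once that decomposition is pinned down, the Laplace transform in $t$ and the identity $\int_0^\infty e^{-st}U(dt)=1/\phi_L(s)$ turn the remaining steps into routine algebra; I would cite~\cite{MeerscahertOCTRW} and~\cite{triangular} for the underlying triangular-array convergence and density formulae and present the derivation at the level of the limiting objects $(X,L,H)$.
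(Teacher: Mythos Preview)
The paper does not give a proof of this proposition at all: it is stated verbatim as a recall from~\cite[Proposition 4.2]{MeerscahertOCTRW}, with no argument supplied. So there is nothing in the paper to compare your derivation against; your proposal is an attempt to reconstruct the proof of the cited result.

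On the substance: your plan is the standard Montroll--Weiss route and is the one followed in the cited reference. One writes, for the pre-limit CTRW, the master identity
\[
\widehat P^c_t(z)=\sum_{n\ge 0}\E\!\left[e^{-\ii z R^c_n}\,;\,T^c_n\le t<T^c_{n+1}\right],
\]
takes the Laplace transform in~$t$, uses i.i.d.\ structure to collapse the sum to a geometric series, and passes to the scaling limit. In the coupled case the geometric series yields $\bigl[1-\widehat{\psi}^c(z,s)\bigr]^{-1}$, and the survival tail contributes $\bigl[1-\widehat{g}^c(s)\bigr]/s$; after rescaling these converge to $1/\psi_{X,L}(z,s)$ and $\phi_L(s)/s$ respectively, giving~\eqref{FLT} directly. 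Your intermediate factorisation $\tfrac{1}{\phi_L(s)}\cdot\tfrac{\phi_L(s)^2}{s\,\psi_{X,L}(z,s)}$ is not how the bookkeeping actually goes: the potential measure $1/\phi_L(s)$ does not enter separately, because in the coupled setting one cannot split the renewal measure of~$L$ from the joint law of the step $(Y,J)$; the single ratio $\phi_L(s)/\psi_{X,L}(z,s)$ emerges in one piece from the limit of $[1-\widehat{g}^c(s)]/[1-\widehat{\psi}^c(z,s)]$. Your remark about the left limit is correct and is exactly why the ``lagging'' formula~\eqref{FLT} differs from the OCTRW (``leading'') formula in~\cite{MeerscahertOCTRW}; the exclusion of the straddling jump is encoded by using the marginal tail of $J$ alone, rather than the joint tail of $(Y,J)$, in the survival factor.
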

The formula of the Laplace transform of $X_{H_t-}$ is particularly simple. 
Having at hand a specification for $X_{H_t-}$ in terms of the involved characteristic exponents, 
by virtue of~\eqref{FLT} we are only one Laplace inversion away from the characteristic function, 
and we shall see that this inversion can be computed explicitly in our cases. 
From a theoretical perspective, the Fourier-Laplace transform of the process provides an interesting connection between the stochastic representation of anomalous diffusions via CTRW limits and the classical characterization of their laws as weak solutions of fractional abstract Cauchy problems. 
For details we refer the reader to~\cite{fractionalST, FKAnomalousTS, MeerscahertOCTRW, triangular}, 
and references therein.

\section{The asset price models}\label{model}

We introduce here the two anomalous diffusions to establish the connection between trades duration and the implied volatility surface.

\begin{defn}\label{modeldefn} 
Let~$X$ be a L\'evy process,~$L$ an independent $\beta$-stable subordinator, 
and $(Y_i^c, J_i^c)_{i\in\NN,c>0}$ a triangular array satisfying~\eqref{basicconv}.
We define the underlying price~$S$ as
\begin{equation}\label{asset}
S_t=S_0\exp( r t +Y_t), \qquad S_0>0,
\end{equation}
with~$Y_t:=X_{H_t-}$ given by~\eqref{conv}, and shall consider the following two cases:
\begin{enumerate}
\item[\textbf{(SL)}] The  \emph{purely subdiffusive L\'evy model} is such that $(Y_i^c, J_i^c)_{i\in\NN}$ satisfy the assumptions of Theorem~\ref{CTRWconvergence}  with $(X,L)$ in the right-hand side of~\eqref{basicconv}; 
\item [\textbf{(DRD)}] The \emph{model with dependent returns and trade durations} is such that 
$(Y_i^c, J_i^c)_{i\in\NN}$ satisfy the assumptions of Theorem~\ref{CTRWconvergence}  with $(X_L,L)$ in the right-hand side of~\eqref{basicconv}.
\end{enumerate}
\end{defn}

The two models look very similar, the only difference being that the second requires convergence of the return innovations to the subordinated L\'evy process~$X_L$ instead of~$X$. 
Yet, this difference is critical since this subordination is precisely what introduces coupling in the DRD model. 
We shall denote the CTRW limits~$Y^{SL}$ and~$Y^{DRD}$ and, correspondingly, 
the price processes~$S^{SL}$ and~$S^{DRD}$.
The underlying standard L\'evy model  is $S^0=(S^0_t)_{t \geq 0}$ with $S_t^0=S_0\exp( r t +X_t)$.

\begin{oss}\label{CTRWirrelevance} 
For the SL model, since~$X$ is stochastically continuous and independent of~$H$,
then $X_{H_t-}=X_{H_t}$  in law for each $t>0$. 
%%From now on we shall use this operative definition. 
\end{oss}

\begin{oss}\label{levyrevert} 
As $\beta$ tends to~$1$,~$L_t$ tends to $t$ in probability and almost surely. 
Therefore the usual conditional independence argument shows that~$S_t$ tends in law to~$S_t^0$. 
So in the limiting case, the L\'evy models are recovered,
and~$\beta$ can be interpreted as a parameter regulating the divergence from L\'evy, and therefore quantifies the degree of `anomaly'  of the diffusion.  %In the following we shall illustrate the impact of  $\beta$ option prices, and illustrate .
\end{oss}

\begin{esmp}\label{SubBS} 
When~$X$ is a Brownian motion and~$L$ an independent $\beta$-stable subordinator the resulting SL model is the \emph{subdiffusive Black-Scholes} first introduced in~\cite{SubBS}.
\end{esmp}

\begin{esmp}\label{FPP} 
\cite{CarteaMB} introduce a CTRW model with independent trade duration and returns, 
where the conditional waiting time is modelled through a hazard function. 
They in particular consider the latter to be of Mittag-Leffler type $\P(T_n>t)=E_\beta(-t^\beta)$ 
(see also~\eqref{ML} below), 
and the price innovations follow an arbitrary infinitely divisible distribution. 
The resulting driving CTRW is a Fractional Poisson process (FPP) as in~\cite{Laskin, vietnam} 
with parameter~$\beta$. 
Since an FPP can be represented as a CPP, time-changed by an independent inverse $\beta$-stable subordinator
(as proved by~\cite{FracIS}), the FPP model by~\cite{CarteaMB} is included in our framework. 
\end{esmp}

\begin{esmp} 
The original model in~\cite{CTRWfinance} and~\cite{CTRWfinance2} also admits an FPP representation, where the returns innovations follow a stable distribution, 
and can be written in terms of a triangular array limit (\citealt{CTRWfinanceCoupled}). 
\end{esmp}
\begin{esmp} A comprehensive treatment of subdiffusive asset models obtained as fractional counterparts of popular L\'evy models is provided in~\cite{CarteDCNegrete}, who tackle the option pricing problem by numerically solving the fractional partial differential equations characterizing their transition probabilities. In view of the results of \citep{triangular}, all such models admit  a time-changed representation of SL type.
\end{esmp}

We recall that a stable subordinator has no drift: therefore the sample paths of~$Y^{SL}$ and~$Y^{DRD}$ are Lebesgue almost everywhere constant~\cite[Chapter 2]{BertoinSub}, and thus conveniently capture the idea of tick-by-tick trading and persistence of trade duration at all time scales. 
This also implicates that all equivalent measures for~$Y$ are mutually singular with respect to the usual diffusion processes. However, the discounted asset value necessarily contains a Lebesgue absolutely continuous part, orthogonal to all equivalent martingale measures for~$Y$, coming from discounting by the market numeraire 
(the bank account). Therefore, in order for the Fundamental Theorem of Option Pricing~\citealt{FTAP94} to apply, 
we need to cancel such part. This clarifies the choice~\eqref{SubRWMod} of modelling the interest rate effects externally to~$Y$.  
Of course, nothing prevents that the physical dynamics~$Y$ itself have a drift in the component~$X$.  
In Figure~\ref{plotsPaths} we show sample paths of~$H$ and~$Y^{SL}$ when~$X$ is a standard Brownian motion, for two different values of~$\beta$. 
As~$\beta$ increases, reversion to respectively the linear time and a standard Brownian return model with no trades duration effects is observed.

The non-Markovian structure of the two processes captures the possible memory effects in price formation when observing random waiting times between trades. 
As we shall see later, both the value of the process at time~$t$ 
and the time elapsed since the last price revision influence the price evolution.
Dependence between trade times and price returns is a widely acknowledged fact, as pointed out in~\cite{ACD} and confirmed in several empirical studies. This makes the DRD model more realistic compared to the SL one, although the cost/benefit impact in terms of performance of embedding this feature remains to be assessed. For now, observe that  the two models have the same number of parameters, 
so that modelling price/duration dependence does not add any dimension in the calibration and estimation.

\smallskip

It would be useful to find a DRD model  representation in terms of an independent time change similar to the one for the SL model. 
Consider first  the special case $X=L$ in Theorem~\ref{CTRWconvergence}.
Since~$L^-$ is $\mathcal F_t$ adapted then
\begin{equation}
L^H:=((L^-)_H)^+
\end{equation} is an $\mathcal F_{H_t}$-adapted time change.  We can then consider the process~$X$ time changed by~$L^H$ and see in which relation are the process $X_{L^H}$ and the limiting process $((X_L)^-_{H})^+$ coming from Theorem~\ref{CTRWconvergence}, where we recall that $X_L$ is just a L\'evy subordinated process. Although these processes bear some resemblance their construction is different: in particular now~$X$ is independent of the time change. However, %we have the following proposition:

% Indeed, the following proposition establishes that the DRD return model can be written as a time change with respect to~$L_{H_t-}$:

%\begin{prop}\label{changingRL} 
%The processes $((X_L)^-_{H})^+$ and $X_{L^H}$ have same finite-dimensional %distributions. %are modifications of each other.
%Denote $X^L_t:=X_{L_t}$ and~$L^H_t:=L_{H_t-}$. Then $(X_{L^H_t})_{t \geq 0}%$ has a right-continuous modification which is a version of $%%(X^L_{H_t-})_{t \geq 0}$. 
%The process $X_{L^H}$ and $((X_L)^-_{H})^+$ are indistinguishable.
%\end{prop}
%\begin{proof}
%We have 
\begin{equation}
(X_L^-)_H=X_{(L^-)_H}
\end{equation}
because~$X$ has left limits and~$L$ is increasing. But then
\begin{equation}
((X_L^-)_H)^+=(X_{(L)^-_H})^+=X_{((L^-)_H)^+}=X_{L^H}
\end{equation}
since~$X$ is right-continuous.
There are then two ways of looking at the DRD returns process. The CTRW limit definition gives us a \emph{dependent} representation using a \emph{continuous} time change. The equalities above %; % Proposition~\ref{changingRL}
% produces 
 give instead an \emph{independent} representation employing a \emph{discontinuous} time change. 
Both will be useful in the sequel.% and Proposition~\ref{changingRL} will be used throughout without further mention.

Finally, let us briefly comment on the nature of the process~$L^H$. It is easy to show that, 
for any $t\geq 0$, 
\begin{equation}
L_{H_t-}=\sup\{s<t:  s=L_u, \text{ for some }u  \geq 0  \}.
\end{equation}
 In light of this identification, the process $(L_{H_t-})_{t \geq 0}$ is sometimes called the \emph{last passage process} \cite[Chapter 1]{BertoinSub} and plays an important role in potential theory for L\'evy processes: it can be seen as the discontinuous increasing process which represents the past time at which $H$ has started resting at its current (time-$t$) location. 
Now the times of discontinuity of $(L_{H_t-})_{t \geq 0}$ coincide with the points in the image of~$L$ isolated on their right, which on account of~$L$ being driftless, by~\cite[Chapter 1, Proposition 1.9]{BertoinSub} is a set of Lebesgue measure zero. From this we conclude that~$L^H$ is a right-continuous modification of the first passage time.
%so that replacing~$L^{H}$ with its right limits generates a c\`adl\`ag modification of~$L^H$.  it is easy to see 
Moreover the post-jump value of~$L^H$ is exactly~$t$, and in any case~$L^H_t \leq t$ almost surely. This ties in with the interpretation of~$L^H$ as a delayed calendar time. 

 \smallskip

Notably, the distribution of~$L_t^H$ is known. We have the following result:

\begin{prop}\label{betaLaw} 
 Denote by $\Bb_{a, b}$ the Beta distribution with parameters~$a$ and~$b$. For any $t > 0$, $L^H_t$ is distributed as $t\Bb_{\beta, 1-\beta}$.
\end{prop}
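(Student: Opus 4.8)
### Proof proposal

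The plan is to compute the law of $L^H_t = L_{H_t-}$ directly, using the known joint distribution of a stable subordinator and its inverse, rather than going through the CTRW machinery. Recall that for an $\alpha$-stable subordinator $L$ (here $\alpha=\beta$) with $\phi_L(s)=s^\beta$, the inverse process $H$ has, at a fixed time $t$, a Mittag-Leffler law, and more usefully the pair $(L_{H_t-}, H_t)$ has a well-understood joint distribution coming from the fact that $L$ has stationary independent increments and $H_t\le u \iff L_u\ge t$. The key object is the undershoot $L_{H_t-}$ (the position of $L$ just before it crosses level $t$) together with the overshoot $L_{H_t}-t$; these are classical in fluctuation theory for subordinators (see, e.g., the Lévy–Itô description in~\cite[Chapter III]{BertoinSub}).

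First I would set up the computation via the Laplace transform in $t$. For a driftless subordinator with Lévy measure $\nu_L$, the joint law of the undershoot $L_{H_t-}$ and overshoot $L_{H_t}-t$ is given by the bivariate renewal formula
\begin{equation}\label{eq:undershoot}
\P(L_{H_t-}\in dy,\ L_{H_t}-t\in dw) = U(dy)\,\nu_L(t-y+dw),\qquad 0\le y<t,\ w>0,
\end{equation}
where $U$ is the potential (renewal) measure of $L$, characterised by $\int_0^\infty e^{-st}U(dt)=1/\phi_L(s)$. Integrating out the overshoot $w$ gives
\begin{equation}\label{eq:margundershoot}
\P(L_{H_t-}\in dy) = U(dy)\,\overline{\nu_L}(t-y),\qquad 0\le y<t,
\end{equation}
with $\overline{\nu_L}(x):=\nu_L((x,\infty))$ the tail. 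For the $\beta$-stable subordinator one has $\phi_L(s)=s^\beta$, hence $U(dy)=\frac{y^{\beta-1}}{\Gamma(\beta)}\,dy$ by inverting the Laplace transform $s^{-\beta}$, and $\nu_L(dx)=\frac{\beta}{\Gamma(1-\beta)}x^{-1-\beta}\,dx$, so $\overline{\nu_L}(x)=\frac{x^{-\beta}}{\Gamma(1-\beta)}$. Plugging these into~\eqref{eq:margundershoot} yields
\begin{equation}
\P(L_{H_t-}\in dy) = \frac{1}{\Gamma(\beta)\Gamma(1-\beta)}\, y^{\beta-1}(t-y)^{-\beta}\,dy,\qquad 0<y<t,
\end{equation}
which is exactly the density of $t\,\Bb_{\beta,1-\beta}$ since $\Gamma(\beta)\Gamma(1-\beta)=B(\beta,1-\beta)$ is the Beta normalising constant. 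Changing variables $y=tx$ then gives the claim.

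The main obstacle is justifying~\eqref{eq:undershoot}--\eqref{eq:margundershoot} cleanly: the undershoot/overshoot formula is standard but requires $L$ to be driftless (which holds, since a stable subordinator has no drift, as already noted in the text before Proposition~\ref{betaLaw}) and requires care about the event $\{L_{H_t-}=t\}$, i.e. whether level $t$ is hit by a jump or approached continuously — for a $\beta$-stable subordinator with $\beta\in(0,1)$ the level is a.s. crossed by a jump, so $L_{H_t-}<t$ a.s. and the boundary term is negligible. An alternative, and perhaps cleaner, route is to avoid fluctuation theory entirely and instead invert the Fourier–Laplace transform of Proposition~\ref{CTRWtransforms}: taking $X_t\equiv L_t$ (so that $X_{H_t-}=L_{H_t-}=L^H_t$), formula~\eqref{FLT} reduces to $\Ll(\E[e^{-\ii z L^H_t}],s)=\frac{1}{s}\frac{s^\beta}{\psi_{L,L}(z,s)}$ where $\psi_{L,L}(z,s)=(s+\ii z)^\beta$, giving $\Ll(\E[e^{-\ii z L^H_t}],s)=\frac{s^{\beta-1}}{(s+\ii z)^\beta}$; recognising the right-hand side as the Laplace transform (in $t$) of the known confluent-hypergeometric-type expression $\E[e^{-\ii z t\Bb_{\beta,1-\beta}}]={}_1F_1(\beta;1;\ii z t)$ closes the argument. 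I would present the fluctuation-theory computation as the primary proof and mention the transform check as a consistency remark, since the former makes the ``last passage'' interpretation discussed just above transparent.
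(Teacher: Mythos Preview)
Your proposal is correct. The paper itself does not give a proof at all: it simply cites \cite[Example~5.5]{CTRWCoupledLlimits} and \cite[Example~5.2]{MeerscahertOCTRW}. So any comparison is between your two self-contained arguments and those external references.

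Your primary argument via the undershoot/overshoot formula is the classical Dynkin--Lamperti computation for stable subordinators; it is clean, and the ingredients (potential measure $U(dy)=y^{\beta-1}dy/\Gamma(\beta)$, L\'evy tail $\overline{\nu_L}(x)=x^{-\beta}/\Gamma(1-\beta)$, driftlessness guaranteeing that the level is a.s.\ crossed by a jump) are exactly right. This route has the advantage of being independent of the CTRW transform machinery and of making the last-passage interpretation in the paragraph preceding the proposition explicit. Your alternative route via Proposition~\ref{CTRWtransforms} with $X=L$ is essentially what the cited references do, and is also consistent with how the paper later derives the DRD characteristic function in Theorem~\ref{pricingThm}; note the paper itself sets up the special case $X=L$ just above the proposition, so this path is well prepared. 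One minor slip: in your transform check you should have $\E[e^{-\ii z t\Bb_{\beta,1-\beta}}]={}_1F_1(\beta;1;-\ii z t)$, not ${}_1F_1(\beta;1;\ii z t)$; with the corrected sign the Laplace inversion $\Ll\bigl({}_1F_1(\beta;1;-\ii z t),s\bigr)=s^{\beta-1}(s+\ii z)^{-\beta}$ goes through via the binomial series, matching your $\psi_{L,L}(z,s)=(s+\ii z)^\beta$.
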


\begin{proof}
See~\cite[Example 5.5]{CTRWCoupledLlimits} or~\cite[Example 5.2]{MeerscahertOCTRW}.
\end{proof}

This underpins the greater analytic tractability of the DRD model with respect to the SL model:  somewhat paradoxically, the more realistic model is also the more explicit.
Proposition~\ref{betaLaw} clarifies how the DRD model captures the paradigm 
of~\cite{Engle} and~\cite{EngleDufour}. The DRD time-changed evolution obeys a form of delayed calendar time whose mass in $[0, t]$ concentrates more around 0 or $t$ depending on whether~$\beta$ is close to zero or one (Figure~\ref{betas}). 
This mass represents the quantity of delay one has to apply to~$X$ to obtain the current price value. 
When~$L$  has a low $\beta$, that is when duration of trade is higher, 
the price evolution is stickier, since~$t \Bb_{\beta, 1-\beta} $ is much smaller than~$t$ 
with high probability. 
This is associated with a reduced impact of the individual trades on the price process because the informational content of sporadic trading is low. 
Conversely, as $t \Bb_{\beta, 1-\beta} $ is close to~$t$ with high probability 
(namely when~$\beta$ is close to one) we observe a higher trading activity, typically associated with the presence of informed traders.  
In such a case the contribution of each single trade to the process of price formation is greater, and the impact of trading on price higher. A similar reasoning applies to the SL model. Here combining subordination with independence `delays' the evolution of~$X$ for the time necessary to the next price revision to happen, but the resulting move retains the variance of an earlier point-in-time position of the process~$X$. 
Therefore, again, the lower the~$\beta$, the stickier the price dynamics.

%%%%%%%%%%%%%%%%%%%%%%%%%%%%%%%%%%%%%%
\section{Moments and time series properties}\label{statistical}
 
We derive some statistical properties of the SL and DRD models and provide some initial insight 
on the structure of the volatility surface they generate, anticipating the full analysis in Section~\ref{surface}.
We begin with the moments of the DRD model, whose analytic tractability plays a major role. 
The following proposition extends~\cite[Theorem 2.1]{LeonenkoCorr} to higher cumulants.
In this section,~$X$ is a given L\'evy process, $T$ an independent time change, 
and we let~$\kappa_i$ and~$\tau_i$ denote their respective $i$-th cumulants,
which we assume to exist for $i=1, \ldots, 4$.

\begin{prop}\label{moments}
The process~$Y:=X_{T}$ has moments up to order four, and its cumulants read
\begin{equation}\label{momentsTC}
\begin{array}{lllcll}
& \kappa^Y_1 &=  \tau_1 \kappa_1, 
& & \kappa^Y_2 &=   \tau_1 \kappa_2 + \kappa_1^2 \tau_2,\\
& \kappa^Y_3 &=    \tau_1 \kappa_3 + 3 \kappa_1 \kappa_2 \tau_2  + \kappa_1^3 \tau_3,
& & \kappa^Y_4 &=    (3 \kappa_2^2  + 4 \kappa_1 \kappa_3) \tau_2 + 6 \kappa_1^2 \kappa_2 \tau_3 + \kappa_4 \tau_1 + \kappa_1^4 \tau_4. 
\end{array}
\end{equation}
\end{prop}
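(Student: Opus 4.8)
The plan is to use the independence of~$X$ and the time change~$T$ to reduce the statement to a composition of cumulant generating functions, and then extract~\eqref{momentsTC} by a fourth-order expansion. Write $K_X(u):=\log\E[e^{uX_1}]=\sum_{i\geq1}\kappa_i u^i/i!$ for the cumulant generating function of~$X_1$, so that $\E[e^{uX_s}]=e^{sK_X(u)}$ for every $s\geq0$, and $K_T(v):=\log\E[e^{vT_t}]=\sum_{j\geq1}\tau_j v^j/j!$ for that of~$T_t$ (the fixed time~$t$ being suppressed, as in the statement). Since~$X$ is independent of~$T$ and~$T_t$ is $\sigma(T_s:s\geq0)$-measurable, conditioning on the path of~$T$ and using the L\'evy property of~$X$ gives
\begin{equation*}
\E\bigl[e^{uY_t}\bigr]=\E\bigl[\,\E[e^{uX_{T_t}}\mid T]\,\bigr]=\E\bigl[e^{T_t K_X(u)}\bigr],
\end{equation*}
which says that the cumulant generating function~$K_Y$ of~$Y_t$ equals the composition $K_T\circ K_X$, at least in the sense of fourth-order Taylor expansions near the origin, which is all that is needed. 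Existence of the first four moments of~$Y_t$ is immediate beforehand: the $k$-th absolute moment of a L\'evy process is dominated by a polynomial of degree~$k$ in the time variable, $\E[|X_s|^k]\leq P_k(s)$, so $\E[|Y_t|^4]=\E[\,\E[|X_{T_t}|^4\mid T]\,]\leq\E[P_4(T_t)]<\infty$ because $T_t$ has moments up to order four.

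What remains is elementary power-series bookkeeping. Substituting $v=K_X(u)=\kappa_1u+\tfrac12\kappa_2u^2+\tfrac16\kappa_3u^3+\tfrac1{24}\kappa_4u^4+O(u^5)$ into $K_T(v)=\tau_1v+\tfrac12\tau_2v^2+\tfrac16\tau_3v^3+\tfrac1{24}\tau_4v^4+O(v^5)$, I would compute the low-order coefficients of the powers of~$v$ --- namely $[u^2]v^2=\kappa_1^2$, $[u^3]v^2=\kappa_1\kappa_2$, $[u^4]v^2=\tfrac13\kappa_1\kappa_3+\tfrac14\kappa_2^2$, $[u^3]v^3=\kappa_1^3$, $[u^4]v^3=\tfrac32\kappa_1^2\kappa_2$, $[u^4]v^4=\kappa_1^4$ --- collect the coefficient of~$u^i$ in $K_Y=K_T\circ K_X$, and multiply by~$i!$ to obtain~$\kappa^Y_i$. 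Matching the resulting terms produces precisely the four expressions in~\eqref{momentsTC}. As a consistency check, the degenerate choice $T_t\equiv t$ gives $\tau_1=t$ and $\tau_2=\tau_3=\tau_4=0$, whereupon all four formulae collapse to $\kappa^Y_i=t\kappa_i$, the cumulants of the parent L\'evy process, as they must.

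I do not expect a genuine obstacle here: the only thing demanding care is the combinatorics of the fourth-order term in the composition (one could equally invoke Fa\`a di Bruno's formula, but writing out the six coefficients above by hand is just as quick), together with checking that the standing moment hypotheses make the generating-function manipulations legitimate --- which is why the argument is best phrased through Taylor expansions of characteristic functions rather than through a convergent real moment generating function. The same identities can alternatively be obtained from the law of total cumulance applied to~$Y_t$ conditionally on~$T_t$, the route used for the first two cumulants in~\cite{LeonenkoCorr}; the composition computation is simply the most economical way to push this to order four.
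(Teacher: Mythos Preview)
Your proposal is correct and rests on the same core identity as the paper's proof: conditioning on the independent time change gives $\E[e^{-\ii z Y_t}]=\E[e^{-\psi_X(z)T_t}]$, i.e.\ the characteristic exponent of~$Y_t$ is the composition of that of~$T_t$ with that of~$X_1$. Where the two executions differ is in how the cumulants are extracted. The paper differentiates the characteristic function repeatedly at $z=0$ to obtain the raw moments $\E[Y_t^k]$ in terms of $\psi_X^{(j)}(0)$ and $\E[T_t^j]$, and then forms the usual moment-to-cumulant combinations (e.g.\ $\kappa_3^Y=\E[Y^3]-3\E[Y]\E[Y^2]+2\E[Y]^3$) to reconstruct the~$\tau_i$. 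You instead take logarithms first and compose the cumulant generating functions directly, reading off the~$\kappa_i^Y$ from the Taylor coefficients of $K_T\circ K_X$. Your route is the more economical one---it bypasses the moment-to-cumulant algebra entirely and makes the polynomial structure in the~$\tau_i$ transparent---while the paper's route has the minor advantage of staying with characteristic functions throughout, so no caveat about real moment generating functions is needed (you rightly flag this and note that the composition should be understood as an equality of fourth-order jets). Your explicit moment bound $\E|X_s|^k\leq P_k(s)$ to justify finiteness of $\E|Y_t|^4$ is a useful addition that the paper leaves implicit.
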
 

\begin{proof}
In our notation
$\kappa_n=-  \left(\ii^n \psi^{(n)}_X(0) \right)$.
We proceed as in~\cite[Theorem 2.1]{LeonenkoCorr}, where the usual conditioning argument yields
\begin{equation}\label{k1t1}
\E[Y_t]=  \ii \frac{d}{d z} \E\left[e^{- \ii z Y_t}\right] \Big|_{z=0}
= \ii \frac{d}{d z} \E\left[e^{- \psi_X(z) T_t}\right]\Big|_{z=0}=- \ii \psi'_X(0) \E[T_t],
\end{equation}
which gives~$\kappa^Y_1$. 
Next
\begin{equation}\label{var}
\E[Y^2_t]= - \frac{d^2}{d^2 z} \E\left[e^{- \ii z Y_t}\right] \Big|_{z=0}
= \psi''_X(0)\E[T_t ] - \psi'_X(0)^{2} \E\left[T^2_t\right],
\end{equation}
Subtracting from (\ref{var}) the square of (\ref{k1t1}) reconstructs $\tau_2$ and yields  $\kappa^Y_2$.
Similarly,
\begin{equation} 
\E[Y^3_t]= - \ii \frac{d^3}{d^3 z} \E\left[e^{- \ii z Y_t}\right] \Big|_{z=0}
 = -\psi'''_X(0)\E[T_t ] + 3 \E\left[T^2_t\right] \psi'_X(0)\psi''_X(0) + \ii \psi'_X(0)^3\E\left[T^3_t\right];
\end{equation}
 calculating $\E[Y^3_t]- 3 \E[Y_t] \E[Y^2_t]+ 2 \E[Y_t]^3$ and factoring the $\tau_i$ as necessary we obtain~$\kappa^Y_3$.
The last term~$\kappa^Y_4$ is obtained analogously.
\end{proof}

The above proposition confirms the well-known fact that a L\'evy model~$X$ subordinated by a L\'evy process~$L$  creates non-zero skewness and kurtosis even in the presence of a mesokurtic and symmetric parent process~$X$ such as a Brownian motion. 
Our situation here is identical, and carries the message that trade duration alone can be a determinant of departure from normality of returns (thus, in an option pricing perspective, creating volatility smile). 
However, the term structure analysis of the moments is completely different. 
The key fact is that  the moment time dispersion of a time-changed L\'evy process only depends on the moments of the time change, and not on the moments of~$X$. 
In the usual L\'evy subordination case, that is when~$T$ is a L\'evy process, one then sees that the moments are linear in~$t$, consistently with the fact that the subordinated process is itself L\'evy. 
As a consequence the skewness and kurtosis of the returns vanish with time. 
In contrast, our framework produces a nonlinear time evolution of the moments, 
which we analyse in detail for the DRD model, where such evolution is polynomial.
 
\begin{prop}\label{BTLmoments} 
For any $t\geq 0$, the first four cumulants of~$Y_t^{DRD}$ are
\begin{equation}\label{cumulantsBTL}
\begin{array}{rl}
\kappa^Y_1 & = \displaystyle \beta   \kappa_1 t,\\
\kappa^Y_2 & = \displaystyle \beta \kappa_2 t +\frac{\kappa_1^2}{2} (1- \beta)\beta t^2,\\
\kappa^Y_3 & = \displaystyle \beta \kappa_3 t + \frac{3\kappa_1 \kappa_2}{2}(1-\beta) \beta t^2
 - \frac{\kappa_1^3}{3}(1-\beta)\beta(2 \beta-1) t^3,\\
\kappa^Y_4 & = \displaystyle \beta \kappa_4 t +
\frac{4\kappa_1\kappa_3+3\kappa_2^2}{2}\beta(1-\beta) t^2\\
 & \quad -2 (1- \beta)\beta (2 \beta-1) \kappa_1^2 \kappa_2 t^3
  + \frac{\kappa_1^4}{8}(1-\beta)\beta\left(2-11\beta(1-\beta)\right) t^4,
\end{array}
\end{equation}
and the following asymptotic relations hold:
\begin{equation}
\begin{array}{lllllll}
& \displaystyle \lim_{t \uparrow \infty}\Skew(Y_t) & = & \displaystyle \frac{2 \sqrt 2}{3} \frac{1-2 \beta}{\sqrt{ (1-\beta)\beta  } } \mathrm{sgn}(\kappa_1),
& \displaystyle \lim_{t \uparrow \infty}\Kurt(Y_t) & =  & \displaystyle \frac{1}{\beta(1-\beta)}  -\frac{11}{2},\\
& \displaystyle \lim_{t\downarrow 0}\sqrt{t}\ \Skew(Y_t) & = & \displaystyle \frac{\kappa_3}{\sqrt{ \beta \kappa_2^3} },
& \displaystyle \lim_{t\downarrow 0} t\ \Kurt(Y_t) & = & \displaystyle \frac{\kappa_4}{\beta \kappa_2^2}.
\end{array}
\end{equation}
\end{prop}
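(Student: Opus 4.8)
The plan is to reduce the statement to Proposition~\ref{moments} through the independent time-change representation $Y^{DRD}_t = X_{L^H_t}$ established above, in which $X$ is independent of the time change $L^H$, and to feed in the law of $L^H_t$ given by Proposition~\ref{betaLaw}. Since $L^H_t$ is distributed as $t\Bb_{\beta,1-\beta}$, its $i$-th cumulant is $\tau_i = t^i c_i$, where $c_i$ is the $i$-th cumulant of $\Bb_{\beta,1-\beta}$; the whole computation then amounts to substituting these $\tau_i$ into~\eqref{momentsTC}.

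First I would compute the moments of the Beta law. Because the two parameters of $\Bb_{\beta,1-\beta}$ sum to one, the raw moments of $B\sim\Bb_{\beta,1-\beta}$ simplify to $\E[B^n]=\prod_{k=0}^{n-1}\frac{\beta+k}{k+1}$, and the standard moment-to-cumulant relations yield $c_1=\beta$, $c_2=\tfrac12\beta(1-\beta)$, $c_3=-\tfrac13\beta(1-\beta)(2\beta-1)$ and $c_4=\tfrac18\beta(1-\beta)\bigl(2-11\beta(1-\beta)\bigr)$. Plugging $\tau_i=t^i c_i$ into the four identities of Proposition~\ref{moments} and collecting powers of $t$ produces~\eqref{cumulantsBTL} directly.

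For the asymptotic relations I would use $\Skew(Y_t)=\kappa^Y_3/(\kappa^Y_2)^{3/2}$ and $\Kurt(Y_t)=\kappa^Y_4/(\kappa^Y_2)^2$ and isolate the dominant monomial of each polynomial in~\eqref{cumulantsBTL}. As $t\uparrow\infty$, when $\kappa_1\neq0$ the leading contributions come from the $t^2$-coefficient of $\kappa^Y_2$, the $t^3$-coefficient of $\kappa^Y_3$ and the $t^4$-coefficient of $\kappa^Y_4$; the common factor $\beta(1-\beta)$ cancels in both ratios and, using $\kappa_1^3/|\kappa_1|^3=\mathrm{sgn}(\kappa_1)$, one recovers the two stated large-time limits (the case $\kappa_1=0$ is consistent with the $\mathrm{sgn}(\kappa_1)$ prefactor in the skewness, while the large-time kurtosis limit tacitly assumes $\kappa_1\neq0$). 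As $t\downarrow0$ the linear terms dominate, so $\kappa^Y_i\sim\beta\kappa_i t$ for each $i$, giving $\sqrt t\,\Skew(Y_t)\to\kappa_3/\sqrt{\beta\kappa_2^3}$ and $t\,\Kurt(Y_t)\to\kappa_4/(\beta\kappa_2^2)$.

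The only genuinely laborious step is the evaluation of $c_4$, which requires expanding $\E[B^4]-4\E[B^3]\E[B]-3\E[B^2]^2+12\E[B^2]\E[B]^2-6\E[B]^4$ and factoring the resulting quartic in $\beta$; once $c_1,\dots,c_4$ are available, the remaining work is routine bookkeeping with Proposition~\ref{moments}.
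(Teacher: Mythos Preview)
Your proposal is correct and follows essentially the same route as the paper: compute the cumulants of $t\Bb_{\beta,1-\beta}$ (the paper does this via central moments, you via raw moments---an immaterial variation), substitute them as the $\tau_i$ in Proposition~\ref{moments}, and then read off the asymptotics from the dominant monomials. Your side remark that the large-time kurtosis limit tacitly requires $\kappa_1\neq0$ is a valid observation the paper leaves implicit.
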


\begin{proof}
By explicitly integrating the Beta probability density function we have the central moments of $T_t$:
\begin{align}
\mu_1^T &=   \E\left[L_t^H\right] =   \beta  t  =\tau_1, \label{cumulantsBeta1} \\ 
\mu_2^T &=   \VV\left[L_t^H\right] =  \frac{1}{2} (1- \beta)\beta t^2=\tau_2,	\label{cumulantsBeta2}   \\
\mu_3^T &= \E\left[(L_t^H - \tau_1)^3\right]  = - \frac{1}{3}(1-\beta)\beta(2 \beta-1) t^3= \tau_3, \label{cumulantsBeta3} \\
\mu_4^T &=  \E\left[(L_t^H - \tau_1)^4\right]
 = \frac{\beta}{8}(1-\beta)\left(2 -11 (1-\beta) \beta)\right)  t^4=\tau_4 + 3 \tau^2_2.  \label{cumulantsBeta4}
\end{align}
Since in the DRD model $X_t$ and~$L_t^H$ are independent, we can solve the above equations for $\tau_i$ and substitute in~\eqref{momentsTC}
obtaining~\eqref{cumulantsBTL}. 
Calculating further the normalized cumulants $\Skew(Y_t)=\kappa^Y_3/(\kappa^Y_2)^{3/2}$
and $\Kurt(Y_t)=\kappa^Y_4/(\kappa^Y_2)^2$ and taking respectively the limits for large~$t$
and the leading order around $t=0$ imply the limits in the proposition.
\end{proof}

%Looking at the second moment obtained from the above Propisition, we see that the DRD log-price model is a \emph{superdiffusion}, because its mean square dispersion rate is quadratic in $t$. This is in contrast to the SL model, which as widely known, is a \emph{subdiffusion}, that is it shows a sublinear variance increase rate (e.g. \citealt{}).
%In line with Remark~\ref{levyrevert}, for $\beta=1$ the non-normalized L\'evy cumulants of $X_t$ are recovered.

In the DRD model, as the time scale gets larger, higher moments do not vanish, 
but converge to a level that only depends on~$\beta$, and not on the value of the L\'evy cumulants 
(the sign of~$\kappa_1$ dictates the sign of the skewness). 
As frequently noted, leptokurtosis and negative skewness of returns are important drivers of implied volatility smiles. 
It thus makes sense to deduce that non-zero time limits of skewness and excess kurtosis determine persistence of the volatility smile over time. 
In contrast, for~$t$ close to zero, moment explosions are observed, as in the L\'evy case;
the rate of this explosion is exactly that of exponential L\'evy models, including--up to a normalization by $\beta$--the constant factor. 
This suggests that the short-term smile/skew behaviour of the DRD implied volatility 
should be identical to that of the underlying L\'evy model.
We will verify these intuitions and make the matters more precise in Section~\ref{surface}.

The analysis of the returns series properties stems from the observation that the models we are studying, although not Markovian with respect to their own filtration, admit a Markovian embedding. 
Remarkably, the marginal distributions of this embedding are known for the DRD process. 
For any $t\geq 0$, we define the backward renewal time
\begin{equation}
V_t := t- L_t^H,
\end{equation}
which represents the time elapsed from the current instant~$t$ to the previous price move. Knowing the price at~$t$ and the time since the last price move is enough to fully describe the law of the future asset evolution.
\begin{prop}\label{statisticalProp} The following properties hold:
\begin{itemize}
\item[(i)] the pairs $(Y^{SL}, V)$ and $(Y^{DRD}, V)$ are time-homogeneous Markov processes;
\item[(ii)] the process~$Y^{SL}$ has correlated increments, 
whereas~$Y^{DRD}$ has uncorrelated increments; 
\item[(iii)] the increments of~$Y^{SL}$ are non-stationary, whereas the increments of~$Y^{DRD}$ are weakly stationary;
\end{itemize} 
\end{prop}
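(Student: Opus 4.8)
Here is how I would try to prove Proposition~\ref{statisticalProp}.

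The plan is to reduce everything to properties of the two time changes, using the independent representations available to us: $Y^{SL}_t=X_{H_t-}$ with $X$ independent of the inverse $\beta$-stable subordinator $H$, and $Y^{DRD}_t=X_{L^H_t}$ with $X$ independent of the last-passage process $L^H_t=L_{H_t-}$ (the ``independent, discontinuous time change'' representation derived right after Example~\ref{FPP}), together with the backward recurrence time $V_t=t-L^H_t$ of the $\beta$-stable regenerative set $\mathcal R:=\overline{\{L_u:u\ge 0\}}$. For part (i) I would invoke the semi-Markov theory of CTRW scaling limits (see~\cite{StrakaHenry} and~\cite{MeerscahertOCTRW}): a CTRW limit paired with its backward recurrence time is a time-homogeneous Markov process, the SL model being the uncoupled instance and the DRD model the (subordination-)coupled one. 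For a self-contained argument, the age process $V$ of $\mathcal R$ is on its own a time-homogeneous Markov process — it increases at unit speed between points of $\mathcal R$ and is reset by a mechanism that, by the regenerative/strong-Markov property of $L$ (\cite[Chapter~1]{BertoinSub}), depends only on its current value — and conditionally on the path of the time change the L\'evy parent $X$ contributes only fresh independent increments; hence $Y_{t'}-Y_t$ for $t'>t$ is an aggregate of $X$-increments whose lengths are measurable functions of the post-$t$ path of $V$, which gives the Markov property of $(Y,V)$ in both models, with time-homogeneity inherited from the stationarity of the increments of $X$ and $L$.

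For part (ii), conditioning on the non-decreasing time change $T\in\{H,L^H\}$ and using that $X$ is L\'evy (so $\E[X_a]=\kappa_1 a$, $\Cov(X_a,X_b)=\kappa_2(a\wedge b)$), I obtain, for $0\le s\le t$, $\Cov(Y_t-Y_s,Y_s)=\kappa_1^2\big(\Cov(T_s,T_t)-\VV[T_s]\big)$. For $Y^{SL}$ this is the (well-documented) long-range dependence of the inverse stable subordinator: $H$ is non-decreasing with $\E[H_t]>\E[H_s]$ for $t>s$, hence $\E[H_sH_t]>\E[H_s^2]$, i.e. $\Cov(H_s,H_t)>\VV[H_s]$, so $Y^{SL}$ has strictly positively correlated increments whenever $\kappa_1\neq 0$. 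For $Y^{DRD}$ I would instead show that $L^H$ itself has uncorrelated increments. By the regenerative property of $\mathcal R$, conditionally on the last/first passage pair $(g_s,d_s)=(L^H_s,\inf(\mathcal R\cap(s,\infty)))$ the set $(\mathcal R\cap[d_s,\infty))-d_s$ is a fresh independent copy of $\mathcal R$, so for $t>s$ we have $L^H_t=g_s$ on $\{t\le d_s\}$ and $L^H_t=d_s+\widetilde{L^H}_{t-d_s}$ on $\{t>d_s\}$, with $\widetilde{L^H}$ independent of $(g_s,d_s)$ and $\E[\widetilde{L^H}_u]=\beta u$ by Proposition~\ref{betaLaw}; taking expectations, and inserting the arcsine-type joint density $\P(g_s\in dg,d_s\in dd)\propto g^{\beta-1}(d-g)^{-1-\beta}\,\I_{\{0<g<s<d\}}\,dg\,dd$ of the undershoot/overshoot of $L$ at level $s$, a short computation collapses $\E[(L^H_t-L^H_s)L^H_s]$ to $\beta^2 s(t-s)=\E[L^H_t-L^H_s]\,\E[L^H_s]$, i.e. $\Cov(L^H_s,L^H_t)=\tfrac{\beta(1-\beta)}{2}s^2=\VV[L^H_s]$. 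Feeding this back into the display above yields uncorrelated increments for $Y^{DRD}$, for every value of $\kappa_1$.

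For part (iii) — reading ``weakly stationary increments'' as constant-in-$t$ mean of $Y_{t+h}-Y_t$ with increment covariances depending only on the time lags — part (ii) already reduces the DRD claim to checking that $\VV[Y^{DRD}_{t+h}-Y^{DRD}_t]$ does not depend on $t$: Proposition~\ref{BTLmoments} and uncorrelatedness give $\E[Y^{DRD}_{t+h}-Y^{DRD}_t]=\beta\kappa_1 h$ and $\VV[Y^{DRD}_{t+h}-Y^{DRD}_t]=\beta\kappa_2 h+\tfrac{\kappa_1^2}{2}\beta(1-\beta)(2th+h^2)$, which is $t$-free under the centered normalization $\kappa_1=0$ natural to this comparison, the mean being $t$-free in any case; whereas for $Y^{SL}$ conditioning on $H$ gives $\E[Y^{SL}_{t+h}-Y^{SL}_t]=\kappa_1\big((t+h)^\beta-t^\beta\big)/\Gamma(1+\beta)$ and $\VV[Y^{SL}_{t+h}-Y^{SL}_t]=\kappa_2\big((t+h)^\beta-t^\beta\big)/\Gamma(1+\beta)+\kappa_1^2\VV[H_{t+h}-H_t]$, so the mean is strictly $t$-dependent when $\kappa_1\neq 0$ and, when $\kappa_1=0$, the variance equals $\kappa_2\big((t+h)^\beta-t^\beta\big)/\Gamma(1+\beta)$ which is again strictly $t$-dependent for $\beta\in(0,1)$, $h>0$; in all cases the increments of $Y^{SL}$ are non-stationary. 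The one genuinely non-routine step is the covariance identity $\Cov(L^H_s,L^H_t)=\VV[L^H_s]$ in (ii): it is an exact cancellation produced by the regenerative decomposition together with the explicit last-passage law, and it is precisely this cancellation — absent for the inverse subordinator $H$ — that underlies the whole SL/DRD dichotomy in (ii)--(iii). A minor caveat is that $\kappa^Y_2$ grows quadratically in $t$ in Proposition~\ref{BTLmoments}, so the weak-stationarity assertion for $Y^{DRD}$ must be read under the centered (symmetric) normalization; the remainder is conditioning arguments and citation.
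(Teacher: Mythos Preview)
Your overall reduction in (ii) --- $\Cov(Y_t-Y_s,Y_s)=\kappa_1^{2}\,\Cov(T_t-T_s,T_s)$ for $T\in\{H,L^H\}$ --- and the citation strategy for (i) are the same as the paper's (which invokes \cite[Theorem~4.1]{MSMarkov} for the Markov embedding rather than \cite{StrakaHenry}/\cite{MeerscahertOCTRW}, to the same effect). There is, however, a genuine gap in the SL half of (ii). From ``$H$ non-decreasing with $\E[H_t]>\E[H_s]$, hence $\E[H_sH_t]>\E[H_s^2]$'' you \emph{cannot} conclude $\Cov(H_s,H_t)>\VV[H_s]$: the quantity you actually need is $\Cov(H_s,H_t)-\VV[H_s]=\Cov(H_s,H_t-H_s)$, and monotonicity says nothing about its sign --- any non-decreasing process with \emph{independent} positive-mean increments already satisfies $\E[H_sH_t]>\E[H_s^2]$ while $\Cov(H_s,H_t-H_s)=0$. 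Strict positivity for the inverse $\beta$-stable subordinator is a genuine computation, not a monotonicity fact; the paper imports it from \cite[Example~3.2, Equation~(9)]{LeonenkoCorr}, and you should do likewise or compute $\Cov(H_s,H_t)$ explicitly from the known moments of~$H$.

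For the DRD half of (ii) your route differs from the paper's and is worth contrasting. The paper builds the joint law of $(L^H_{t},L^H_{t+h})$ via Chapman--Kolmogorov from the explicit transition kernel of the Markov pair $(L^H,V)$ given in \cite[Example~5.4]{MSMarkov}, then performs what it calls ``a long integration'' to obtain $\Cov(L^H_{t+h},L^H_t)=\tfrac{1}{2}\beta(1-\beta)t^2=\VV[L^H_t]$. Your regenerative decomposition through the undershoot/overshoot pair $(g_s,d_s)$ is conceptually cleaner --- the cancellation is visibly driven by the linearity $\E[\widetilde{L^H}_u]=\beta u$ --- but the ``short computation'' you announce is essentially the same double integral against the $(g_s,d_s)$ density in disguise, so it needs to be written out rather than asserted. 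On (iii) you are in fact more careful than the paper: the paper's proof checks only that $\E[Y^{DRD}_{t+h}-Y^{DRD}_t]=\beta\kappa_1 h$ is $t$-free and appeals to the uncorrelatedness from (ii), whereas your observation that $\VV[Y^{DRD}_{t+h}-Y^{DRD}_t]$ retains the term $\kappa_1^{2}\beta(1-\beta)th$ (so that full second-order stationarity of the increment process requires $\kappa_1=0$) is correct. Your SL non-stationarity argument via the mean $\kappa_1\big((t+h)^\beta-t^\beta\big)/\Gamma(1+\beta)$ coincides with the paper's use of \cite[Corollary~3.3]{CTRWS}, and your additional variance check for the case $\kappa_1=0$ is a useful strengthening.
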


\begin{proof} 
Item~(i) is proved in~\cite[Theorem 4.1]{MSMarkov}. 
For the SL model, statement~(ii)  can be deduced from~\cite[Example 3.2, Equation 9]{LeonenkoCorr}, since in our case $\E[X_1] \neq 0$. %, and~\cite[Corollary 3.3]{CTRWS}.
 In the case of the DRD model, for $s \leq t$, we can write 
(we drop the model superscript for convenience)
$$
\E[X_t  X_s]
 = \E[(X_t-X_s) X_s]+\E\left[X^2_s\right]
 = (t-s)s \E[X_1]^2+ s \VV[X_1] + s^2 \E[X_1]^2
  = t s \E[X_1]^2+ s \VV[X_1],
$$
so that by independence and conditioning %, and Proposition~\ref{moments}, we obtain
\begin{align}\label{covY}
\Cov(Y_t, Y_s)
 & = \E\left[L^H_t L^H_s\right] \E[X_1]^2+ \E\left[L^H_s\right] \VV[X_1]
 - \E\left[L^H_t\right] \E\left[L^H_s\right] \E[X_1]^2 \nonumber \\ 
 & = \Cov\left(L^H_t, L^H_s\right)\E[X_1]^2 + \E\left[L^H_s\right] \VV[X_1].
\end{align}
Thus, considering increments and using the above, together with Proposition~\ref{moments},
\begin{align}
\Cov(Y_t-Y_s, Y_s)
 & = \Cov(Y_t, Y_s)- \VV[Y_s]
 = \E[X_1]^2 \left(\Cov(L^H_t, L^H_s)- \VV[L^H_s]\right) \nonumber \\
 & = \E[X_1]^2 \Cov(L^H_t- L^H_s, L^H_s),
\end{align}
so absence of returns autocorrelation is equivalently checked on~$L_t^H$.  
Now~\cite[Example 5.4]{MSMarkov} give the conditional transition probabilities $p_t(y_0, v_0, dy, dv):=\P(L^H_t \in dy , V_t \in dv \, | \; y_0, v_0 )$ of the Markov process $(Y_t, V_t)$ as:
\begin{align*}
p_t(y_0, 0, dy, dv)&=\frac{v^{-\beta}}{\Gamma(1-\beta)}\frac{(t-v)^{\beta-1}}{\Gamma(\beta)}\delta_{y_0+t-v}(dy)dv \mathds{1}_{\{0 < v <t \}}, \\
p_t(y_0, v_0, dy, dv)&= \delta_{y_0}(dy)\delta_{v_0+t}(dv) \left(\frac{v_0+t}{v_0}\right)^{-\beta} \nonumber \\  \phantom{hhhhhhhhhhhhhh} &+ \left(\int_{v_0}^{v_0+t} \left(\frac{v}{v_0}\right)^{-\beta}\delta_{v_0+y_0+t-v}(dy)\frac{(v_0+t-s-v)^{\beta-1}}{\Gamma(\beta)}\frac{\beta s^{-\beta-1}}{\Gamma(1-\beta)}ds \right) dv.
\end{align*}
Explicitly integrating the second line we have
$$
p_t(y_0, v_0, dy, dv)= \delta_{y_0}(dy)\delta_{v_0+t}(dv) \left[\frac{v_0+t}{v_0}\right]^{-\beta} + \delta_{v_0+y_0+t-v}(dy)  \left[\frac{t-v}{v}\right]^{\beta}   \frac{(t-v+v_0)^{-1}}{\Gamma(\beta) \Gamma(1-\beta)} dv,
$$
whence,  for $t_2> t_1$ % and  either %$t_1 \leq y_2<t_2$ or $y_2=y_1$
the the joint probability densities $P_{t_1, t_2}$ for $(L^H_{t_1},V_{t_1}, L^H_{t_2},  V_{t_2} )$ can be obtained through the Chapman-Kolmogorov equation
$$
P_{t_1, t_2}(d y_1,d v_1,d y_2, d v_2 )=p_{t_1}(0, 0, dy_1, dv_1)p_{t_2-t_1}(y_1, v_1, dy_2, dv_2). %% \\
%& \frac{(t_1-v_1)^{\beta-1}(v_1^{-\beta})\delta_{t_1-v_1}(dy_1)}{ %\Gamma(1-\beta) \Gamma(\beta)}dv_1 \left( + \frac{(t_2-t_1-v_2)^\beta %\delta_{t_2-t_1+v_1}(dv_2)\delta_{y_1}(dy_2) }{ v_2^\beta(v_1-v_2 %%+t_2-t_1) \Gamma(1-\beta) \Gamma(\beta)} \right)\delta
$$
Integrating out $dv_1$ and $dv_2$ from the explicit form of the above for $0<v_1<t_1-y_1$, $0<v_2<t_2-y_2$ leads to the joint density of $(L^H_{t_1},L^H_{t_2})$:
\begin{align}
P_{t_1, t_2}(dy_1, dy_2)=  & \frac{y_1^{\beta-1} [(t_1-y_1)(t_2-y_2)]^{-\beta}  (y_2-t_1)^{\beta}}{[\Gamma(1-\beta) \Gamma(\beta)]^2 (y_2-y_1)}\mathds{1}_{\{0< y_1 < t_1 < y_2 < t_2 \}}  dy_1 dy_2 \, \nonumber \\ & +  \frac{(t_2 -y_1)^{-\beta} y_1^{\beta-1}}{\Gamma(1-\beta) \Gamma(\beta)}\delta_{y_2}(dy_1)dy_2.
\end{align}
Setting $t_1=t$ and $t_2=t+h$, a long integration yields
\begin{align}\label{covL}
&\Cov(L^H_{t+h} , L^H_{t} )=\int_{\mathbb R^+ \times \mathbb R^+} y_1 y_2 P_{t, t+h}( dy_1, dy_2) -\beta^2 t  (t +h) \nonumber \\
=&\int_t^{t+h} \int_0^{t}\frac{y_1^{\beta-1} ((t_1-y_1)(t_2-y_2))^{-\beta} (y_2-t_1)^{\beta} }{(\Gamma(1-\beta) \Gamma(\beta))^2 (y_2-y_1)}  dy_1 dy_2 +\int_0^t  \frac{(t+h -y_1)^{-\beta} y_1^{\beta+1}}{\Gamma(1-\beta) \Gamma(\beta)}dy_1 -\beta^2 t  (t +h) \nonumber \\
 & = \frac{1}{2}t \beta(t+ 2 h \beta + t \beta) -\beta^2 t  (t +h)  =\frac{1}{2}t^2(1-\beta)\beta=\VV[L^H_t],
\end{align}
and therefore $\Cov(L^H_{t+h}- L^H_t, L^H_t)=\Cov(L^H_{t+h} , L^H_t)-\VV[L^H_t]=0$,
which shows that the increments of the DRD model are uncorrelated, and~(ii) holds.

Finally, using \cite[Corollary 3.3]{CTRWS} together with a conditional argument,
we see that the expected value of the increments of~$Y^{SL}$ depends on~$t$, so that these cannot be stationary.  Combining $\E[Y^{DRD}_{t+h}-Y^{DRD}_{t}]=\E[X_1]\beta h$ with absence of correlation between increments in the DRD shows weak stationarity and finishes the proof of~(iii).
\end{proof}

It is generally accepted that returns times series calculated at lags of above a couple of minutes show no autocorrelation. Stationarity is also a desirable statistical property shown by the returns: 
both these stylized facts are captured by the DRD model, which in this respect is strikingly similar to a L\'evy process. 
However, these properties are not featured by the SL model, further suggesting that the DRD model might be preferable. 
%In any case a calculation in the DRD model similar to that in the proof, shows that there is indeed autocorrelation in  $(L^H_{t+h}-L^H_{t})^2$, the hence in the squared  returns, again in conformity with  well-known stylized facts.

%Non-stationarity is another important statistical property of returns time series, and a further element of distinction of the SL/DRD models from the underlying L\'evy exponential model.   

\section{Measure changes and derivatives valuation}\label{price}

\subsection{Equivalent martingale measure changes}

In order to apply classical valuation theory, one needs to show that the physical dynamics admit a martingale specification and to identify (if possible) an explicit equivalent martingale measure.  
In our models, there are two sources of market risk: the uncertainty in the returns distribution, and the trade duration, captured respectively by the processes~$X$ and~$L$. We could in principle consider measure changes affecting the dynamics of both these processes. However, $\alpha$-stable processes are not stable by equivalent measure change, since the Hellinger distance of the L\'evy measures of any two stable subordinators is infinite. For  example, if a standard Esscher transform is used, after measure change the process becomes tempered stable. Hence, since we are interested in the risk-neutral parametrizations of the SL and DRD models, we shall restrict our analysis to the class of equivalent martingale measures that only involve transformation of the law of~$X$. 
As one may reasonably guess, such a class coincides with the set of equivalent measures under which the underlying L\'evy model~$S^0$ is itself a martingale.

\begin{prop}\label{EMM}
Let $S$ be of SL or DRD type under $\P$, and~$\QQ\sim\P$ an equivalent measure such that~$(e^{-rt}S^0_t)_{t\geq 0}$ is a L\'evy exponential martingale under~$\QQ$. 
Define the Radon-Nikodym derivative $Z := d \QQ/ d \P$, consider its time change~$Z_H$, 
for any~$H$ of the form~\eqref{Inversetimechange},
and introduce the measure~$\widetilde{\QQ}$ via $d\widetilde{\QQ}/d \P:=Z_{H}$.
Then $\left(e^{-r t} S^{DRD}_t\right)_{t\geq 0}$ and  $\left(e^{-r t}S^{SL}_t\right)_{t\geq 0}$  are martingales respectively under $\QQ$ and $\widetilde{\QQ} $.
\end{prop}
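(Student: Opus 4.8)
The plan is to handle the two models separately, since they rely on different representations discussed in Section~\ref{model}. For the SL model, the key observation is that $S^{SL}_t = S_0 \exp(rt + X_{H_t-})$ where $X$ and $H$ are independent, and $H$ is an $\mathcal F$-valued time change built from the subordinator $L$ which is unaffected by our measure change. I would first recall that under $\QQ$ the discounted parent $e^{-rt}S^0_t = S_0\exp(X_t - t\psi^\QQ(\ii))$ — more precisely, $\QQ$ is chosen so that $(S_0\exp(X_t) e^{-rt})_t$ is a martingale, equivalently $e^{X_t}$ is a $\QQ$-martingale after the obvious drift adjustment. Then the general principle is that time-changing a martingale by an \emph{independent} time change (with respect to an enlarged filtration making the time change adapted) preserves the martingale property: for $s \le t$, conditioning on the $\sigma$-algebra generated by $H$ (equivalently by $L$) and using independence, $\E^{\widetilde\QQ}[e^{-rt}S^{SL}_t \mid \mathcal G_s] = \E^{\widetilde\QQ}[ \E^{\widetilde\QQ}[e^{X_{H_t}} \mid \sigma(H) \vee \mathcal F^X_{H_s}] \mid \mathcal G_s]$, and the inner conditional expectation is $e^{X_{H_s}}$ by optional stopping of the $\QQ$-martingale $e^{X_u - u\psi^\QQ(\ii)}$ at the $\mathcal F$-stopping times $H_s \le H_t$, while the compensating factor $\exp(-\psi^\QQ(\ii)H_t)$ is exactly cancelled by $e^{-rt}$ up to the bookkeeping of $r$. (Here one uses Remark~\ref{CTRWirrelevance} to replace $X_{H_t-}$ by $X_{H_t}$.) The role of $\widetilde\QQ$ with $d\widetilde\QQ/d\P = Z_H$ rather than $d\QQ/d\P = Z$ is precisely that the Radon–Nikodym density must itself be time-changed so that it only "acts" at the random clock $H$; one verifies $Z_H$ is a genuine density (nonnegative, expectation one) by the same independent-time-change martingale argument applied to the $\P$-martingale $Z$.

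For the DRD model I would exploit the \emph{independent discontinuous representation} established in the displayed equalities of Section~\ref{model}, namely $Y^{DRD} = X_{L^H}$ where now $X$ is the parent L\'evy process, \emph{independent} of the time change $L^H = ((L^-)_H)^+$. This reduces the DRD case to exactly the same structure as the SL case: a L\'evy process $X$ time-changed by an independent (now discontinuous, $\mathcal F_{H_t}$-adapted) time change $L^H$. Hence applying the identical optional-stopping-plus-conditioning argument — this time at the stopping times $L^H_s \le L^H_t$, which are $\mathcal F$-stopping times since $L^H$ is built from $L$ and $H$ — gives that $(e^{-rt}S^{DRD}_t)_t$ is a $\QQ$-martingale, where $\QQ$ (not $\widetilde\QQ$) is the right measure because in this representation the return innovations already converge to $X$ itself, so no further time change of the density is needed. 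I would make sure to note that $X_{L^H}$ is continuous with respect to $L^H$ in the sense of Section~\ref{notation} (since $X$ jumps only on a Lebesgue-null set and $L^H_t \le t$), so the semimartingale machinery and the change-of-measure theorem apply cleanly.

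The main technical obstacle is justifying the interchange of conditional expectation and the use of optional sampling for a time change that is \emph{not} independent of $X$ in the naive pathwise sense but becomes independent only after the correct filtration enlargement — and in the DRD case, ensuring that the passage from the coupled continuous representation $((X_L)^-_H)^+$ to the independent discontinuous representation $X_{L^H}$ is measure-preserving under $\widetilde\QQ$ / $\QQ$ and not merely under $\P$. Concretely, one must check that the Radon–Nikodym density only alters the law of the $X$-component and commutes with the time-change operation, which is where the assumption that $\QQ$ affects only the law of $X$ (and not of $L$) is used in an essential way; the stable subordinator's rigidity under equivalent change of measure, already flagged before the proposition, is what forces this restriction. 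I would also need a brief argument that the first-moment condition \eqref{firstfinitemoment} persists under $\QQ$ (standard, since Esscher-type transforms preserve the relevant exponential integrability) so that all the conditional expectations above are finite. Once these measurability and integrability points are in place, the martingale property in both cases follows from the tower property and optional sampling, with the discounting factor $e^{-rt}$ absorbing the deterministic exponent produced by stopping the L\'evy exponential martingale.
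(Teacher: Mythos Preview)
Your DRD argument is the paper's own: use the independent representation $Y^{DRD}=X_{L^H}$, note that $L^H_t\le t$ gives a \emph{bounded} family of stopping times, and apply Doob's optional sampling to the $\QQ$-martingale $e^{X}$. The paper says exactly this in one line. Your aside about ``$X_{L^H}$ continuous with respect to $L^H$'' is unnecessary and not obviously true; boundedness is all Doob requires, and this boundedness---not your remark about which process the return innovations converge to---is the actual reason the untimed $\QQ$ suffices for DRD while $\widetilde\QQ$ is needed for SL, where $H_t$ is a.s.\ finite but not bounded.

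For the SL case the paper simply invokes an external result (Theorem~2 of \citealt{TorricelliTLS}) with $L$ taken to be $\beta$-stable and no change of measure on the subordinator. Your explicit independence-plus-conditioning route is a legitimate alternative, but one piece of bookkeeping is off: there is no compensating factor $\exp(-\psi_X^\QQ(\ii)H_t)$ to be ``cancelled by $e^{-rt}$''. The hypothesis that $(e^{-rt}S^0_t)_t=(S_0 e^{X_t})_t$ is a $\QQ$-martingale forces $\psi_X^\QQ(\ii)=0$, so that factor is identically~$1$; the $e^{\pm rt}$ in $e^{-rt}S^{SL}_t$ cancel each other before any stochastic analysis begins. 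What remains to show is that $(e^{X_{H_t}})_t$ is a $\widetilde\QQ$-martingale, equivalently that $(Z_{H_t}e^{X_{H_t}})_t$ is a $\P$-martingale. Since $(Z_t e^{X_t})_t$ is a $\P$-martingale and $H$ is independent of $X$, your conditioning argument (freeze the whole path of $H$, then use stationary independent increments of $X$) does deliver this---but note it is the independence of $X$ and $L$, not Doob's theorem, doing the work here, precisely because $H_t$ is unbounded.
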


\begin{proof}
For the $S^{SL}$ models the statement follows from \citep[Theorem 2]{TorricelliTLS} by taking~$L_t$ to be a $\beta$-stable subordinator, $S_t=e^{-rt}S^{SL}_t$, $\mathcal X_t=Z_t$ and $\mathcal H_t=1$, which corresponds to no change of measure in the subordinator. 
%has SL dynamics with% $S_0=1$. 
%For any $t\geq 0$, let $S_t^{0,*}:=\exp(rt + X_t^*)$,  where $X_t^*=X_t+t\psi_X(\ii)$ are the risk-neutral dynamics of $X_t$ under  $\QQ$. 
%By independence,~$L$ has the same law under~$\QQ$ and under~$\P$.
%Since~$H$ is a continuous time change,~$X$ is continuous with respect to~$H$, 
%so by~\cite[Lemma 2.7]{KallsenShiryaev} the cumulant process~\cite[Chapter 4]{ChangeMeasChangeTime} of~$X_{H}$ coincides with $-H_t \psi_{X} (-z)$, hence using the fact that $H_t$ is continuous and of finite variation, the process defined by
%\begin{equation}
%\exp(X_{H_t})\mathcal E (H_t \psi_{X} (\ii))=\exp(X_{H_t} + H_t \psi_{X} (\ii))=\exp(X_{H_t}^*)= e^{-r t}S^{0,*}_{H_t}
%\end{equation} 
%is a $\QQ$-local martingale. 
%By independence, and conditioning on~$H_t$, 
%the process above has expectation one, hence it is a true $\QQ$-martingale. 
%But as explained in~\cite[Lemma 5.1]{TorricelliFries}, 
%time changing the $\QQ$-dynamics of~$X^*$ by~$H$ is equivalent 
%to applying the change of measure~$\widetilde{\QQ}$ induced by $Z_H$ to $X_{H}$, and the claim follows. 
For the DRD model it suffices to observe that~$L^H$ is a bounded family of stopping times and thus $e^{-r t}S^{DRD}=\exp\left(X^*_{L^H}\right)$ is a martingale 
under~$\QQ$ by Doob's Optimal Sampling Theorem.
 %%%% QUESTION. WHAT IF THERE ARE NO EMM UNDER WHICH AN ALPHA STABLE SUBORDINATOR REMAINS AN ALPHA STABLE SUBORDINATOR.
 %%%This is actually the case.  
\end{proof}

Again we emphasize that this is a subset of all the possible equivalent martingale measures
and that for technical reasons we ignore a market price of duration risk. 
A model in which this risk can be priced can be obtained for example by considering for~$L$ 
the wider class of tempered stable subordinators, which is closed under the Esscher transform. 
This class, along with related questions of market completeness, is studied in~\cite{TorricelliTLS}; 
see also~\cite{TorricelliFries} for the situation when trade duration is caused by market suspensions.

\subsection{The pricing formula}

Having established that the risk-neutral specification comes in the form of a time-changed martingale exponential, Proposition~\ref{FLT} can be combined with standard integral price representations 
to yield semi-closed-form valuation formulae. 
Remarkably, the characteristic functions of the log-price in the SL and DRD models admit a very simple representation in terms of the one-parameter Mittag-Leffler function
\begin{equation}\label{ML}
  E_a(z) := \sum_{k=0} ^\infty \frac{z^k}{\Gamma(a k + 1)},
  \end{equation}
where~$\Gamma$ is the usual Gamma function, and of the confluent Hypergeometric function
\begin{equation}\label{1F1}
\hyp(a,b ; z) := \sum_{k=0} ^\infty \frac{ (a)_k}{(b)_k}  \frac{z^k}{k!}.
 \end{equation}
\begin{thm}\label{pricingThm}
Let~$Y$ be either process in Definition~\ref{modeldefn}, and $F(\cdot)$ a contingent claim on~$S$ maturing at~$T$. 
Assume that $x\mapsto f(x):=F(e^ x)$ is Fourier-integrable  and let $\mathcal S_f$ be the domain of holomorphy of its Fourier transform $\widehat f$. 
Let $\Phi_t(z):=\E[e^{-\ii z Y_t}]$ be the characteristic function of~$Y_t$ taken according to the relevant measure as described in Proposition~\ref{EMM}, denote by~$\mathcal S_Y$ its holomorphy domain, 
and assume $\mathcal S_f \cap \mathcal S_Y \neq \emptyset$. The price $P_0$ of the derivative paying~$F(S_T)$ at time $T$ is given by
\begin{align}\label{pricingLewis}
P_0&  =\E\left[ e^{-rT} F( S_T)\right]
 = \frac{e^{-rT}}{2 \pi} \int_{\ii \gamma - \infty}^{\ii \gamma + \infty} \frac{\Phi_T(z) \widehat f(z)}{\left(S_0e^{rT}\right)^{\ii z} } dz.
\end{align}
The value $\gamma \in \mathbb R$ is chosen such that the integration line lies in $\mathcal S_f \cap \mathcal S_Y$ and
\begin{equation}\label{CFS}
\Phi_t(z) = 
\left\{
\begin{array}{ll}
\displaystyle E_\beta\left(- \psi_X(z) t^\beta\right), & \text{if } Y=\YS,\\
\displaystyle\phantom{,}_1F_1(\beta,1, -t \psi_X(z)), & \text{if } Y=Y^{DRD}.
\end{array}
\right.
\end{equation} 
\end{thm}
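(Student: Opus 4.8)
The plan is to prove the two assertions separately: the integral representation~\eqref{pricingLewis} is a Parseval--Plancherel argument in the spirit of~\cite{Lewis}, while the closed forms~\eqref{CFS} follow from the transform results of Section~\ref{microstructure} together with two classical Laplace-transform identities.

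For the pricing formula I would start from $P_0=e^{-rT}\E[F(S_T)]=e^{-rT}\E[f(\log S_T)]$ and observe that $\log S_T=\log(S_0e^{rT})+Y_T$. Fourier inversion of~$f$ along a horizontal line $\{\Im z=\gamma\}\subset\S_f$ gives $f(x)=\frac{1}{2\pi}\int_{\ii\gamma-\infty}^{\ii\gamma+\infty}\widehat f(z)e^{-\ii z x}\,dz$; substituting $x=\log S_T$, taking expectations, and exchanging~$\E$ with the contour integral by Fubini yields
\begin{equation*}
P_0=\frac{e^{-rT}}{2\pi}\int_{\ii\gamma-\infty}^{\ii\gamma+\infty}\widehat f(z)\,(S_0e^{rT})^{-\ii z}\,\E\left[e^{-\ii z Y_T}\right]dz,
\end{equation*}
which is~\eqref{pricingLewis} once we recall $\Phi_T(z)=\E[e^{-\ii z Y_T}]$. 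The exchange of integrals is licit precisely because the line is taken in $\S_f\cap\S_Y$: there $|\widehat f(\cdot+\ii\gamma)|$ is integrable on~$\R$ and $\E[e^{\gamma Y_T}]=\Phi_T(\ii\gamma)<\infty$, so that $\E[|F(S_T)|]<\infty$ and the double integral converges absolutely.

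For the characteristic functions I would exploit that in both models~$Y$ is a L\'evy process run along an \emph{independent} time change. In the SL case I apply Proposition~\ref{CTRWtransforms} with the independent pair $(X,L)$: then $\psi_{X,L}(z,s)=\psi_X(z)+\phi_L(s)=\psi_X(z)+s^\beta$ by $\beta$-stability of~$L$, so~\eqref{FLT} gives $\Ll(\widehat P_t(z),s)=s^{\beta-1}/(s^\beta+\psi_X(z))$; inverting via the classical identity $\Ll(E_\beta(-a t^\beta),s)=s^{\beta-1}/(s^\beta+a)$ delivers $\Phi_t(z)=E_\beta(-\psi_X(z)t^\beta)$ (equivalently, one conditions on $H_t$ using $Y^{SL}_t\stackrel{d}{=}X_{H_t}$ from Remark~\ref{CTRWirrelevance} and the known Laplace transform $\E[e^{-\lambda H_t}]=E_\beta(-\lambda t^\beta)$ of the inverse $\beta$-stable subordinator). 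In the DRD case I would use the independent representation $Y^{DRD}_t=X_{L^H_t}$ with~$X$ independent of~$L^H$ and $L^H_t\stackrel{d}{=}t\,\Bb_{\beta,1-\beta}$ from Proposition~\ref{betaLaw}: conditioning on $L^H_t$ and using $\E[e^{-\ii z X_u}]=e^{-u\psi_X(z)}$ gives $\Phi_t(z)=\E[e^{-\psi_X(z)L^H_t}]=\E[e^{-t\psi_X(z)B}]$ with $B\sim\Bb_{\beta,1-\beta}$, and since $\E[B^k]=(\beta)_k/k!$ this moment generating function is exactly $\hyp(\beta,1;-t\psi_X(z))$. Equivalently, Proposition~\ref{CTRWtransforms} applied with the coupled pair $(X_L,L)$ gives $\psi_{X_L,L}(z,s)=\phi_L(\psi_X(z)+s)=(\psi_X(z)+s)^\beta$, hence $\Ll(\widehat P_t(z),s)=s^{\beta-1}/(\psi_X(z)+s)^\beta$, which inverts to $\hyp(\beta,1;-t\psi_X(z))$ by the identity $\Ll(\hyp(a,1;-\lambda t),s)=s^{a-1}/(s+\lambda)^a$. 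Since $E_\beta$ and $\hyp(\beta,1;\cdot)$ are entire and $\psi_X$ is holomorphic on its own strip, the two formulas extend to complex~$z$ on the integration line, which is also why $\S_Y$ is governed by the holomorphy domain of $\psi_X$.

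The step I expect to be the main obstacle is the analytic-continuation bookkeeping underlying the Fubini step: one must check that $\Phi_T$, defined a priori only for real arguments, really is the restriction of the holomorphic function on the right of~\eqref{CFS} to the whole strip $\{\Im z=\gamma:\ii\gamma\in\S_Y\}$, and that $\E[e^{\gamma Y_T}]<\infty$ there, so that the exchange of expectation and contour integration is absolutely convergent. Once this is settled, the remainder reduces to the two explicit Laplace inversions recalled above.
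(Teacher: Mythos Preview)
Your proposal is correct and follows essentially the same route as the paper: the Plancherel representation is deferred to~\cite{Lewis}, the SL characteristic function is obtained from Proposition~\ref{CTRWtransforms} via independence ($\psi_{X,L}(z,s)=s^\beta+\psi_X(z)$) and the Laplace inversion $s^{\beta-1}/(s^\beta+a)\mapsto E_\beta(-at^\beta)$, and the DRD characteristic function is obtained by conditioning on $L^H_t\stackrel{d}{=}t\,\Bb_{\beta,1-\beta}$ and recognizing the Beta moment generating function as ${}_1F_1(\beta,1;\cdot)$. Your sketch is in fact more detailed than the paper's own proof---the paper does not spell out the Fubini justification or the alternative derivations you mention---so the only remark is that the analytic-continuation bookkeeping you flag as the main obstacle is simply not addressed in the paper either.
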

\begin{proof}
Under the given assumptions, the Plancherel representation~\eqref{pricingLewis} 
is standard (see~\cite{Lewis} for example), and we only need to prove~\eqref{CFS}. 
In the SL model, by independence of~$X$ and~$L$ we have
$\psi(s, z)=\phi_L(s)+ \psi_X(z)=s^\beta+ \psi_X(z)$, 
and Proposition~\ref{CTRWtransforms} then yields
\begin{equation}
\Ll (\Phi_t(z), s)= \frac{  s^{\beta-1}}{s^{\beta} + \psi_X(z)}.
\end{equation}
Inverting the right hand-side, as in~\citep{HMS}, one obtains~\eqref{CFS}.  
In the DRD  model after conditioning and applying Proposition~\ref{betaLaw}, we obtain
\begin{equation}
\Phi_t(z) = \E\left[\exp\left(-\psi_X(z ) L^H_t\right)\right]
 = \E\left[\exp\left(-t \psi_X(z)\Bb_{\beta, 1-\beta}\right)\right],
\end{equation} 
and the statement follows from the characteristic function of $\Bb_{\beta, 1-\beta}$.
\end{proof}

\begin{oss}
Fast computational routines for the Mittag-Leffler and the confluent hypergeometric functions are available in most software packages.  
Also, the two functions can be unified in a single software implementation by observing that the three-parameter Mittag-Leffler function
  \begin{equation}\label{ML2}
  E_{a,b,c}(z)=\sum_{k=0} ^\infty (c)_k\frac{z^k}{\Gamma(ak +  b )}
  \end{equation}
is such that $ E_{a,1,1}(z)= E_a(z)$ and $E_{1,1, c}(z)=\hyp(c,1,z)$. 
Furthermore if $a=b=c=1$, then~\eqref{ML2} reverts to the standard exponential,
which is consistent with the fact that~$S^{SL}$ and~$S^{DRD}$ 
revert to the exponential L\'evy model~$S^0$.
\end{oss}

\begin{oss}
The function~$E_\beta$ is entire and  $\phantom{.}_1F_1(\beta,1, -t \psi_X(\cdot))$ 
is regular in the complex plane without the negative real axis;
hence $\mathcal S_f \cap \mathcal S_Y \neq \emptyset$ 
depends on the %regularity 
domain of~$\psi_X$ and~$\widehat f$ only.
\end{oss}

\begin{oss} 
If~$X_{H}$ has an FPP  structure, 
then~\eqref{pricingLewis} coincides with the formula 
given by~\cite[Theorem 3]{CarteaMB}, 
when the jump sizes have infinitely divisible distribution.
\end{oss}

One sees that the pricing formulae are formally obtained from the standard L\'evy case by replacing the exponential function with two different kinds of `stretched exponentials'. 
The parameter~$\beta$  relaxes the shape of the characteristic function, in particular in the tails, 
thereby generating large-maturity prices very different from the base case. 
This overcomes the `curse of exponentiality' of the standard models (both L\'evy and exponentially-affine), 
for which the long-maturity option prices follow Laplace-type asymptotics of leading order $\exp(-T)/\sqrt{T}$. 
We will detail this better, together with its implications on the volatility surface, 
in Section~\ref{surface} below. 
Note that the two functions~\eqref{ML} and~\eqref{1F1} have very different behaviours.
In Figure~\ref{functionals}, we can see for example that~\eqref{ML} has a cross-over region where its decay transitions from super to sub-exponential, whereas
in~\eqref{1F1}, the integrand always dominates the exponential. 
This has a clear impact on the shape of the volatility surface, 
as illustrated numerically in Section~\ref{numerical}.

%Therefore an application of theorem (\ref{CTRWtransforms}) to~$L^H_t$  amounts to computing  (\ref{FLTLH}) in $i \psi_X(z) $ which yields 
%\begin{align}
%\Ll(\Phi_t^C(z), s )= \frac{s^{\beta-1}}{ (s+ \psi_X(z))^{\beta} }
%\end{align}
%Proceeding as in Proposition~\ref{BetaLaw} shows
%\begin{equation}
%\Phi_t^C(z)=\int_0^t e^{- \psi_X(z) x} \frac{x^{\beta-1} (t-x)^{-\beta}}{ \Gamma(\beta) \Gamma(1-\beta)} dx=\int_0^1 e^{- t \psi_X(z) x} \frac{x^{\beta-1} (1-x)^{-\beta}}{ \Gamma(\beta) \Gamma(1-\beta)} dx 
%\end{equation}
%and in the last equality we recognize the intergal representation of the hypergeometric function $\hyp(\beta, 1, \cdot)$ computed in $-t \psi_X(z)$.
%\end{proof}

\section{Time asymptotics of the volatility surface}\label{surface}

Bearing in mind the discussion so far, we naturally expect implications of trade duration 
(at least in the form we chose to model it) on the  volatility surface.
The anomalous diffusions processes we constructed are subdiffusions, and as such have a slower distributional dispersion  rate 
 than the benchmark L\'evy models, hence a slower option price convergence for large maturity. 
That said, since Black-Scholes is a L\'evy model, inversion of the Black-Scholes formula using subdiffusive option prices should generate a vanishing implied volatility term structure in order to match the slower price time evolution. 

Less intuitive is to find a reason why the long-term skew should decline slower than standard L\'evy and stochastic volatility models. A first answer is provided by Section~\ref{statistical}: skewness and kurtosis in our models do not tend to zero as time grows but converge to some strictly positive level. 
Therefore Gaussian temporal returns aggregation 
is precluded, and time reversion to a flat volatility might be pushed further away in time\footnote{Gaussian aggregation is by no means responsible of the smile flattening, as shown by~\cite{RogersFlattening}.}. However, as we shall show, an exhaustive answer is provided by the fact that skew and level of the implied volatility are connected, and the property of a vanishing asymptotic implied volatility is sufficient to hamper the skew time decay.

In this section we generically indicate with~$\QQ$ any of the two risk-neutral measures of
Proposition~\ref{EMM}. Without loss of generality, we assume here $r=0$ and $S_0=1$ and, 
denote~$C(K,T)$ the Call option price with strike~$K$ and maturity~$T$.
In the Black-Scholes model $d S_t = \sigma S_t d W_t$, with $\sigma>0$, 
the price of such a Call option is given by
\begin{equation}
C_{\BS}(K,T,\sigma) = S_0\, \Nn\left(d\left(\sigma\sqrt{T}\right)\right) - K\Nn\left(d\left(\sigma\sqrt{T}\right)-\sigma\sqrt{T}\right),
\end{equation}
where $d(z):=-\frac{\log(K)}{z} + \frac{z}{2}$,~$\Nn$ denotes the standard Gaussian cumulative distribution function,
and~$n$ its derivative, the Gaussian density function.
For $K, T \geq 0$, the implied volatility $\sigma(K,T)$ 
is the unique non-negative 
solution to %the equation
$C(K,T) = C_{\BS}(K, T, \sigma(K,T))$, 
and the implied volatility skew is defined as
\begin{equation}\label{skew}
\S(K, T) := \frac{\partial \sigma}{\partial K}(K,T).
\end{equation}
It is known by~\cite{RogersFlattening} that~$\S(K,\cdot)$ converges to zero as the maturity increases, for each~$K$. 
%Therefore for fixed $K$,  $\S(K, T)$ gives the rate of  decrease  of the volatility smile and skewness around the given strike. 
We begin with the following model-free lemma which, under some mild assumptions on the underlying distribution, connects the time decay of the skew with its level.

\begin{lem}\label{asymptotics} 
Let $(S_t)_{t\geq 0}$ be a martingale such that the law
of $S_t$ is absolutely continuous for each~$t$ and converges to zero in distribution 
as~$t$ tends to infinity.  
\begin{itemize}
\item[(i)] For any $K\geq 0$, if $\lim\limits_{T\uparrow\infty}\sqrt{T}\sigma(K,T) =\infty$ 
then, as~$T$ tends to infinity,
\begin{equation}\label{longSkew}
\S(K, T) = \frac{2}{T  \sigma(K,T )} \left[1 + \frac{2\log(K)-4}{T\sigma(K,T)^2} 
 + \Oo\left(\frac{T^{-2}}{\sigma(K,T)^{4}}\right)  \right] - \frac{\QQ(S_T \geq K)}{\sqrt{T}n(d(\sigma(K,T)\sqrt{T}) )};
\end{equation}
\item[(ii)] as $T$ tends to zero, 
\begin{equation}\label{shortSkew}
\S(1, T) = \sqrt{ \frac{2 \pi}{T}} \left[\frac{1}{2}- \QQ(S_t \geq 1)- \frac{\sigma(1,T) \sqrt{T}}{ 2 \sqrt{2 \pi}} +  \Oo\left(\sigma^2(1,T) T\right) \right].
\end{equation}
\end{itemize}
\end{lem}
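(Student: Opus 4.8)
The plan is to exploit the defining relation $C(K,T) = C_{\BS}(K,T,\sigma(K,T))$ and differentiate it in $K$, then solve for $\partial_K\sigma$. Differentiating the Black--Scholes side in $K$ gives two contributions: the explicit $K$-dependence of $C_{\BS}$ (through $d(\cdot)$ and the $-K\Nn(\cdot)$ term), and the implicit dependence through $\sigma(K,T)$, which comes with the Black--Scholes Vega $\partial_\sigma C_{\BS} = S_0\sqrt{T}\,n(d(\sigma\sqrt T))$. On the market side, $\partial_K C(K,T) = -\QQ(S_T\geq K)$ by the standard Breeden--Litzenberger-type identity (valid since the law of $S_T$ is absolutely continuous, so $C(\cdot,T)$ is differentiable). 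Rearranging yields the exact identity
\begin{equation}\label{eq:exactskew}
\S(K,T) = \frac{\partial_K C_{\BS}(K,T,\sigma)\big|_{\sigma=\sigma(K,T)} + \QQ(S_T\geq K)}{S_0\sqrt{T}\,n\big(d(\sigma(K,T)\sqrt T)\big)},
\end{equation}
where $\partial_K C_{\BS}$ here denotes the partial derivative holding $\sigma$ fixed. A short computation shows $\partial_K C_{\BS}(K,T,\sigma) = -\Nn(d(\sigma\sqrt T)-\sigma\sqrt T)$ (the $d'$-terms cancel by the usual Black--Scholes identity $S_0 n(d) = K n(d-\sigma\sqrt T)$). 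So the whole problem reduces to the asymptotic analysis of $\Nn(d(\sigma\sqrt T)-\sigma\sqrt T)$ and $n(d(\sigma\sqrt T))$ in the two regimes.

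For part (i), set $w := \sigma(K,T)\sqrt T$, so $d(w) - w = -\log(K)/w - w/2 \to -\infty$ since $w\to\infty$ by hypothesis. I would use the Gaussian tail expansion $\Nn(-x) = n(x)\big(x^{-1} - x^{-3} + \Oo(x^{-5})\big)$ as $x\to\infty$ with $x = w/2 + \log(K)/w$, and the ratio identity $n(d(w))/n(d(w)-w) = e^{-\log K} = 1/K$ (again from $S_0 n(d) = K n(d-\sigma\sqrt T)$ with $S_0 = 1$), to rewrite $-\Nn(d(w)-w)/(S_0\sqrt T\, n(d(w)))$. After substituting $x = w/2 + \log(K)/w$ and expanding $1/x = 2/w - 4\log(K)/w^3 + \Oo(w^{-5})$, collecting terms in powers of $w^{-1} = (\sigma(K,T)\sqrt T)^{-1}$ should reproduce the bracketed series $\frac{2}{T\sigma}\big[1 + \tfrac{2\log K - 4}{T\sigma^2} + \Oo(T^{-2}\sigma^{-4})\big]$; the leftover $\QQ(S_T\geq K)/(\sqrt T\, n(d(w)))$ term in \eqref{eq:exactskew} is carried through unchanged, matching \eqref{longSkew}. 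The bookkeeping of the $\log(K)$ cross-terms between the $1/x$ expansion and the $x^{-3}$ correction is the fiddly part, but it is routine.

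For part (ii), $K = S_0 = 1$ so $\log K = 0$, $d(w) = w/2$, and now $w = \sigma(1,T)\sqrt T \to 0$ as $T\to 0$ (for any reasonable model the implied vol stays bounded). Here I would Taylor-expand the Gaussians around $0$: $\Nn(d(w) - w) = \Nn(-w/2) = \tfrac12 - \tfrac{w}{2\sqrt{2\pi}} + \Oo(w^3)$ and $n(d(w)) = n(w/2) = \tfrac{1}{\sqrt{2\pi}}(1 + \Oo(w^2))$. Plugging these into \eqref{eq:exactskew} with $S_0 = 1$ gives $\S(1,T) = \sqrt{2\pi/T}\,\big(\tfrac12 - \QQ(S_T\geq 1)\big)(1+\Oo(w^2)) - \tfrac{w}{\sqrt T}\cdot\tfrac12(1+\Oo(w^2))$, and since $w = \sigma(1,T)\sqrt T$ the second term is $-\tfrac12\sigma(1,T)$, which after factoring out $\sqrt{2\pi/T}$ becomes $-\sigma(1,T)\sqrt T/(2\sqrt{2\pi})$; absorbing the $\Oo(w^2) = \Oo(\sigma^2 T)$ relative errors into the stated $\Oo(\sigma^2(1,T)T)$ term yields \eqref{shortSkew}.

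The main obstacle I anticipate is not any single estimate but justifying the differentiation of the pricing identity and controlling the error terms uniformly: one needs $C(\cdot,T)$ differentiable in $K$ (hence the absolute-continuity hypothesis on $S_t$), $\sigma(\cdot,T)$ differentiable (implicit function theorem via strict positivity of Vega, which holds whenever $0 < \sigma(K,T) < \infty$), and the tail/Taylor expansions to hold with the error terms genuinely of the claimed order — in part (i) this is exactly where the assumption $\sqrt T\,\sigma(K,T)\to\infty$ is indispensable, since it guarantees $w\to\infty$ so the Gaussian tail expansion applies and its remainder is $\Oo(w^{-5})$ as written. The convergence $S_T\to 0$ in distribution is what makes the $\QQ(S_T\geq K)$ term meaningful and, in the companion results, negligible. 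None of these steps is deep, but the statement as written is an exact identity plus expansions, so the proof is essentially "differentiate, substitute the two Gaussian asymptotic regimes, collect terms."
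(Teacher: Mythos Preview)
Your approach is exactly the paper's: differentiate the identity $C(K,T)=C_{\BS}(K,T,\sigma(K,T))$ via the chain rule, use $\partial_K C(K,T)=-\QQ(S_T\geq K)$ (absolute continuity), insert the Black--Scholes Delta and Vega, and then feed in the Gaussian tail expansion $\Nn(-x)=n(x)\big(x^{-1}-x^{-3}+\Oo(x^{-5})\big)$ for part~(i) and the Taylor expansion at the origin for part~(ii); the paper simply cites \cite{gerholdST} for (ii).

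One slip to fix: your exact identity carries the wrong global sign. From $\partial_K C=\partial_K C_{\BS}+\partial_\sigma C_{\BS}\cdot\S$ one gets
\[
\S(K,T)=\frac{-\partial_K C_{\BS}-\QQ(S_T\geq K)}{\partial_\sigma C_{\BS}}
=\frac{\Nn\big(d(w)-w\big)-\QQ(S_T\geq K)}{\sqrt{T}\,n\big(d(w)\big)},
\]
not the version with $+\partial_K C_{\BS}+\QQ$ in the numerator. With your sign the $\QQ$-term would appear with a plus in~\eqref{longSkew} and the bracketed series would come out negative; your verbal computations in both parts already silently use the correct sign, so this is just a transcription error in the displayed formula, not a flaw in the method.
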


\begin{proof} We only prove the first statement, as the second one is proved in~\cite[Lemma 2]{gerholdST}.
Since $S_t$ has an absolutely continuous law, then by~\cite[Lemma C.1]{LevyLargeSkew},
$\mathcal S$ in~\eqref{skew} exists, $\partial_K C(K,T)=-\QQ(S_T \geq K)$, and the chain rule yields
\begin{equation}
\S(K, T)=-\frac{\partial_K C_{\BS}(K, T, \sigma(K,T))  +\QQ(S_T \geq K)}{\partial_\sigma C_{\BS}(K, T, \sigma(K,T))}.
\end{equation}
Set $z=\sqrt{T} \sigma(K,T)$. Using the formulae for the Black-Scholes Delta and Vega:   
\begin{equation}\label{BSchain}
\S(K, T) =  \frac{\Nn(-d(z)) -\QQ(S_T \geq K)}{\sqrt{T}n(d(z))}.
\end{equation}
where we recall that $\Nn(\cdot)$ is the standard Gaussian cumulative distribution function. Since, as~$x$ tends to infinity
\begin{equation}
\Nn(-x) =  \frac{n(x)}{x}\left(1 -\frac{1}{x^2} + \Oo\left(x^{-4}\right) \right)
\qquad\text{and}\qquad
\frac{1}{d(x)} =  \frac{2}{x}+ \frac{4 \log(K)}{x^3}   + \Oo\left(x^{-5}\right),
\end{equation}
 then
\begin{equation}
 \frac{\Nn(- d(z))}{\sqrt{T}n( d(z) )} = \frac{1}{\sqrt{T}  d(z)} \left(1 -\frac{1}{d(z)^2} + \Oo\left(d(z)^{-4}\right) \right) = \frac{2}{\sqrt{T}  z} \left(1 +\frac{2 \log(K)- 4}{z^2} + \Oo\left(z^{-4}\right) \right),
\end{equation}
and~\eqref{longSkew} follows by substituting $z$ and combining the above 
with~\eqref{BSchain}.
\end{proof}

\begin{oss} 
If $S^0 = \exp(X)$ is a martingale for some L\'evy process~$X$, from the proof of Proposition~\ref{EMM}, 
our models can be written as~$S^0_{T_t}$ for some time change~$T_t$, 
so that~$S^0_{T_t}$ converges to zero almost surely as~$t$ tends to infinity, 
provided we know this to hold for $S^0_t$. 
Such a property for exponential L\'evy models can be proved using fluctuations identities, 
since the assumption $\E[X_1]<0$ implies~\cite[VI.4, Exercise 3]{BertoinBook} 
that $X_t$ diverges to~$-\infty$. 
Because of Jensen inequality, a negative first moment is always the case for~$X_t$ when~$S^0$ is a martingale. 
%as it is apparent from the relations connecting the stochastic and the natural 
%exponential~\cite[Corollary 4.1]{ChangeMeasChangeTime}. 
Regarding the absolute continuity of the price process, 
this follows from the fact that the law of the involved processes are weak solutions of fractional Cauchy problems. 
These can be found using arguments analogous to~\cite[Examples 5.2-5.4]{MeerscahertOCTRW}.
\end{oss}

Part~(i) of  this lemma implies that the level and skew of the implied volatility are entangled: one cannot modify the leading order~$1/T$ of the skew decrease without postulating a zero or diverging asymptotic implied volatility level. In turn, a declining implied volatility can only be attained through a convergence rate of option prices distributions to the spot price slower than Gaussian, which is precisely the distinguishing feature of anomalous diffusions-based models.  Part~(ii) is an already known fact, originally observed in~\cite[Lemma~2]{gerholdST}, which highlights a very stringent relationship between the prices of Digital options and the small-time at-the-money skew.
It will be used later in Corollary~\ref{cor:shortSkewDRD}.
%But a vanishing long run implied volatility is exactly what we observe in the present set-up.

\begin{thm}\label{priceAsymptotics} 
As $T$ tends to infinity, we have the following asymptotic expansions
for the Call price $C(K, T)$, for any $K\geq 0$:
\begin{itemize}
\item[(i)] in the DRD model with $\beta \in (0,1]$, with the interpretation that~$L_t=t$ when $\beta=1$, 
there exist $C_1^\beta$ and $c_\beta>0$ such that
\begin{equation}\label{BTLCallAs}
C(K, T) = 1-  \mathbbm{1}_{\{\beta\neq 1\}}\frac{C_1^\beta } {\Gamma(1-\beta)}\frac{1}{T^{\beta}}\left[1+\Oo\left(\frac{1}{T}\right)\right] - \frac{ c^\beta}{\Gamma(\beta)} \frac{e^{- T \psi_X(\ii/2)}}{T ^{3/2- \beta}} \left[1+\Oo\left(\frac{1}{T}\right)\right];
\end{equation}
\item[(ii)] in the SL model with $\beta \in (0,1)$, there exists $C_2^\beta>0$ such that
\begin{equation}\label{SLCallAs}
C(K, T) = 1-   \frac{C^\beta_2 }{\Gamma(1-\beta)}\frac{1}{T^{\beta}}
\left(1+\Oo\left(\frac{1}{T}\right)\right).
\end{equation}
\end{itemize}
%for the  SL. Here for $a,b >0$ we we used \begin{equation}
%\chi^a_b(k):=\int_{-\infty}^\infty \frac{e^{k(iu +1/2)}}{(u^2+ \frac{1}{4}) \psi_X(u+\ii /2)^{a+b}} du.
%\end{equation}
%and
%\begin{equation}
%\Xi^a_b(k):=- \frac{4 e^{k/2}}{\sqrt{- 2 \pi \psi^{''}_X(i/2)}\psi_X(i/2)^{a+b}}.
%\end{equation}
\end{thm}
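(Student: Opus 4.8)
The plan is to express the Call price via the Parseval--Plancherel formula of Theorem~\ref{pricingThm}, obtain the Laplace transform in maturity of the (shifted) Call price using Proposition~\ref{CTRWtransforms}, and then read off the large-$T$ behaviour by a Tauberian argument on the Laplace transform. Concretely, write $C(K,T)=\E[(S_T-K)^+]$ and set $g(T):=1-C(K,T)=\E[(K-S_T)^+]+K\,\QQ(S_T<K)-\text{(correction)}$; more cleanly, since $S_T\to 0$ in distribution, $C(K,T)\to 1$ and $1-C(K,T)=\E[\min(S_T,K)]$, which is a bounded, monotone-in-nothing but integrable-against-nice-kernels quantity. Using the representation~\eqref{pricingLewis} with $f(x)=(e^x-K)^+$ and taking the Laplace transform in~$T$, the characteristic function $\Phi_T(z)$ is replaced by its Laplace transform, which by Proposition~\ref{CTRWtransforms} equals $\frac{1}{s}\frac{\phi_L(s)}{\psi_{X,L}(z,s)}$. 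For the SL model this is $\frac{s^{\beta-1}}{s^\beta+\psi_X(z)}$, and for the DRD model one uses instead the Laplace transform of ${}_1F_1(\beta,1,-T\psi_X(z))$, which is $\frac{s^{\beta-1}}{(s+\psi_X(z))^\beta}$ (a standard identity for the confluent hypergeometric function).

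The next step is the singularity analysis. For the SL case, the Laplace transform of $1-C(K,T)$ will, after the Fourier integration in~$z$, have a leading singularity of the form $s^{\beta-1}$ as $s\downarrow 0$ times a constant $C_2^\beta>0$ picked up from evaluating the $z$-integral at the relevant residue/saddle (the term $\widehat f(z)/(S_0e^{rT})^{\ii z}$ contributes the $K$-dependence but not the leading rate); a Karamata/Tauberian theorem then gives $1-C(K,T)\sim \frac{C_2^\beta}{\Gamma(1-\beta)}T^{-\beta}$, with the $\Oo(1/T)$ correction coming from the next term $s^{\beta}$ in the expansion of the transform around $s=0$. This yields~\eqref{SLCallAs}. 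For the DRD case the analysis is richer: the Laplace transform $\frac{s^{\beta-1}}{(s+\psi_X(z))^\beta}$ has \emph{two} sources of large-$T$ contributions. The branch point at $s=0$ again produces, via the Tauberian theorem, a term of order $T^{-\beta}$ with constant $C_1^\beta/\Gamma(1-\beta)$ (absent when $\beta=1$, since then there is no branch point — hence the indicator $\mathbbm{1}_{\{\beta\neq 1\}}$). The second, exponentially small but algebraically slower-decaying, contribution comes from the branch point of $s\mapsto(s+\psi_X(z))^\beta$ at $s=-\psi_X(z)$; deforming the Bromwich contour around this branch cut and doing the $z$-integration by a saddle-point/Watson-lemma argument — the saddle sitting at $z=\ii/2$, which is where $\psi_X(z)$ and the payoff transform combine to minimize $\psi_X(\ii/2)$ — produces the term $\frac{c^\beta}{\Gamma(\beta)}\,e^{-T\psi_X(\ii/2)}\,T^{\beta-3/2}$. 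The power $T^{\beta-3/2}$ is the product of the usual $T^{-1/2}$ Gaussian saddle factor and a $T^{\beta-1}$ factor from the Watson-lemma expansion of $(s+\psi_X(z))^{-\beta}$ near its branch point, with $c^\beta>0$ an explicit Beta-function-type constant.

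I would organize the write-up as: (1) Laplace-transform the Parseval formula and substitute the SL/DRD transforms; (2) justify interchange of the Fourier and Bromwich integrals (Fubini, using the integrability hypotheses $\mathcal S_f\cap\mathcal S_Y\neq\emptyset$ and the decay of $\widehat f$); (3) for SL, apply a Tauberian theorem of Karamata type (e.g. \cite{BertoinBook}-style, or the standard Feller version) to the $s^{\beta-1}+\Oo(s^\beta)$ expansion; (4) for DRD, split the inverse Laplace transform into the contribution of the branch cut at the origin (Tauberian, as in SL) and the contribution of the moving branch point at $-\psi_X(z)$, the latter handled by contour deformation plus a saddle-point evaluation in~$z$ at $z=\ii/2$. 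The main obstacle will be step~(4): making the contour deformation rigorous (controlling the integrand on the deformed contour uniformly in $z$, and checking that the $z$-saddle at $\ii/2$ genuinely dominates, i.e. that $\mathrm{Re}\,\psi_X(z)>\psi_X(\ii/2)$ on the rest of the integration line, which is where the martingale normalization $\psi_X(\ii)=0$ and convexity of $z\mapsto\psi_X(\ii z)$ enter) and then uniformly combining the two asymptotic contributions into the stated joint expansion with its error term. The SL part, by contrast, is essentially immediate once the Tauberian theorem is invoked.
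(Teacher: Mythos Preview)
Your approach is sound in outline but takes a genuinely different route from the paper. The paper does \emph{not} pass through Laplace transforms or Tauberian arguments at all: it works directly with the Lewis representation
\[
C(K,T)=1-\frac{1}{2\pi}\int_{-\infty}^{\infty}\frac{e^{(\ii u+1/2)\log K}}{u^2+1/4}\,\Phi_T\!\left(u+\tfrac{\ii}{2}\right)du,
\]
and then substitutes the \emph{known} large-argument asymptotic expansions of the special functions $E_\beta$ (from the Haubold--Mathai--Saxena compendium) and ${}_1F_1$ (from Luke), after checking that the Stokes lines are not crossed along $\Im z=1/2$. Uniformity in $u$ comes from the observation that $|\psi_X(\cdot+\ii/2)|$ has a strictly positive minimum at $u=0$, so $T|\psi_X|$ is uniformly large once $T\gg 1/|\psi_X(\ii/2)|$. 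The polynomial term is then obtained by straight integration, and the exponential sub-leading term in the DRD case by a standard saddle-point evaluation at $u=0$ \`a la Andersen--Lipton. This is considerably shorter than what you propose, precisely because the hard analytic work has already been done in the special-functions literature and can simply be cited.

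Your Laplace/Tauberian route would also work and is in some sense more self-contained; your identification of the two singularity sources in the DRD transform $s^{\beta-1}/(s+\psi_X(z))^\beta$ (the branch at $s=0$ and the moving branch at $s=-\psi_X(z)$, with the $z$-saddle at $\ii/2$) is exactly right and recovers the same two terms. The trade-off is that you must re-derive what the paper quotes. The point where your programme is genuinely harder than you indicate is the error control: a bare Karamata theorem gives only the leading order $T^{-\beta}$, not the $\Oo(1/T)$ correction stated in the theorem. To get that remainder you need either a Tauberian theorem with explicit error bounds (which requires extra regularity hypotheses you have not stated) or a full Bromwich inversion with careful contour estimates---essentially re-proving the special-function asymptotics by hand. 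The paper sidesteps this entirely by importing the full asymptotic series, remainder included, from Luke and HMS. You correctly flag step~(4) as the crux; I would add that the quantitative remainder in step~(3) is a second, less visible, obstacle in your scheme.
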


\begin{proof}
Since we are under the assumptions of Theorem~\ref{pricingThm}, we can consider the price representation for a Call option
\begin{equation}\label{LewisCall}
C(K, T)= 1 - \frac{1}{2 \pi}\int_{-\infty}^{\infty} \frac{e^{\left(\ii u + \frac{1}{2}\right)\log(K)}}{u^2+1/4}\Phi_T\left(u + \frac{\ii}{2}\right) du
\end{equation}
which can be obtained from~\eqref{pricingLewis}  by moving the integration contour inside the strip $\Im(z)=1/2$ and applying the Residue Theorem  (see~\citealt{Lewis}).
Now the integrand in~\eqref{LewisCall} is bounded by an integrable function, thus by dominated convergence we   can take the limit as~$T$ tends to infinity of~$C(K,T)$ under the integral sign. So once we determined the asymptotic expansion of~$\Phi_T$ we can integrate the resulting expression to get the asymptotic equation of interest.

 Assume $\beta <1$ in the DRD model.  First of all, since the integration line contains points of variable argument, 
we must ensure that the Stokes phenomenon\footnote{The asymptotic behaviour of complex-valued functions can be different in different regions of the complex plane, which is normally referred to as Stokes phenomenon. A complex-valued function has a limit along a direction if it eventually takes values in only one of such areas.} does not occur. The asymptotic expansion of ${}_1F_1(a, b, z)$,  for large~$|z|$ is~\cite[Chapter 4]{Luke}:
\begin{equation}
{}_1F_1(a,b, z) \sim \frac{\Gamma(b)}{\Gamma(b-a)} z^{-a} e^{\ii \delta \pi a}
{}_2 F_0\left(a, 1+a-b, -\frac{1}{z}\right) + \frac{\Gamma(b)}{\Gamma(a)} z^{a-b} e^{z}
{}_2 F_0\left(b-a, 1-a, \frac{1}{z}\right)
\end{equation}
with $\delta=1$ if $\Im(z)>0$ and $\delta=-1$ otherwise.
So when $\Im(z)=1/2$,  since $\Re(\psi_X(z))>0$,  in equation~\eqref{CFS} for large~$T|\psi_X(z)|$ we have the well-defined behaviour 
\begin{align}\label{BTLexpansion}
{}_1F_1(\beta, 1, -T\psi_X(z)) \sim  & \frac{  (T\psi_X(z))^{-\beta}}{\Gamma(1-\beta)}
{}_2 F_0\left(\beta, \beta, (T\psi_X(z))^{-1}\right) \nonumber \\  
& + \frac{ e^{-T\psi_X(z)}  (- T\psi_X(z))^{\beta-1}  }{\Gamma(\beta)  }{}_2 F_0\left(1-\beta, 1-\beta, -(T\psi_X(z))^{-1}\right) \nonumber \\ \sim &\frac{  (T\psi_X(z))^{-\beta}}{\Gamma(1-\beta)} 
 + \frac{ e^{-T\psi_X(z)}  (- T\psi_X(z))^{\beta-1}  }{\Gamma(\beta)  }
\end{align}
 where in the last line we used that $\lim_{x\rightarrow 0}{}_2F_0(a,b;x)=1$, 
 for all $a,b$.
%\begin{equation}\label{LewisCallLimit}
%\lim_{t \rightarrow  \infty}C(t,k)= 1 - \frac{1}{2 \pi}\int_{-\infty}^{\infty} \frac{e^{k\left(iu + \frac{1}{2}\right)}}{u^2+1/4} \lim_{t \rightarrow  \infty}  {}_1F_1(\beta, 1, -t\psi_X(u+\ii /2)) du
%\end{equation}
In order to substitute in~\eqref{LewisCall} the above expression for large $T$,  we need a uniformity argument in $u$.  Notice first that as~$|z|$ tends to infinity, $|\psi_X(z)|$ also tends to infinity 
because the risk-neutral drift of~$X$ must be nonzero by~\cite[Corollary 1.1.3]{BertoinSub}. 
This implicates that  along any line $\Im(z)=c$, $|\psi_X(z)|$  is strictly increasing. 
Also $\psi_X$ is even in its real part and odd in its imaginary part, so that $|\psi_X(\cdot+\ii/2)|$ on such sets must be an even function. 
We conclude that $|\psi_X(\cdot+\ii/2)|$,  has a positive minimum at the origin.  
Therefore  so long as $T$ is much larger than $1/|\psi_X(\ii/2)|$ we can replace $\Phi_T$ in~\eqref{LewisCall} with~\eqref{BTLexpansion}. Integrating the first term of the resulting expression produces the first term in~\eqref{BTLCallAs} with
\begin{equation}
C_1^\beta=\frac{1}{2 \pi}\int_{-\infty}^{\infty} \frac{e^{(\ii u + 1/2)\log(K)}}{(u^2+ 1/4) \psi_X(u+\ii/2)^{\beta}}du.
\end{equation}
Regarding the exponential sub-leading terms we have to analyse 
\begin{equation}
I^{\beta}(T):=
\int_{-\infty}^{\infty} \frac{e^{(\ii u + 1/2)\log(K)}e^{- T \psi_X(u+\ii/2)} }{(u^2+ 1/4)  \psi_X(u+\ii/2)^{1-\beta}}du,
\end{equation}
which can be treated using the saddle point method as in~\cite{AndersenLipton}. 
From the previous discussion,  $\psi_X(\cdot+ \ii/2)$  has a stationary point in 0 and further by $\psi''_X(\ii/2)>0$ by~\eqref{firstfinitemoment},  so that for large~$T$
\begin{equation}
I^{\beta}(T) \sim \frac{ \sqrt{2 \pi} 4 \sqrt{K}}{ \psi_X(\ii/2)^{1-\beta} \sqrt{\psi''_X(\ii/2)T}},
\end{equation}
which yields the second term in~\eqref{BTLCallAs} with
\begin{equation}
c_\beta= \frac{  4 \sqrt{K}}{ \psi_X(\ii/2)^{1-\beta} \sqrt{ 2 \pi \psi''_X(\ii/2)}}.
\end{equation}

When $\beta=1$ the whole proof collapses to the well-known steepest descent 
argument~\cite[Section 7]{AndersenLipton} for the L\'evy models price representation integral.

In the SL model we have, for any given $\beta<1$, that so long as $\pi \beta/2  < \theta < \min \{\pi, \pi \beta  \}$  the  asymptotic series for $E_{\beta}$
is given by~\cite[Equation 6.5]{HMS}
\begin{equation}\label{MLseries}
E_{\beta}(z) = \left\{\begin{array}{ll}
\displaystyle{ \frac{e^{z^{1/\beta}}}{\beta}\sum_{k=1}^{n-1} \frac{1}{\Gamma(1-\beta k)}\frac{1}{z^k} + \Oo\left(z ^{-n}\right)}, & \text{for  } |\arg(z)| <\theta,  \\      
\displaystyle{\sum_{k=1}^{n-1} \frac{1}{\Gamma(1-\beta k)}\frac{1}{z^{k}} + \Oo\left(z ^{-n}\right)}, & \text{for } \theta <|\arg(z)| \leq  \pi.
\end{array} \right.
\end{equation}
 Since   $\Re(\Psi_X(u+\ii/2))>0$,  for all $\alpha$  in the line $\Im(z)=1/2$ there exist $T_0$ big enough such that $\pi \beta < |\arg(- \psi_X(u+\ii/2) T_0^{\beta})|$, so that for $T>T_0$ the Stokes lines are not crossed. The correct expression is thus the  second line 
 in~\eqref{MLseries}, and we can repeat what argued in the DRD case.
 \end{proof}

\begin{oss}
The second term in~\eqref{BTLCallAs} is clearly negligible for large~$T$ compared to the leading order, when $\beta$ is smaller than~$1$. 
However for fixed $T$, as $\beta$ approaches one its contribution cannot be neglected. 
This term has been included to clarify the convergence to the L\'evy model. 
Such correction is not present in the SL model, and as $\beta=1$ the price approximation simply breaks down (however, by dominated convergence we still have convergence of prices).
\end{oss}

Proposition~\ref{priceAsymptotics}  clarifies the aforementioned slower convergence of Call prices compared to L\'evy (or exponentially-affine stochastic volatility) models. 
As already remarked, it can be thought of as a direct consequence of the slow, subdiffusive time spread of the asset returns. 
More specifically, the nature of the distribution implies that the pricing integral does not obey the Laplace decay rate, 
since the integrand is not of the form $\exp(-T f(x)) g(x)$. 
One instead obtains a vanishing long-term volatility, and hence by Lemma~\ref{asymptotics} 
a persistent long-term skew, as we illustrate below:

\begin{cor}\label{LambertLong} 
For $\beta \in (0,1)$, the leading-order asymptotic for large~$T$ of the  implied volatility 
in both the DRD and SL model satisfies
\begin{equation}\label{Lambert}
\sigma_\beta(K, T) \sim 2  \sqrt{ \frac{1 }{ T} W_0\left (\frac{2 K \, T^{2 \beta} \, \Gamma(1-\beta)^2 }{ \pi  C_{\beta}   }  \right)} ,
\end{equation}
where $W_0$ is the Lambert function and $C_\beta>0$. Furthermore, for all $K$, $\alpha > 1/2$,
\begin{equation}\label{skewconvergences}
\lim_{T \rightarrow \infty} \frac{T^{-\alpha}}{\S_\beta(K,T)}= 0
\qquad\text{and}\qquad
\lim_{T \rightarrow \infty} \frac{\S_\beta(K,T)}{\sqrt{T}}= 0.
\end{equation}
%  $\sigma_\beta(K,T)=o(1)$ and $\S_\beta(K,T)$ tends to zero slower 
%than $T^{-\alpha}$ for all $\alpha>1/2$.  \red{In any case $\S_\beta(K,T)=\Oo\left(T^{-1/2}\right)$.}
\end{cor}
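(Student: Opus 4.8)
The plan is to combine the large-maturity Call-price expansions of Theorem~\ref{priceAsymptotics} with the model-free skew--level identity of Lemma~\ref{asymptotics}(i), the bridge being the asymptotic inversion of the Black--Scholes formula in the large-total-variance regime. By Theorem~\ref{priceAsymptotics}, in either model $C(K,T)=1-\tfrac{a_\beta}{\Gamma(1-\beta)}T^{-\beta}(1+o(1))$ as $T\uparrow\infty$, where $a_\beta:=C_1^\beta>0$ in the DRD case (the $e^{-T\psi_X(\ii/2)}$-term being exponentially negligible, since $\psi_X(\ii/2)>0$: the concave map $v\mapsto\psi_X(\ii v)$ vanishes at $v=0$ and $v=1$ by the martingale condition, hence is strictly positive on $(0,1)$) and $a_\beta:=C_2^\beta>0$ in the SL case. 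In particular $C(K,T)\uparrow S_0=1$; since $\sigma\mapsto C_{\BS}(K,T,\sigma)$ is strictly increasing in the total variance $\sigma^2T$, from $(S_0-K)^+$ up to $S_0$, the implied total variance $\sigma(K,T)^2T$ must diverge, which both verifies the hypothesis $\sqrt T\,\sigma(K,T)\to\infty$ of Lemma~\ref{asymptotics}(i) and licenses the large-$z$ Black--Scholes tail expansion used next.

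First I would invert the pricing formula. Setting $z:=\sigma_\beta(K,T)\sqrt T$, the Mills-ratio expansion (as already used in the proof of Lemma~\ref{asymptotics}) gives $1-C_{\BS}(K,T,\sigma)=\Nn(-d(z))+K\,\Nn(d(z)-z)\sim\tfrac{4\sqrt K}{\sqrt{2\pi}}\,e^{-z^2/8}/z$ as $z\uparrow\infty$. Equating this with $1-C(K,T)$, squaring, and setting $w:=z^2/4$, one obtains $w\,e^{w}\sim\tfrac{2K\Gamma(1-\beta)^2}{\pi a_\beta^2}\,T^{2\beta}$. Since $w\ge 0$, the principal branch $W_0$ inverts $w\mapsto we^w$, and a one-line bootstrap (take logarithms and use $\log w=o(w)$) gives $w\sim W_0\!\left(\tfrac{2K\Gamma(1-\beta)^2}{\pi a_\beta^2}T^{2\beta}\right)$. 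Unwinding $z^2=4w$ and $\sigma_\beta=z/\sqrt T$ yields~\eqref{Lambert} with $C_\beta=a_\beta^2>0$.

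For the skew I would insert $\sigma_\beta(K,T)$ into Lemma~\ref{asymptotics}(i). Because $T\sigma_\beta(K,T)^2\to\infty$ by~\eqref{Lambert}, the bracketed factor in~\eqref{longSkew} tends to $1$; and since $W_0\!\left(\tfrac{2K\Gamma(1-\beta)^2}{\pi a_\beta^2}T^{2\beta}\right)\sim 2\beta\log T$, one has $T\sigma_\beta(K,T)\sim 2\sqrt{2\beta\,T\log T}$, so the first term of~\eqref{longSkew} is of order $(T\log T)^{-1/2}$. For the digital term, differentiating the Lewis representation~\eqref{LewisCall} in $K$ under the integral sign and re-running the Mittag-Leffler/saddle-point analysis of Theorem~\ref{priceAsymptotics} on $\partial_K C$ gives $\QQ(S_T\ge K)=-\partial_K C(K,T)\sim\tfrac{(C_1^\beta)'(K)}{\Gamma(1-\beta)}T^{-\beta}$ (again the DRD exponential term is negligible); combined with $n(d(z))\sim\tfrac{\sqrt K}{\sqrt{2\pi}}e^{-z^2/8}$ and the value of $e^{-z^2/8}$ read off from the matching of the previous step — namely $\sqrt T\,n(d(z))\sim\tfrac14\sigma_\beta(K,T)T\,(1-C(K,T))$ — the digital term equals $\tfrac{4(C_1^\beta)'(K)}{C_1^\beta(K)}\cdot\tfrac{1}{\sigma_\beta(K,T)T}(1+o(1))$. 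Hence $\S_\beta(K,T)\sim\tfrac{2}{\sigma_\beta(K,T)T}\bigl(1-\tfrac{2(C_1^\beta)'(K)}{C_1^\beta(K)}\bigr)$, which is of order $(T\log T)^{-1/2}$ (non-cancellation discussed below); both limits in~\eqref{skewconvergences} then follow at once, since $T^{\alpha}(T\log T)^{-1/2}\to\infty$ for $\alpha>1/2$ while $(T\log T)^{-1/2}/\sqrt T\to 0$.

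The crux is that the two contributions to~\eqref{longSkew} turn out to be of the \emph{same} order $(\sigma_\beta(K,T)T)^{-1}$, so the lower bound on $|\S_\beta|$ needed for the first limit requires the precise (not merely order-of-magnitude) digital asymptotic \emph{and} the non-vanishing of $2(C_1^\beta)'(K)-C_1^\beta(K)$, equivalently that $K\mapsto C_1^\beta(K)/\sqrt K$ is not stationary. From the explicit formula $C_1^\beta(K)=\tfrac{\sqrt K}{2\pi}\int_{\R}\tfrac{K^{\ii u}\,du}{(u^2+\frac14)\psi_X(u+\ii/2)^{\beta}}$, the integral factor is (the $\sqrt K$ aside) the Fourier transform at $\log K$ of the non-trivial $L^1$ function $u\mapsto\bigl((u^2+\tfrac14)\psi_X(u+\ii/2)^{\beta}\bigr)^{-1}$ (integrable because $|\psi_X(u+\ii/2)|\to\infty$), hence genuinely non-constant by Riemann--Lebesgue; its $K$-derivative can vanish only at isolated $K$, and there the sub-leading terms of Lemma~\ref{asymptotics}(i) still decay slower than any $T^{-\alpha}$, $\alpha>1/2$, so~\eqref{skewconvergences} persists. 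The remaining points — justifying the differentiation in $K$ (handled by differentiating under the integral in~\eqref{LewisCall} before taking the $T$-limit) and the continuity of $W_0$ through the inversion — are routine given Theorem~\ref{priceAsymptotics}.
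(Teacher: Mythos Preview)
Your derivation of~\eqref{Lambert} mirrors the paper's exactly: match the large-$z$ Black--Scholes tail $1-C_{\BS}\sim\frac{4\sqrt K}{\sqrt{2\pi}\,z}e^{-z^2/8}$ to the $T^{-\beta}$ expansion of Theorem~\ref{priceAsymptotics}, set $w=z^2/4$, and invert $we^w\sim M^2T^{2\beta}$ with the principal Lambert branch. For~\eqref{skewconvergences} you also follow the paper's route through Lemma~\ref{asymptotics}(i) and a digital-option analogue of Theorem~\ref{priceAsymptotics}, but with a sharper substitution. The paper uses only the crude $W_0(x)\sim\log x$ (its equation~\eqref{asymptLev}), obtaining $e^{z^2/8}\sim MT^\beta$ and hence a digital term $\sim cM/\sqrt T$ that strictly dominates the first term $2/(T\sigma_\beta)\sim(T\log T)^{-1/2}$; you instead keep the exact identity $we^w=M^2T^{2\beta}$, which yields $e^{z^2/8}=MT^\beta/\sqrt w$ and places the digital term at the \emph{same} order $(\sigma_\beta T)^{-1}$ as the first term. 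Your computation is the more accurate one --- the paper's substitution silently drops a $\sqrt{\log T}$ factor --- and it is precisely this extra precision that forces you to confront the possible cancellation, which the paper never sees. Either version suffices for the limits~\eqref{skewconvergences} as literally stated.

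One genuine slip in your cancellation analysis: the condition $2(C_1^\beta)'(K)=C_1^\beta(K)$ is \emph{not} equivalent to stationarity of $K\mapsto C_1^\beta(K)/\sqrt K$. Differentiating the latter gives $2K(C_1^\beta)'=C_1^\beta$, a different ODE (power solutions $\sqrt K$ versus exponential solutions $e^{K/2}$). Your Riemann--Lebesgue argument therefore addresses the wrong equation. A clean repair is available: $C_1^\beta$ is analytic in~$K$ (from its integral representation), so $2(C_1^\beta)'-C_1^\beta$ has at most isolated zeros; and it cannot vanish identically, since that would force $C_1^\beta(K)=Ae^{K/2}$, whereas $C_1^\beta(K)/\sqrt K$ is the Fourier transform (in $\log K$) of an $L^1$ function and hence bounded. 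With this fix your argument for the first limit in~\eqref{skewconvergences} goes through away from that discrete set.
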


\begin{proof}
Using $k=\log K$, the first-order expansion of the Black-Scholes price is simply
\begin{equation}
C_{\BS}(K, T, \sigma )=1 - 4 \sqrt{K}\frac{\exp\left(-\frac{\sigma^2 T}{8}\right)}{\sigma \sqrt{2\pi T}}\left(1+\Oo\left(\frac{1}{T}\right)\right).
\end{equation}
By definition of implied volatility, to determine the leading order of~$\sigma$ in the~DRD (respectively~SL) model, we need to equate the above respectively to the Call option price expansions~\eqref{BTLCallAs}
 (resp.~\eqref{SLCallAs}) and then solve for~$\sigma$. We have
\begin{equation}\label{ugly}
\exp\left(-\frac{\sigma^2  T}{8}\right)\frac{ 4 \sqrt{K}}{ \sigma \sqrt{2  T\pi} } =\frac{ C^2_{\beta} } {\Gamma(1-\beta)} T^{-\beta},
\end{equation}
with $C_\beta=C_i^\beta$, $i=1,2$ as in Proposition~\ref{priceAsymptotics}, depending on the model. Setting $ z=  \sigma^2 T/4$,  $M=\sqrt{2K}\Gamma(1-\beta)/ (C_\beta \sqrt{ \pi}) $, $w=M^2 \, T^{ 2 \beta}$,
then the equality~\eqref{ugly} reads $e^{ z } z =w$.
Since $w>0$ the inversion in $z$ can be performed along the real axis so that $W_0$ is well-defined,
and~\eqref{Lambert} follows.
Since $W_0(T)\sim \log(T)$ as~$T$ tends to infinity, then
\begin{equation}\label{asymptLev}
\sigma_\beta(K, T)\sim 2 \sqrt{ \frac{\log(M^2 T^{2 \beta})}{T}},
\end{equation}
therefore $T^{\alpha}\sigma_\beta(K, T)$ converges to zero for all $\alpha <  1/2$, 
which means that the first term of~\eqref{longSkew} tends to zero slower than $T^{-\alpha}$, for all $\alpha > 1/2$, but faster than $T^{-\alpha}$.

%Now observing that $W_0(T)$ has leading order $\log(T)$ at infinity, substituting (\ref{Lambert}) in the first term of  Equation (\ref{longSkew}) yields.
%\begin{equation}\label{firstskewterm}
%\frac{2}{T  \sigma(K,T )} \sim \frac{2}{ \sqrt{ 2 \beta T %\log( T ) }} 
%\end{equation}
% and the right hand side in (\ref{firstskewterm}) decays slower than $1/T$.
 
Studying the asymptotics of the last term in~\eqref{longSkew},
similar arguments to those of Proposition~\ref{priceAsymptotics}
imply that the long-term price decay for the Digital option $I_{ \{S_T \geq K \}}$  
is identical to that of the Call option, namely $c/T^\beta$ for some $c>0$. 
Then substituting~\eqref{asymptLev}  together with $d(x)\sim x/2$,
in the second term of~\eqref{BTLCallAs}, the proof follows from the asymptotic equivalence 
\begin{equation}
 \frac{\QQ(S_T \geq K)}{\sqrt{T}n(d(\sqrt{T} \sigma_\beta(K,T)) )} \sim  c \frac{\exp\left(\frac{T \sigma_\beta(K,T)^2}{8}  \right)}{T^{\beta+1/2}    } =
 c \frac{\exp\left( \frac{\log(M^{2} T^{ 2 \beta})}{2}\right)}{ T^{\beta+1/2}  }  = \frac{ c M }{\sqrt{T}}.
\end{equation}
\end{proof}

In light of the corollary above, persistence of the skew is to be interpreted as follows: 
the skew declines slower than any power of $T^{-1}$ bigger than $1/2$ (thus in particular, slower than~$1/T$) but always faster than $T^{-1/2}$.
It is then natural to ask if these structural differences in the implied volatility of anomalous diffusion models 
also affect the small-maturity limit. 
This turns out not to be the case, at least for the DRD model, and the underlying L\'evy model asymptotics are  maintained. 
More precisely, we have the following for Digital option prices:

\begin{prop}\label{shortProp} 
If the underlying L\'evy process~$X$ is such that 
\begin{equation}\label{seriesShort}
\QQ(S^0_t \geq 1 )= c_0 + c_{\eps} t^{\eps} + o(t^{\eps}),
\end{equation}
for some $c_0, c_\eps$, as~$t$ tends to zero, with $0<\eps \leq \frac{1}{2}$, then, 
with $c_{\beta, \eps}:=\frac{\Gamma(\beta+ \eps)}{\Gamma(\beta) \Gamma(1+\eps)}$,
\begin{equation}\label{digitalBTL}
\QQ\left(S^{DRD}_t \geq 1\right) = c_0 + c_{\beta, \eps} c_{\eps} t^{\eps} + o(t^{\eps}).
\end{equation}
\end{prop}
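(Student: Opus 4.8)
The plan is to reduce the statement to an elementary Beta-integral computation via the independent (discontinuous) time-change representation of the DRD model derived in Section~\ref{model}: one has $Y^{DRD}_t=X_{L^H_t}$ with $X$ independent of $L^H$, and by Proposition~\ref{betaLaw} the random variable $L^H_t$ is distributed as $tU$ with $U\sim\Bb_{\beta,1-\beta}$. Since $r=0$ and $S_0=1$, conditioning on $L^H_t$ and using independence gives, with $G(s):=\QQ(S^0_s\geq 1)$ and recalling that $B(\beta,1-\beta)=\Gamma(\beta)\Gamma(1-\beta)$,
\begin{equation*}
\QQ\left(S^{DRD}_t\geq 1\right)=\E\left[G\!\left(L^H_t\right)\right]=\E\left[G(tU)\right]=\frac{1}{\Gamma(\beta)\Gamma(1-\beta)}\int_0^1 G(tu)\,u^{\beta-1}(1-u)^{-\beta}\,du.
\end{equation*}
Here $s\mapsto G(s)$ is bounded (being a probability) and measurable, and the singularities $u^{\beta-1}$ at $0$ and $(1-u)^{-\beta}$ at $1$ are integrable precisely because $\beta\in(0,1)$.

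Next I would insert the assumed expansion $G(s)=c_0+c_\eps s^\eps+R(s)$, with $R(s)/s^\eps\to 0$ as $s\downarrow 0$. The constant term contributes $c_0$ since the Beta density integrates to one; the power term contributes $c_\eps t^\eps$ times
\begin{equation*}
\frac{1}{\Gamma(\beta)\Gamma(1-\beta)}\int_0^1 u^{\beta+\eps-1}(1-u)^{-\beta}\,du=\frac{B(\beta+\eps,1-\beta)}{B(\beta,1-\beta)}=\frac{\Gamma(\beta+\eps)}{\Gamma(\beta)\Gamma(1+\eps)}=c_{\beta,\eps},
\end{equation*}
the integral being finite because $\beta\in(0,1)$ and $\beta+\eps>0$. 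Thus the leading two terms already produce $c_0+c_{\beta,\eps}c_\eps t^\eps$, and it only remains to show that the remainder integral is $o(t^\eps)$.

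For this last step I would use a scaling argument exploiting the fact that $t\downarrow 0$ collapses the integration range into the regime where the expansion of $G$ is valid. Given $\delta>0$, choose $\eta>0$ such that $|R(s)|\leq\delta s^\eps$ for all $0<s\leq\eta$; then for every $t\leq\eta$ one has $tu\leq\eta$ for all $u\in(0,1]$, whence
\begin{equation*}
t^{-\eps}\left|\frac{1}{\Gamma(\beta)\Gamma(1-\beta)}\int_0^1 R(tu)\,u^{\beta-1}(1-u)^{-\beta}\,du\right|\leq\frac{\delta}{\Gamma(\beta)\Gamma(1-\beta)}\int_0^1 u^{\beta+\eps-1}(1-u)^{-\beta}\,du=\delta\,c_{\beta,\eps},
\end{equation*}
and letting $\delta\downarrow 0$ gives the claim. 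Combining the three contributions yields $\QQ(S^{DRD}_t\geq 1)=c_0+c_{\beta,\eps}c_\eps t^\eps+o(t^\eps)$, as desired. There is no genuine obstacle: the only point worth flagging is precisely this scaling observation, which makes any uniform control of $G$ away from the origin (and any uniformity in $u$) unnecessary; everything else is the moment identity $\E[U^\eps]=c_{\beta,\eps}$ for the Beta distribution together with the time-change representation of the DRD model.
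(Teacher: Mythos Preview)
Your proof is correct and follows essentially the same route as the paper: condition on the independent Beta time change $L^H_t\sim t\Bb_{\beta,1-\beta}$, expand $G$, identify the constant and $t^\eps$ contributions via the Beta moment $\E[U^\eps]=c_{\beta,\eps}$, and show the remainder is $o(t^\eps)$. The only cosmetic difference is that the paper handles the remainder by dominated convergence, whereas your explicit $\delta$--$\eta$ scaling bound makes the domination transparent; the underlying observation---that for $t\leq\eta$ the whole range $tu$ lies in the region where $|R|\leq\delta s^\eps$---is the same in both arguments.
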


\begin{proof}
 Proposition~\ref{betaLaw} allows us to write 
\begin{equation}
\QQ\left(Y^{DRD}_t \geq 0\right)=\int_0^t \QQ(X_s \geq 0) \frac{s^{\beta-1}(t-s)^{-\beta}}{\Gamma(\beta) \Gamma(1-\beta)}ds.
\end{equation}
Now, notice that
%\begin{equation}\label{momBT}
$\E\left[\Bb_{\beta, 1-\beta}^a\right] = \frac{\Gamma(\beta+ a)}{\Gamma(\beta) \Gamma(1+a)}$
%\end{equation}
for all $a >0$, and that for $t$ sufficiently small,
\begin{equation}
\QQ(X_t \geq 0 ) = c_0 + c_{\eps} t^{\eps} + f(t),
\end{equation}
where $f(t) = o( t^{\eps})$ is a bounded function in a neighbourhood of the origin. The zero- and first-order terms of~\eqref{digitalBTL} are then clear, and by dominated convergence,
\begin{equation}
\lim_{t \rightarrow 0} \E[f(t \Bb_{\beta, 1-\beta})] t^{-\eps} =\int_0^1 \lim_{t \rightarrow 0} \frac{f(ts)}{t^{\eps}} \frac{s^{\beta-1}(1-s)^{-\beta}}{\Gamma(\beta) \Gamma(1-\beta)}ds  %<  \lim_{t \rightarrow 0} \E[(t \Bb)^\eps]
=0,
\end{equation}
which yields the small-o order $\eps$ of the remainder.
\end{proof}

By combining this result with Lemma~\ref{asymptotics} we have the desired analogy of the short-term ATM skew of the DRD and L\'evy volatility asset pricing models. 

\begin{cor}\label{cor:shortSkewDRD}
Under the assumptions of Proposition~\ref{shortProp} the DRD model and its underlying L\'evy model have the same short-term ATM skew rate.
\end{cor}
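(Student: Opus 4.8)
The plan is to combine the model-free short-maturity identity~\eqref{shortSkew} of Lemma~\ref{asymptotics}(ii) with the small-$T$ expansion of $\QQ(S_T\geq1)$, after first disposing of the implied-volatility term in~\eqref{shortSkew}. The point is that if $\sigma(1,T)$ stays bounded as $T\downarrow0$, then $\sigma(1,T)\sqrt T$ and $\sigma(1,T)^2T$ are both $\Oo(\sqrt T)$, so~\eqref{shortSkew} reduces to
\[
\S(1,T)=\sqrt{\frac{2\pi}{T}}\left[\tfrac12-\QQ(S_T\geq1)+\Oo\big(\sqrt T\big)\right],
\]
and the short-term ATM skew rate can be read off directly from the expansion of $\QQ(S_T\geq1)$.

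First I would check that $\sigma(1,T)$ is bounded near $T=0$ for both models. For the driving exponential-Lévy martingale $S^0$ this is the standard small-maturity ATM implied-volatility behaviour, which converges to the Gaussian coefficient when $X$ has a Brownian part and to $0$ otherwise. For the DRD model, I would use the independent time-change representation $S^{DRD}_T=S^0_{L^H_T}$ with $L^H_T$ independent of $X$ and $L^H_T\leq T$ almost surely (Proposition~\ref{betaLaw} and the discussion preceding it): since $S^0$ is a martingale, $(S^0_s-1)^+$ is a submartingale, so $s\mapsto\E[(S^0_s-1)^+]$ is non-decreasing and conditioning on $L^H_T$ yields $C^{DRD}(1,T)=\E[(S^0_{L^H_T}-1)^+]\leq\E[(S^0_T-1)^+]=C^0(1,T)$; by monotonicity of the Black-Scholes price in the volatility, $\sigma^{DRD}(1,T)\leq\sigma^0(1,T)$, hence $\sigma^{DRD}(1,T)$ is bounded near $0$ too.

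It then remains to substitute into the displayed identity above:~\eqref{seriesShort} for $S^0$ and~\eqref{digitalBTL} for $S^{DRD}$, and compare the resulting powers of $T$. If $c_0\neq\tfrac12$, both skews are of exact order $T^{-1/2}$, with leading term $(\tfrac12-c_0)\sqrt{2\pi}\,T^{-1/2}$. If $c_0=\tfrac12$ and $\eps<\tfrac12$, then $\S^0(1,T)\sim-c_\eps\sqrt{2\pi}\,T^{\eps-1/2}$ and $\S^{DRD}(1,T)\sim-c_{\beta,\eps}c_\eps\sqrt{2\pi}\,T^{\eps-1/2}$, both of order $T^{\eps-1/2}$. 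If $c_0=\tfrac12$ and $\eps=\tfrac12$, then $\S(1,T)=\Oo(1)$ for both models. In all cases the power of $T$ governing the ATM skew as $T\downarrow0$ is the same in the DRD model and in its underlying Lévy model---only the multiplicative constant changes, by the factor $c_{\beta,\eps}$ of Proposition~\ref{shortProp}---which is the assertion. I expect the genuinely delicate point to be the first step, namely establishing (or precisely citing) the small-maturity boundedness of the ATM implied volatility of the driving exponential-Lévy model; once that is in place, the comparison is an immediate consequence of Lemma~\ref{asymptotics} and Proposition~\ref{shortProp}.
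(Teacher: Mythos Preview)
Your argument is correct and follows the same route as the paper, whose one-line proof is simply ``substituting~\eqref{seriesShort} and~\eqref{digitalBTL} in~\eqref{shortSkew} yields the claim.'' Your extra care in establishing that $\sigma(1,T)$ stays bounded as $T\downarrow 0$---via the time-change comparison $C^{DRD}(1,T)\le C^0(1,T)$---fills a gap the paper leaves implicit, and your case analysis on $(c_0,\eps)$ matches the paper's subsequent discussion.
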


\begin{proof}
Substituting respectively~\eqref{seriesShort} and~\eqref{digitalBTL} in~\eqref{shortSkew} yields the claim.
\end{proof}

As extensively discussed in \cite{gerholdST}, Equation~\eqref{seriesShort} essentially encompasses all the popular L\'evy models
and features very different behaviours: for example $c_0=1$ if the process has finite variation, whereas $c_0=1/2$ and $c_\eps=1/2+ d/(\sigma\sqrt{2\pi})$, $\eps=1/2$ 
for a jump diffusion with volatility~$\sigma$ and risk-neutral drift~$d$  \citep[Theorem~1 and Lemma~2]{gerholdST}. As it emerges, the critical value for which higher order terms are needed is $c_0=\frac{1}{2}$. In the DRD model, introducing $c_{\beta, \eps}$ does not change the asymptotic analysis, 
as~$c_0$ remains the same.

%For a full account of the volatility short term skew in Le\'vy models we refer to~

\begin{cor} 
If~$X$ satisfies~\eqref{seriesShort}, then the DRD model and the underlying exponential L\'evy model $S^0$ have the same short-maturity at-the-money skews.
\end{cor}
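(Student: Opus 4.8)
The plan is to read off both short-maturity skews from the model-free expansion~\eqref{shortSkew} of Lemma~\ref{asymptotics}(ii), feeding in as the only model-specific ingredient the Digital-price asymptotics of Proposition~\ref{shortProp}. First I would verify that Lemma~\ref{asymptotics} applies to $S^{0}$ and to $S^{DRD}$ alike: each is a $\QQ$-martingale (for $S^{DRD}$ by Proposition~\ref{EMM}), each has an absolutely continuous one-dimensional law for every $T>0$ (for $S^{DRD}$ because its marginals solve fractional Cauchy problems, as in~\cite[Examples~5.2--5.4]{MeerscahertOCTRW}), and each converges to zero in distribution as $T\uparrow\infty$. Hence~\eqref{shortSkew} holds for each of them with identical Black--Scholes factors, the only model-dependent terms in the bracket being the Digital price $\QQ(S_T\geq 1)$ and the at-the-money implied volatility $\sigma(1,T)$.

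Then I would substitute. Since $X$ satisfies~\eqref{seriesShort}, Proposition~\ref{shortProp} yields the expansions $\QQ(S^{0}_T\geq 1)=c_0+c_\eps T^\eps+o(T^\eps)$ and $\QQ(S^{DRD}_T\geq 1)=c_0+c_{\beta,\eps}c_\eps T^\eps+o(T^\eps)$, with the \emph{same} constant term $c_0$ and perturbations of the \emph{same} order $T^\eps$, $0<\eps\leq\tfrac{1}{2}$, differing only through the bounded factor $c_{\beta,\eps}$. Inserting this into~\eqref{shortSkew}: when $c_0\neq\tfrac{1}{2}$ the bracket is dominated by the nonzero constant $\tfrac{1}{2}-c_0$, so $\S(1,T)\sim\left(\tfrac{1}{2}-c_0\right)\sqrt{2\pi/T}$ for \emph{both} models -- literally the same leading term; when $c_0=\tfrac{1}{2}$ (so $\eps=\tfrac{1}{2}$) the bracket collects $-c_{\beta,\eps}c_\eps\sqrt{T}$ (resp.\ $-c_\eps\sqrt{T}$), the term $-\sigma(1,T)\sqrt{T}/(2\sqrt{2\pi})$ with $\sigma(1,T)=\Oo(1)$, and a remainder $\Oo(T)$, all of order at most $\sqrt{T}$, so that $\S(1,T)=\Oo(1)$ in either case. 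In every scenario the power of $T$ governing $\S(1,T)$ is the same for the DRD model and for $S^{0}$, which is the assertion. In fact this is Corollary~\ref{cor:shortSkewDRD} rephrased -- the hypothesis here being precisely that of Proposition~\ref{shortProp} -- and substituting~\eqref{seriesShort} and~\eqref{digitalBTL} into~\eqref{shortSkew} completes the argument.

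The only delicate point is the critical regime $c_0=\tfrac{1}{2}$, where the Digital term and the implied-volatility term of~\eqref{shortSkew} compete at the same order $\sqrt{T}$. I expect it to be harmless: the $\beta$-dependent rescaling replaces $c_\eps$ by $c_{\beta,\eps}c_\eps$ and thereby alters only the constant in front of $\sqrt{T}$, not the exponent, while the short-time at-the-money implied volatility of the DRD model enters only as an $\Oo(\sqrt{T})$ correction to the bracket (its leading constant can, if desired, be identified from the small-time variance $\VV\!\left[Y^{DRD}_T\right]=\beta\kappa_2 T+\Oo(T^2)$ of Proposition~\ref{BTLmoments}). Hence the short-maturity at-the-money skew rate coincides with that of the underlying exponential L\'evy model~$S^{0}$, and the argument is otherwise a verbatim repetition of the proof of Corollary~\ref{cor:shortSkewDRD}.
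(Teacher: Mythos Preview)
Your proposal is correct and follows the paper's own approach. The paper gives no separate proof for this corollary: it is a restatement of Corollary~\ref{cor:shortSkewDRD} in light of the preceding remark that the factor~$c_{\beta,\eps}$ leaves~$c_0$ unchanged, and the proof of Corollary~\ref{cor:shortSkewDRD} is precisely the substitution of~\eqref{seriesShort} and~\eqref{digitalBTL} into~\eqref{shortSkew} that you carry out; you have simply spelled out the case distinction $c_0\neq\tfrac12$ versus $c_0=\tfrac12$ that the paper leaves implicit.
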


% Specifically 
%\begin{enumerate}
%\item If $X_t$ is a finite activity jump diffusion or has a gaussian component and it is such that e.g. the assumption of Gerhold et al then $\mathcal S(0,T)$ tends to a finite limite $c$; 
%\item If $X_t$ is of finite variation then  $\mathcal S(0,T)=O(T^{-1/2})$;
%\item If $X_t$
%\end{enumerate}
%\end{cor}

In the next section we bring together all these results and see how they lead to model calibration improvements when a persistent implied volatility skew is observed.

\section{Numerical analysis}\label{numerical}

\subsection{Volatility skew and term structure}
We visualize the volatility surfaces extracted from the DRD and SL models in Figures~\ref{surfaceBSSL} to~\ref{surfaceCGMYDRD}. 
For~$X$, we use a Brownian motion (Figures~\ref{surfaceBSSL} and~\ref{surfaceBSDRD}) 
and a CGMY process with parameters taken from~\cite{CGMY} (Figures~\ref{surfaceCGMYSL} and~\ref{surfaceCGMYDRD}), and consider moneynesses $\pm 40 \%$ ATM and maturities up to two years. 
In each figure, the smile of the anomalous diffusion is compared to that of its underlying L\'evy model~$S^0$.

First and foremost the slow decay of the volatility skew in anomalous diffusion models predicted by Proposition~\ref{priceAsymptotics} and 
Corollary~\ref{LambertLong} is apparent in all cases. 
Even though our result only predict an asymptotic rate of skew vanishing, 
our numerical tests, at least in the DRD case,  indicate that such a slower rate manifests itself already very early on. 
More research is necessary to see whether and how Proposition~\ref{priceAsymptotics} can be improved.
 
In Figures~\ref{surfaceBSSL} and~\ref{surfaceBSDRD} the volatility smile and skew 
of the anomalous diffusion model are present even if the L\'evy generating returns process is a Brownian motion. 
In other words, this confirms that introducing infinite-mean trade durations in a standard CTRW is alone  sufficient to generate a smile, consistently with Proposition~\ref{BTLmoments}. 
The smile appears rather symmetric, in line with the intuition that trade duration should have little skew impact, as it does not influence out-of-the-money prices any differently than in-the-money ones. 
This already suggests some orthogonality between~$\beta$ and the L\'evy parameters. 
In the Brownian motion case, $\beta$ is thus `overloaded', being responsible for both the smile convexity and its decay rate. This is relaxed in~\ref{surfaceCGMYSL} and Figures~\ref{surfaceCGMYDRD} by endowing~$X$ with a proper L\'evy structure (CGMY); there a short-term skew arises while the skew term structure maintains its slower flattening rate, dictated by~$\beta$. 
 
In Figures~\ref{surfaceBSSL} and~\ref{surfaceCGMYSL} we observe the repercussion on the implied volatilities of the `cross-over' phenomenon (Figure~\ref{functionals}) generated by the Mittag-Leffler and exponential types of the characteristics functions of the SL and pure L\'evy  models. The level of the SL surfaces transitions from a short-term regime where the implied volatilities are higher to a long-term one in which they are lower than those of the underlying L\'evy models (eventually tending to zero). Such a transition seems to be very sharp.

The time sections from Figures~\ref{surfaceCGMYSL} and~\ref{surfaceCGMYDRD} are shown in Figures~\ref{sectionsSLCGMY} and~\ref{sectionsDRDCGMY} and further highlight the remarks above. 
Figures~\ref{sectionsSLCGMYconv} to~\ref{sectionsDRDBSconv} highlight 
the convergence of the time sections to those of the underlying L\'evy model as~$\beta$ approaches one. 
For the SL model this convergence is from above, while it is from below for the DRD model. 
Note also that the DRD model exhibits a sharper ATM skew than the SL model.

\subsection{Calibration}

Corollary~\ref{LambertLong} and Proposition~\ref{shortProp} suggest that, from a calibration viewpoint, the models should behave as follows:
the L\'evy parameters have a short-time scale effect, unaffected when introducing~$\beta$, 
and they should hence absorb the short-time skew and smile. 
However, $\beta$ is the very component governing the long-term structure of the surface, 
where the L\'evy structure is flat and has no impact, and should thus allow to pick up the long-term skew. 
To test this we generate 3-month and 6-month volatility skews from a given L\'evy model $S^0$, 
which represent our baseline synthetic market data. In order to generate two scenarios of persistent volatility skew, while keeping the 3-month fixed, we shift the 6-month skew forward to make it coincide respectively with the 1-year and 18-month skews. 
We then cross-sectionally calibrate $S^0$, $S^{SL}$ and $S^{DRD}$ to the 3-month and 6-month skews in the baseline scenario, and the 3-month and 1-year (respectively 18-month) sections in the first and second scenarios. 

The calibration problem has been set-up as follows. 
Let $C(K,T ; \beta, \boldsymbol{\Gamma})$ be the theoretical Call prices from the SL or DRD models, where~$\boldsymbol{\Gamma}$ denotes the L\'evy parameters for~$X$, and by $C(K,T)$ the synthetic market prices obtained with the procedure described above. 
The goal is to minimize the root mean squared error (RMSE) on a set of quoted prices of~$n$ maturities and~$m$ strikes. The calibration parameters are then:

\begin{equation}\label{objective}
\argmin_{\beta, \boldsymbol{\Gamma}}\sqrt{\frac{1}{m \times n}\sum^{m,n}_{i,j=1} \left|C(K_i,T_j ; \beta, \boldsymbol{\Gamma})- C(K_i,T_j)\right|^2.}
\end{equation} 

We used $m=8$, corresponding to $\pm 20\%$ moneyness from $S_0=100$ and a step size of $\Delta K=5$. Further $n=2$ with $T_j$ chosen according to the scenarios illustrated above explained above.

To solve the optimization problem above we use the Differential Evolution global minimization algorithm. 
Its core MATLAB implementatio, from~\cite{MG}, is freely available at the GitLab repository \texttt{https://gitlab.com/NMOF/NMOF2-Code}. 
This code uses the Heston~\cite{Heston} characteristic function and its integral form for the Call price to be used  in the objective function~\eqref{objective}. The numerical quadrature method is Gauss-Legendre. 
We thus only need to replace it with our characteristic function~$\Phi_T$, 
which consists of the functions $E_{a}$ (for the SL model) and ${}_1F_1$ (for the DRD model), whose \texttt{.m}-files are available from the \texttt{Mathworks} website, 
%\footnote{\texttt{ml.m} at  \texttt{https://it.mathworks.com/matlabcentral/fileexchange/48154-the-mittag-leffler-function};  \texttt{kummer.m} at \texttt{https://it.mathworks.com/matlabcentral/fileexchange/29766-confluent-hypergeometric-function} }. 
and the characteristic exponent $\psi_X$ of the underlying driving returns L\'evy process~$X$. We make two distinct choices for~$X$: a Normal Inverse Gaussian (NIG;~\citealt{NIG}) process, with the parametrization from \cite{ContTankov}; and a Variance Gamma (VG;~\citealt{VG}) process. %The expressions for the characteristic exponents are found in such references and are easily implemented in dedicated MATLAB functions. 
We therefore calibrate a total of four anomalous diffusions models, namely: SL-VG, SL-NIG, SL-VG and DRD-VG, plus the two underlying L\'evy models VG and NIG (using the obvious modification of~\eqref{objective}).

%For all but the SL-VG model,
 The parameters of the Differential Evolution algorithm have been chosen as follows (for details see \citep{MG}). A total of $n_G=100$ number of generations (the halting condition) is produced, each with a population of $n_P=10d$ individuals spanning the search space, where $d$ is the number of parameters to be estimated. Therefore $n_P=40$ for both the DRD and SL models, and $n_P=30$ for the L\'evy models.  We used a mutation factor $F=0.5$ determining the deviation of the new population individuals from the current one, and a crossover probability $C_R=0.95$ of a mutation surviving before the selection test is applied. Further, every 10 generations we perform a Direct Search for solution improvement, each consisting of two searchers running a maximum of 100 iterations with a tolerance level of 0.001. 
This choice of parameters follows commonly applied rules of thumb which seem to work well for all models. The integral truncation for the Gauss-Legendre method is at $\pm 200$ and the integrand is evaluated at~$50$ nodes. % except for SL-VG, for which we have instead used $F=CR=0.15$. 

 The results are shown in Tables~\ref{tablecomparNoshift} to~\ref{tablecompar18mths}. In Table~\ref{tablecomparNoshift} we represent the baseline scenario:  all three models perfectly fit the synthetic L\'evy market data. As expected, the $\beta$ parameter in the SL and DRD models calibrates to one, 
and produces no improvement on the $S^0$ calibration. 
In the scenarios with a persistent long-term skew, the total error for the L\'evy model is greater than that of both the SL and DRD models with $\beta<1$. 
Comparing among them the two scenarios we observe as expected that for both models, $\beta$ is smaller in the second scenario than in the first one, owing to a steeper long-term skew in the latter case. 
This can be interpreted as an asset with a more prolonged trade duration.

Comparing errors across the models, the SL model shows a better fit in all cases. However this should not 
necessarily be interpreted as an overall superiority: the better calibration might be only due to the synthetic market data generated by a L\'evy model, and the SL distributions being closer to L\'evy. 
%Only a full empirical analysis can transparently assess the relative model performances.

\section{Conclusion}\label{conclusions}

We have proposed the use of anomalous diffusion processes in the context of option pricing, 
which allows to naturally incorporate trade durations between price moves. 
Using limits of CTRWs whose inter-arrival times distribution obeys a power law to model asset returns, 
we analysed the impact on the term structure of the returns distribution and on the corresponding implied volatility. 
More specifically, the observed volatility skew persistence on the market can be explained by a non-negligible impact of trade time randomness even in the long-term price evolution.

We analysed both cases when the price innovations are either dependent or independent from the waiting times between trades. Both models are consistent with the econometric observation that shorter duration generates sharper variations in the price revisions. Finally, we remarked  that even though the two models lead to similar large-maturity implied volatility properties, their different distributional properties produce rather different shapes of volatility surfaces. Numerical experiments confirm that for option pricing anomalous diffusions models have the potential to capture the slow decay of the volatility skew while retaining the short-term good properties of pure L\'evy models.%A careful study of empirical nature is called for in order to understand the potential and limitations of the two models in a concrete market setting.

\bibliographystyle{apalike}
%\nocite{*}
\bibliography{Bibliography}

\begin{thebibliography}{}

\bibitem[Andersen and Lipton, 2012]{AndersenLipton}
Andersen, L. and Lipton, A. (2012).
\newblock Asymptotic exponential {L}{\'e}vy processes and their volatility
  smile: survey and new results.
\newblock {\em International Journal of Theoretical and Applied Finance},
  16(1).

\bibitem[Barndorff-{N}ielsen, 1997]{NIG}
Barndorff-{N}ielsen, O.~E. (1997).
\newblock Normal inverse {G}aussian distributions and stochastic volatility
  modelling.
\newblock {\em Scandinavian Journal of Statistics}, 24:1--13.

\bibitem[Baumer et~al., 2005]{fractionalST}
Baumer, B., Meerschaert, M.~M., and Scheffler, H.~P. (2005).
\newblock Space-time fractional derivative operator.
\newblock {\em Proceedings of the American Mathematical Society},
  133:2273--2282.

\bibitem[Becker-Kern et~al., 2004]{CTRWCoupledLlimits}
Becker-Kern, P., Meerschaert, M.~M., and Scheffler, H.~P. (2004).
\newblock Limit theorems for coupled continuous time random walks.
\newblock {\em The Annals of Probability}, 32:730--756.

\bibitem[Bertoin, 1996]{BertoinBook}
Bertoin, J. (1996).
\newblock {\em L\'evy Processes}.
\newblock Cambridge University Press.

\bibitem[Bertoin, 1997]{BertoinSub}
Bertoin, J. (1997).
\newblock Subordinators: examples and applications.
\newblock In {\em Lectures on Probability Theory and Statistics}, volume 1717,
  chapter~1, pages 1--91. Springer.

\bibitem[Carr. et~al., 2001]{CGMY}
Carr., P., Geman, H., Madan, D., and Yor, M. (2001).
\newblock The fine structure of asset returns: an empirical investigation.
\newblock {\em The Journal of Business}, 75:305--332.

\bibitem[Cartea and {del-Castillo-Negrete}, 2007]{CarteDCNegrete}
Cartea, A. and {del-Castillo-Negrete}, D. (2007).
\newblock Fractional diffusion models of option prices in markets with jumps.
\newblock {\em Physica A}, 374:749--763.

\bibitem[Cartea and Meyer-Brandis, 2010]{CarteaMB}
Cartea, A. and Meyer-Brandis, T. (2010).
\newblock How duration between trades of underlying securities affects option
  prices.
\newblock {\em The Finance Review}, 14:749--785.

\bibitem[Chakrabarty and Meerschaert, 2011]{CGMYCTRW}
Chakrabarty, A. and Meerschaert, M.~M. (2011).
\newblock Tempered stable laws as random walks limits.
\newblock {\em Statistics and Probability Letters}, 81:989--997.

\bibitem[Cont and Tankov, 2004]{ContTankov}
Cont, R. and Tankov, P. (2004).
\newblock {\em Financial Modelling with Jump Processes}.
\newblock Chapman \& Hall.

\bibitem[Delbaen and Schachermayer, 1994]{FTAP94}
Delbaen, F. and Schachermayer, W. (1994).
\newblock A general version of the fundamental theorem of asset pricing.
\newblock {\em Mathematische Annalen}, pages 463--520.

\bibitem[Dellacherie and Meyer, 1978]{DellacherieMeyer}
Dellacherie, C. and Meyer, P.~A. (1978).
\newblock {\em Probability and Potential}, volume~1.
\newblock Elsevier.

\bibitem[Dufour and Engle, 2000]{EngleDufour}
Dufour, A. and Engle, R. (2000).
\newblock Time and the price impact of a trade.
\newblock {\em The Journal of Finance}, 55:2467--2498.

\bibitem[Engle, 2000]{Engle}
Engle, R. (2000).
\newblock The econometrics of ultra-high frequency data.
\newblock {\em Econometrica}, 68:1--22.

\bibitem[Engle and Russell, 1998]{ACD}
Engle, R. and Russell, J. (1998).
\newblock Autoregressive conditional duration: a new model for irregularly
  spaced transaction data.
\newblock {\em Econometrica}, 66:1127--1162.

\bibitem[Figueroa-{L\'opez} et~al., 2011]{LevyLargeSkew}
Figueroa-{L\'opez}, J.~E., Forde, M., and Jacquier, A. (2011).
\newblock The large-time smile and skew for exponential {L}\'evy models.
\newblock {\em Working paper}.

\bibitem[Fries and Torricelli, 2020]{TorricelliFries}
Fries, C. and Torricelli, L. (2020).
\newblock An analytical valuation framework for financial assets with trading
  suspensions.
\newblock {\em To appear in SIFN}.

\bibitem[Gerhold et~al., 2016]{gerholdST}
Gerhold, S., G\"ul\"um, I.~C., and Pinter, A. (2016).
\newblock Small-maturity asymptotics for at-the-money implied volatility slope
  in {L}\'evy models.
\newblock {\em Applied Mathematical Finance}, 23:135--157.

\bibitem[Gilli and Schumann, 2011]{MG}
Gilli, M. and Schumann, E. (2011).
\newblock {\em Numerical Methods and Optimization in Finance}.
\newblock Academic Press.

\bibitem[Haubold et~al., 2011]{HMS}
Haubold, H.~J., Mathai, A.~M., and Saxena, R.~K. (2011).
\newblock Mittag-leffler functions and their applications.
\newblock {\em Journal of Applied Mathematics}.

\bibitem[Heston, 1993]{Heston}
Heston, S.~L. (1993).
\newblock A closed-form solution for options with stochastic volatility with
  applications to bond and currency options.
\newblock {\em The Review of Financial Studies}, 6:327--343.

\bibitem[Jacod, 1979]{Jacod}
Jacod, J. (1979).
\newblock {\em Calcul Stochastique et Probl\`emes de Martingales}.
\newblock Lecture notes in Mathematics. Springer.

\bibitem[Jurlewicz et~al., 2012]{MeerscahertOCTRW}
Jurlewicz, A., Kern, P., Meerschaert, M.~M., and Scheffler, H.~P. (2012).
\newblock Fractional governing equations for coupled random walks.
\newblock {\em Computers and Mathematics with Applications}, 64:3021--3036.

\bibitem[Kyprianou, 2014]{Kyprianou}
Kyprianou (2014).
\newblock {\em Fluctuations of L{\'e}vy processes with Applications}.
\newblock Springer-Verlag Berlin Heidelberg.

\bibitem[Laskin, 2003]{Laskin}
Laskin, N. (2003).
\newblock The fractional {P}oisson process.
\newblock {\em Communication in Nonlinear Science}, 8:201--213.

\bibitem[Leonenko et~al., 2014]{LeonenkoCorr}
Leonenko, N.~N., Meerschaert, M.~M., Schilling, R.~L., and Sikorskii, A.
  (2014).
\newblock Correlation structure of time-changed {L}\'evy processes.
\newblock {\em Communications in Applied and Industrial Mathematics}, 6.

\bibitem[Lewis, 2001]{Lewis}
Lewis, A. (2001).
\newblock A simple option formula for general jump-diffusion and other
  exponential {L}{\'e}vy processes.
\newblock {\em OptionCity.net}.

\bibitem[Luke, 2012]{Luke}
Luke, Y.~L. (2012).
\newblock {\em The Special Functions and Their Approximations}.
\newblock Academy Press.

\bibitem[Madan et~al., 1998]{VG}
Madan, D., Carr, P., and Chang, E. (1998).
\newblock The variance gamma process and option pricing.
\newblock {\em European Finance Review}, 2:79--105.

\bibitem[Magdziarz, 2009]{SubBS}
Magdziarz, M. (2009).
\newblock {B}lack-{S}choles formula in subdiffusive regime.
\newblock {\em Journal of Statistical Physics}, 136:553--564.

\bibitem[Mainardi et~al., 2004]{vietnam}
Mainardi, F., Gorenflo, R., and Scalas, E. (2004).
\newblock A fractional generalization of the {P}oisson process.
\newblock {\em Vietnam Journal of Mathematics}, 32:53--64.

\bibitem[Mainardi et~al., 2000]{CTRWfinance2}
Mainardi, F., Raberto, M., Gorenflo, R., and Scalas, E. (2000).
\newblock Fractional calculus and continuous-time finance {II}: the waiting
  time distribution.
\newblock {\em Physica A}, 287:468--481.

\bibitem[Meerschaert et~al., 2011]{FracIS}
Meerschaert, M.~M., Nane, E., and Vellaisamy, P. (2011).
\newblock The fractional {P}oisson process and the inverse stable subordinator.
\newblock {\em Electronic Journal in Probability}, 16:1600--1620.

\bibitem[Meerschaert et~al., 2013]{FKAnomalousTS}
Meerschaert, M.~M., Nane, E., and Vellaisamy, P. (2013).
\newblock Transient anomalous sub-diffusions on bounded domains.
\newblock {\em Proceedings of the {A}merican {M}athematical {S}ociety},
  141:699--710.

\bibitem[Meerschaert and Scalas, 2004]{CTRWfinanceCoupled}
Meerschaert, M.~M. and Scalas, E. (2004).
\newblock Coupled continuous time random walks in finance.
\newblock {\em Physica A}, 41:114--118.

\bibitem[Meerschaert and Scheffler, 2004]{CTRWS}
Meerschaert, M.~M. and Scheffler, H.~P. (2004).
\newblock Limit theorems for continuous-time random walks with infinite mean
  waiting times.
\newblock {\em Journal of Applied Probability}, 41:623--638.

\bibitem[Meerschaert and Scheffler, 2008]{triangular}
Meerschaert, M.~M. and Scheffler, H.~P. (2008).
\newblock Triangular array limits for continuous random walks.
\newblock {\em Stochastic Processes and their Applications}, 118:1606--1633.

\bibitem[Meerschaert and Scheffler, 2010]{triangularErrata}
Meerschaert, M.~M. and Scheffler, H.~P. (2010).
\newblock Erratum to: `triangular array limits for continuous random walks'.
\newblock {\em Stochastic Processes and their Applications}, 120:2520--2521.

\bibitem[Meerschaert and Straka, 2014]{MSMarkov}
Meerschaert, M.~M. and Straka, P. (2014).
\newblock Semi-{M}arkov approach to continuous time random walk limit
  processes.
\newblock {\em The Annals of Probability}, 42:1699--1723.

\bibitem[Rogers and Tehranchi, 2010]{RogersFlattening}
Rogers, L. and Tehranchi, M. (2010).
\newblock Can the implied volatiliy move by parallel shifts?
\newblock {\em Finance and Stochastics}, 14:235--248.

\bibitem[Scalas et~al., 2000]{CTRWfinance}
Scalas, E., Gorenflo, R., and Mainardi, F. (2000).
\newblock Fractional calculus and continuous-time finance.
\newblock {\em Physica A}, 284:376--384.

\bibitem[Straka and Henry, 2011]{StrakaHenry}
Straka, P. and Henry, B.~I. (2011).
\newblock Lagging and leading coupled continuous time random walks, renewal
  times and their joint limits.
\newblock {\em Stochastic processes and their applications}, 121:324--336.

\bibitem[Torricelli, 2020]{TorricelliTLS}
Torricelli, L. (2020).
\newblock Trade duration risk in subdiffusive financial models.
\newblock {\em Physica A}, 541.

\end{thebibliography}

\section{Tables and Figures}
\begin{figure}[ht!]
\centering \includegraphics[height=5cm, width=13cm]{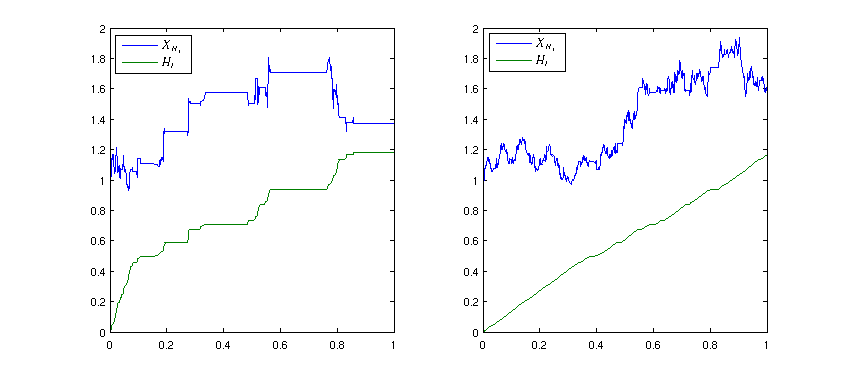}
\caption[]{Paths of $X_{H}$ (blue) and $H$ (green) in the SL model. 
$\beta=0.7$ on the left and $\beta=0.95$ on the right. 
Here,~$X$ is a driftless Brownian motion with diffusion parameter $\sigma=0.4$.}\label{plotsPaths}
\end{figure}

\begin{figure}[ht!]
\centering \includegraphics[scale=0.45]{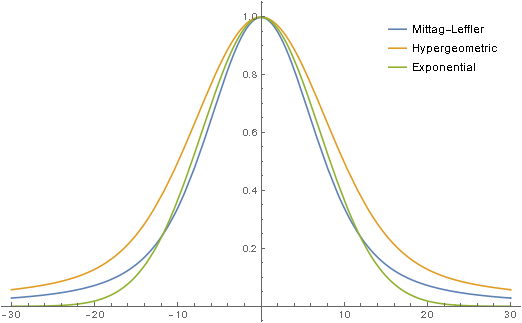}
\caption[]{Comparison of the function $\Phi_t$ for the SL and DRD model with the exponential,
with $\beta=0.75$, $t=0.5$. 
 We used the compensated geometric Brownian motion characteristic exponent 
 $\phi_X(z)= \sigma^2(z^2 - \ii z)/2$ along the line $\Im(z)=1/2$ where it is real.}\label{functionals}
\end{figure}

\begin{figure}[ht!]
\centering \includegraphics[scale=0.6]{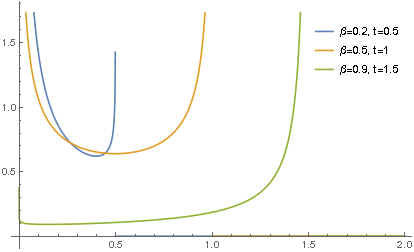}
\caption[]{Densities of the time change $L^H_t \sim t \Bb_{\beta, 1-\beta}$. 
For each $t$ the total integral at some value~$x$ has the interpretation of the probability that the time for the background L\'evy process~$X$ ran at most up to~$x$.
}\label{betas}
\end{figure}

\begin{figure}
\begin{floatrow}

\ffigbox{\includegraphics[scale=0.55]{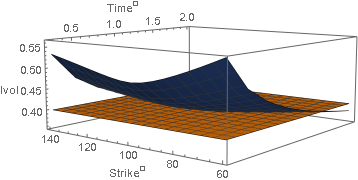}}
{\caption[]{SL implied volatility surface based on geometric Brownian motion;
$\sigma=0.4$, $\beta=0.7$.} \label{surfaceBSSL}
}
\ffigbox{\includegraphics[scale=0.55]{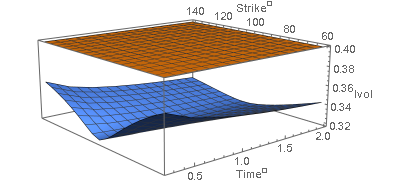}}
{\caption[]{DRD implied volatility surface from geometric Brownian motion;
$\sigma=0.4$, $\beta=0.7$.}\label{surfaceBSDRD}
}
\end{floatrow}
\end{figure} 

\begin{figure}
\begin{floatrow}

\ffigbox{\includegraphics[scale=0.55]{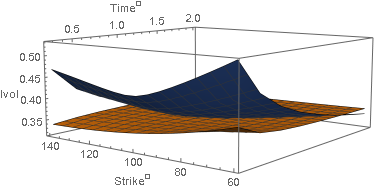}}
{\caption[]{SL implied volatility surface based on a CGMY L\'evy model, with $C=6.51, G=18.75, M=32.95, Y=0.5757, \beta=0.7$.} \label{surfaceCGMYSL}
}
\ffigbox{\includegraphics[scale=0.5]{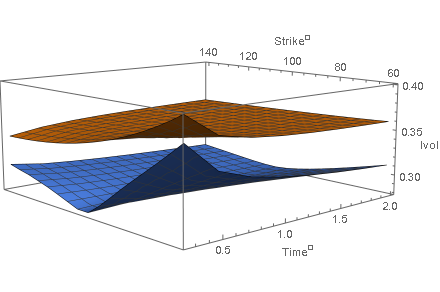}}
{\caption[]{DRD implied volatility surface based on a CGMY L\'evy model, with $C=6.51, G=18.75, M=32.95, Y=0.5757, \beta=0.7$.}\label{surfaceCGMYDRD}
}
\end{floatrow}
\end{figure} 

\begin{figure}
\begin{floatrow}

\ffigbox{\includegraphics[scale=0.4]{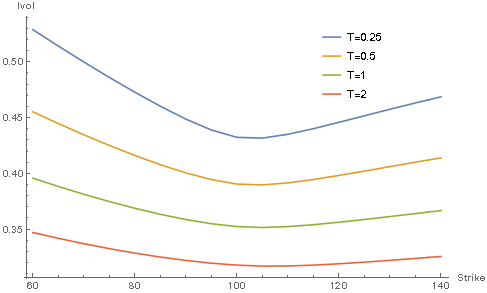}}
{\caption[]{Time sections from Figure~\ref{surfaceCGMYSL}.} \label{sectionsSLCGMY}
}
\ffigbox{\includegraphics[scale=0.4]{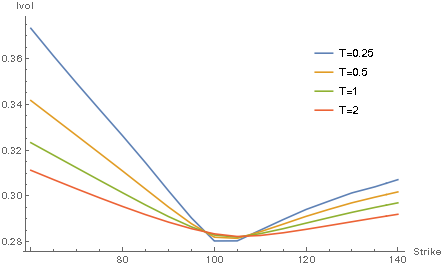}}
{\caption[]{Time sections from Figure~\ref{surfaceCGMYDRD}.}\label{sectionsDRDCGMY}
}
\end{floatrow}
\end{figure}

\begin{figure}
\begin{floatrow}

\ffigbox{\includegraphics[scale=0.45]{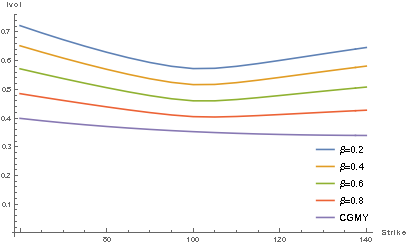}}
{\caption[]{Convergence of the SL skew to the CGMY one as $\beta$ tends to one, with $T=0.25$.} \label{sectionsSLCGMYconv}
}
\ffigbox{\includegraphics[scale=0.45]{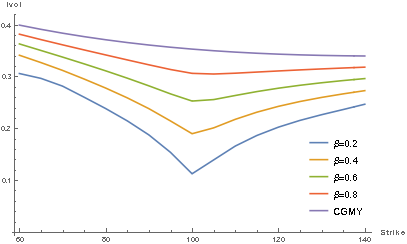}}
{\caption[]{Convergence of the DRD skew to the CGMY one as $\beta$ tends to one, with $T=0.25$.}\label{sectionsDRDCGMYconv}
}
\end{floatrow}
\end{figure}

\begin{figure}
\begin{floatrow}

\ffigbox{\includegraphics[scale=0.45]{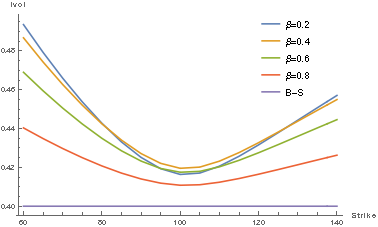}}
{\caption[]{Convergence of the SL model to the BS volatility as $\beta$ tends to one, for $T=0.75$.} \label{sectionsSLBSconv}
}
\ffigbox{\includegraphics[scale=0.45]{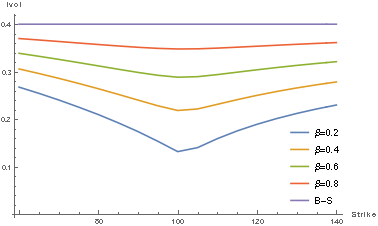}}
{\caption[]{Convergence of the DRD skew to the BS volatility as $\beta$ tends to one, for $T=0.75$.}\label{sectionsDRDBSconv}
}
\end{floatrow}
\end{figure}

\begin{table}[h!]
%[htbp]

\begin{center}
\vspace{10pt}
\begin{tabular}{| c |  c | c |  c | c |  c | c |}
 \hline

\multirow{2}{*}{Parameter}  &    \multicolumn{2}{| c | }{ L\'evy } &    \multicolumn{2}{| c | }{ SL}  &  \multicolumn{2}{| c | }{ DRD }     \\  
\cline{2-7}

  & VG & NIG  & VG & NIG & VG &  NIG     \\  \hline
  
 $\kappa$  &  0.2037  &  0.2822  &  0.2037   & 0.2828   &  0.2037 &   0.2827  \\  \hline
  
$ \sigma$   & 0.3002  &  0.1994   &  0.3002  &  0.1989 &  0.3002  &   0.1994 \\  \hline

$\theta$   & -0.2983  &  -0.1039 &   -0.2984      & -0.1036  & -0.2984 &  -0.1038 \\  \hline
 
$\beta$    &  -          &  -                   &  1.0000    &     0.9977  &   0.9999   &    0.9999  \\ \hline
   
 RMSE   &  0.0084  &   0.0191        & 0.0084   &  0.0191    & 0.0084  &   0.0191 \\  \hline
\end{tabular}
\caption{Calibration to the 1-month L\'evy smile generated by the base model $S^0$. 
The parameters $(\kappa, \sigma, \theta)=(0.2,0.3,-0.3)$ for the VG model and  
$(0.3,0.2,-0.1)$ for the NIG model.}\label{tablecomparNoshift}
\end{center}

\end{table}

\begin{table}[h!]
%[htbp]

\begin{center}
\vspace{10pt}
\begin{tabular}{| c |  c | c |  c | c |  c | c |}
 \hline
\multirow{2}{*}{Parameter}  &    \multicolumn{2}{| c | }{ L\'evy } &    \multicolumn{2}{| c | }{ SL}  &  \multicolumn{2}{| c | }{ DRD }     \\  

\cline{2-7}
  & VG & NIG  & VG & NIG & VG &  NIG     \\  \hline
  
 $\kappa$  &  1.4474  &  7.6080  &  1.5482   &  6.7626   &  0.9033  &   2.5647  \\  \hline
  
$ \sigma$   & 0.3298  &  0.2635   &  0.3218  &  0.2525 &  0.3758  &   0.3102 \\  \hline

$\theta$   & -0.1696  &  -0.0556&   -0.1739      & -0.0546  & -0.2824 &  -0.0995 \\  \hline
 
$\beta$    &  -             &  -        &  0.8669    &    0.8837   &    0.7224   & 0.6271     \\ \hline

 RMSE   &  0.3681    &   0.2061       & 0.2651    &  0.1729    & 0.2952   &   0.1791 \\  \hline

\end{tabular}
\caption{Calibration to the 1-month and 1-year shifted L\'evy smile generated 
by the base model~$S^0$. 
The parameters $(\kappa, \sigma, \theta)$ are $(0.2,0.3,-0.3)$ for VG and  
$(0.3,0.2,-0.1)$ for NIG.}\label{tablecompar12mths}
\end{center}
\end{table}

\begin{table}[h!]
%[htbp]

\begin{center}
\vspace{10pt}
\begin{tabular}{| c |  c | c |  c | c |  c | c |}
 \hline
\multirow{2}{*}{Parameter}  &    \multicolumn{2}{| c | }{ L\'evy } &    \multicolumn{2}{| c | }{ SL}  &  \multicolumn{2}{| c | }{ DRD }     \\  
\cline{2-7}
  & VG & NIG  & VG & NIG & VG &  NIG     \\  \hline
 $\kappa$  &  4.5443  &  42.5059  &  3.2555   &  30.5836   &  2.0265 &    9.5124   \\  \hline
  
$ \sigma$   & 0.3952  &  0.4022   &  0.3661  &  0.3404 &  0.4628  &   0.4011 \\  \hline

$\theta$   & -0.1354  &  -0.0785  &   -0.1571      & -0.0711  & -0.2566 &  -0.1104 \\  \hline
 
$\beta$    &  -             &  -        &  0.8305   &    0.8634   &    0.6546 & 0.5704    \\ \hline
 
 RMSE   &  0.4857  &   0.2705        & 0.3612   &  0.2307   & 0.4157  &   0.2442\\  \hline
\end{tabular}
\caption{Calibration to 1-month and 18-month shifted L\'evy smiles 
generated by~$S^0$. 
The parameters $(\kappa, \sigma, \theta)$ are $(0.2, 0.3, -0.3)$ for the VG model and  
$(0.3, 0.2, -0.1)$ for the NIG model.}\label{tablecompar18mths}
\end{center}

\end{table}

\end{document}